\documentclass{article}
\usepackage{geometry}
\usepackage[nospace,noadjust]{cite}
\usepackage{amsmath,amssymb,amsfonts,times}
\usepackage{graphicx}
\usepackage{textcomp}
\usepackage{xcolor}
\usepackage{subfigure}
\usepackage{epsfig}
\usepackage{multirow}
\usepackage{algorithm}
\usepackage{algorithmicx}
\usepackage{algpseudocode}
\usepackage{array}
\usepackage{epstopdf}
\usepackage{color}
\usepackage{diagbox}
\usepackage{bm}
\usepackage{bbm}
\usepackage{mathrsfs}
\usepackage{tcolorbox}
\usepackage{pdfsync}
\usepackage{mathtools}
\usepackage{makecell}

\usepackage{url}
\usepackage[hidelinks]{hyperref}
\usepackage{amsmath,amsthm,amssymb,amsbsy}
\usepackage{paralist}
\usepackage{xcolor}
\usepackage{color}
\usepackage{comment}
\usepackage{multirow}
\usepackage[toc, page]{appendix}

\usepackage{fancyhdr}
\usepackage{cite}
\usepackage{cleveref}
\newtheorem{lemma}{Lemma}%[section] %%    with section number.
\newtheorem{cor}{Corollary}%[section]
\newtheorem{theorem}{Theorem}

\theoremstyle{remark}

\newcommand{\R}{\mathbb{R}}

\newcommand{\C}{\mathbb{C}}

\newcommand{\e}{\begin{equation}}
\newcommand{\ee}{\end{equation}}
\newcommand{\en}{\begin{equation*}}
\newcommand{\een}{\end{equation*}}
\newcommand{\eqn}{\begin{eqnarray}}
\newcommand{\eeqn}{\end{eqnarray}}
\newcommand{\bmat}{\begin{bmatrix}}
\newcommand{\emat}{\end{bmatrix}}
\renewcommand{\Re}[1]{\operatorname{Re}\left\{#1\right\}}

\DeclareMathAlphabet\mathbfcal{OMS}{cmsy}{b}{n}
\renewcommand{\P}[1]{\operatorname{\mathbb{P}}\left(#1\right)}
\newcommand{\E}{\operatorname{\mathbb{E}}}

\newcommand{\vct}[1]{\boldsymbol{#1}}
\newcommand{\mtx}[1]{\boldsymbol{#1}}
\newcommand{\<}{\langle}
\renewcommand{\>}{\rangle}
\newcommand{\trace}{\operatorname{trace}}

\newcommand{\set}[1]{\mathbb{#1}}
\DeclareMathOperator*{\argmin}{\text{arg~min}}

\newcommand{\wh}{\widehat}

\newcommand{\wt}{\widetilde}
\newcommand{\ol}{\overline}

\newcommand{\calL}{\mathcal{L}}

\newcommand{\calN}{\mathcal{N}}

\newcommand{\calU}{\mathcal{U}}

\newcommand{\vv}{\vct{v}}

\newcommand{\vy}{\vct{y}}

\newcommand{\vtheta}{\vct{\theta}}

\newcommand{\vphi}{\vct{\phi}}
\newcommand{\vpsi}{\vct{\psi}}

\newcommand{\vxi}{\vct{\xi}}

\newcommand{\vrho}{\vct{\rho}}

\newcommand{\mA}{\mtx{A}}
\newcommand{\mB}{\mtx{B}}

\newcommand{\mH}{\mtx{H}}

\newcommand{\mP}{\mtx{P}}

\newcommand{\mU}{\mtx{U}}
\newcommand{\mV}{\mtx{V}}
\newcommand{\mW}{\mtx{W}}
\newcommand{\mX}{\mtx{X}}
\newcommand{\mY}{\mtx{Y}}
\newcommand{\mZ}{\mtx{Z}}

\newcommand{\mOmega}{\mtx{\Omega}}
\newcommand{\mPhi}{\mtx{\Phi}}

\newcommand{\mSigma}{\mtx{\Sigma}}

\newcommand{\mId}{{\bf I}}

\newcommand{\setR}{\set{R}}

\graphicspath{{./figs/}}

\newlength{\imgwidth}
\newboolean{twoColVersion}
\setboolean{twoColVersion}{false}
\newcommand{\twoCol}[2]{\ifthenelse{\boolean{twoColVersion}} {#1} {#2} }

\usepackage{mathtools}
\DeclarePairedDelimiter\ket{\lvert}{\rangle}

\title{\LARGE \bf Geometric Analysis of Variational Quantum Eigensolver}

\author{Zhen Qin\thanks{Zhen Qin is with the Michigan Institute for Computational Discovery and Engineering, Department of Electrical Engineering and Computer Science and Department of Statistics, University of Michigan, Ann Arbor, MI 48109 USA. (e-mail: zhenqin@umich.edu).}}

\begin{document}

\maketitle

\begin{abstract}
The Variational Quantum Eigensolver (VQE) is a fundamental algorithm in quantum computing, yet its theoretical understanding remains fragmented across fixed-ansatz and adaptive-circuit formulations, which are typically analyzed using different frameworks. As a result, a coherent geometric characterization of the VQE remains missing. In this paper, we establish a geometric analysis of VQE in terms of optimization landscape, initialization guarantee, and noise robustness. First, from a Riemannian geometric perspective, we study the optimization landscape via an ansatz-free product-unitary formulation over the unitary group, where each layer unitary is treated as an independent optimization variable, thereby capturing both fixed-ansatz and adaptive-circuit paradigms. For the single-unitary case, we establish explicit linear convergence guarantees for Riemannian gradient descent (RGD) under proper initialization and characterize the global landscape by proving that every critical point is either a global minimum or a strict saddle point. For the product-unitary case, we extend the convergence analysis of RGD and show that the convergence rate deteriorates polynomially with circuit depth, providing a convergence-theoretic explanation of the barren plateau (BP) phenomenon through the interplay between Hilbert-space geometry, target-unitary complexity, per-layer expressivity, and circuit depth. Second, we analyze initialization guarantees for the above convergence results. We consider a practical variational circuit composed of small-angle random Pauli rotations and prove that this initialization scheme satisfies the required geometric conditions with high probability. We further characterize how the resulting initialization error depends on the reference state, circuit depth, and random angles. Third, we study robustness under finite-shot measurements for VQE with Pauli-decomposed Hamiltonians. We show that RGD retains linear convergence up to a noise-dominated neighborhood of the global minimizer. Moreover, we extend the analysis to non-uniform measurement allocation under a fixed measurement budget and show that coefficient-adaptive allocation achieves strictly lower statistical error than uniform sampling. Together, these results provide a systematic geometric characterization of the optimization, initialization, and measurement behavior of VQE, offering new insights into its trainability and robustness.
\end{abstract}

%\begin{IEEEkeywords}
%VQE, BP, landscape analysis, initialization, noise robustness.
%\end{IEEEkeywords}

\section{Introduction}

The Variational Quantum Eigensolver (VQE) \cite{peruzzo2014variational,mcclean2016theory,kandala2017hardware,wang2019accelerated,tilly2022variational} has emerged as one of the most prominent quantum algorithms in the Noisy Intermediate-Scale Quantum (NISQ) era. By leveraging the variational principle \cite{rayleigh1870finding,weber2004essential}, VQE provides a hybrid quantum-classical framework for approximating the ground-state energy of a target Hamiltonian, thereby yielding an upper bound to its exact ground-state energy.
Since ground-state properties form the foundation for understanding the energetic and dynamical behavior of quantum many-body systems, efficient ground-state preparation constitutes a central challenge in both quantum chemistry \cite{deglmann2015application,williams2018free,heifetz2020quantum} and condensed matter physics \cite{van2020rechargeable,xu2020test,continentino2021key}. Beyond its theoretical significance, electronic-structure computation underpins a wide range of practical applications, including drug discovery \cite{cao2018potential,blunt2022perspective,mustafa2022variational}, materials science \cite{lordi2021advances,mihalikova2022cost,lee2025variational}, and chemical engineering \cite{cao2019quantum,ajagekar2022new,de2022optimization}. The increasing complexity of these applications demands highly accurate simulations of large-scale quantum systems, posing a major challenge for conventional computational methods. In particular, classical approaches often become computationally prohibitive due to the exponential growth of the underlying Hilbert space with system size. As a hybrid quantum-classical framework, VQE offers a promising alternative by employing parameterized quantum circuits to efficiently represent many-body wavefunctions, thereby opening a potential pathway toward scalable quantum simulations beyond classical tractability.

The central objective of the VQE is to approximate the ground-state energy of a target Hamiltonian $\mH$ by minimizing the expectation value of $\mH$ over a parameterized family of quantum states. Specifically, the trial state is prepared by applying a parameterized quantum circuit to a reference state $|\vphi_0\rangle$, which is successively transformed by a sequence of $N$ unitary operators $\mU_1(\theta_1), \ldots, \mU_N(\theta_N)$, where $N$ denotes the circuit depth. The resulting variational objective is formulated as
\begin{eqnarray}
\label{objective function in the introduction parameterized}
G(\theta_1,\ldots,\theta_N) = \langle \vphi_0 | \mU_N^\dagger(\theta_N)\cdots \mU_1^\dagger(\theta_1) \mH \mU_1(\theta_1)\cdots \mU_N(\theta_N) |\vphi_0\rangle.
\end{eqnarray}
While this parameterized formulation is well suited for near-term quantum hardware, optimizing the variational objective remains highly nontrivial. In particular, the optimization landscape often suffers from the barren plateau (BP) phenomenon \cite{wecker2015progress,mcclean2018barren,bittel2021training,cerezo2021cost,larocca2025barren}, where the variance of the gradients decays exponentially with the number of qubits. Consequently, the landscape becomes nearly flat over most of the parameter space, so that small perturbations of the circuit parameters induce only exponentially small changes in the objective value. As a result, optimization algorithms may fail to extract reliable descent directions from local objective evaluations, significantly degrading trainability and making the optimization process increasingly difficult and highly sensitive to the choice of ansatz. This phenomenon has been attributed to several interrelated factors, including the exponentially large dimension of the Hilbert space, which induces an intrinsic curse of dimensionality in quantum state representations \cite{volkoff2021large,diaz2023showcasing,ragone2024lie,fontana2024characterizing,cerezo2025does}; the circuit architecture and depth, where increased expressivity of deeper ansatzes often comes at the cost of more severe optimization landscapes such as barren plateaus \cite{mcclean2018barren,cerezo2021cost,larocca2022diagnosing,holmes2022connecting}; the choice of parameter initialization, which can strongly affect the optimization trajectory and lead to vanishing gradients or slow convergence \cite{grant2019initialization,patti2021entanglement,sauvage2021flip,rad2022surviving,zhang2022escaping,wang2024trainability,puig2025variational}; and the presence of hardware noise, which introduces stochastic perturbations that further degrade trainability and distort gradient information \cite{wang2021noise,stilck2021limitations,de2023limitations,schumann2024emergence,sannia2024engineered,liu2025stochastic,singkanipa2025beyond}.

In light of these challenges, there has been growing interest in adopting a geometric perspective on quantum optimization. Rather than restricting the search to a fixed parameterized ansatz, one may instead optimize directly over the unitary group, thereby removing the architectural constraints imposed by a prescribed circuit structure. This lifts the optimization problem from a finite-dimensional parameter space to the underlying manifold of unitary operators and enables the use of Riemannian optimization techniques. Specifically, the variational problem can be formulated as the following manifold optimization problem:
\begin{eqnarray}
\label{loss function of one layer introduction}
f(\mU) =  \langle\vphi_0|\mU^\dagger\mH \mU |\vphi_0 \rangle,
\end{eqnarray}
where $\mU$ lies in the unitary group $\calU(D) := \{\mU\in\C^{D\times D}:  \mU\mU^\dagger = \mU^\dagger\mU = \mId   \}$.  This formulation admits a broad class of Riemannian optimization methods, including both first-order and second-order Riemannian gradient descent (RGD) algorithms \cite{wiersema2023optimizing,magann2023randomized,malvetti2024randomized,mcmahon2025equating,pervez2025riemannian,lai2026quantum}. Although the optimization is performed directly on the unitary manifold rather than within a fixed parametrization, it remains physically implementable through an adaptive circuit construction: at each iteration, the information extracted from the current quantum state is used to identify additional gates that further reduce the objective value. By iteratively appending such gates, the circuit is progressively refined toward the target ground state. Unlike conventional VQE, where the ansatz is specified a priori, the circuit architecture here evolves dynamically during optimization. Such dynamically constructed ansatzes are commonly referred to as adaptive circuits, as exemplified by AdaptVQE \cite{grimsley2019adaptive,tang2021qubit}.

For both the fixed-ansatz objective \eqref{objective function in the introduction parameterized} and the adaptive-circuit objective \eqref{loss function of one layer introduction}, extensive geometric studies have been conducted to characterize their optimization landscapes. These analyses can be broadly organized according to the underlying optimization formulation:
\begin{itemize}

\item \textbf{Fixed-ansatz objective.}
For the parameterized objective \eqref{objective function in the introduction parameterized}, existing geometric landscape analyses mainly fall into two categories:
\begin{itemize}
    \item \textit{Probabilistic concentration.}
    Refs.~\cite{mcclean2018barren,cerezo2021cost} show that both the loss function and its gradients become exponentially concentrated around their expectations as the number of qubits $n$ increases. Consequently, the optimization landscape is exponentially flat over most of the parameter space, so that small perturbations of the model parameters $\vtheta=(\theta_1,\dots,\theta_N)$ induce only exponentially small variations in the loss and gradients. This characterization is inherently statistical: while it implies that most regions of the landscape exhibit exponentially small variations, it does not preclude the existence of localized regions with non-negligible gradients, whose geometric structure remains largely uncharacterized.

    \item \textit{Deterministic concentration.}
    Refs.~\cite{wang2021noise,diaz2023showcasing,thanasilp2023subtleties,leone2024practical} establish a stronger characterization by showing that the deviation of the loss function from its expectation is uniformly bounded by an exponentially small quantity for all parameter values $\vtheta$. As a result, the entire landscape is uniformly flat, with exponentially vanishing gradients everywhere. Such guarantees, however, typically hold only under specific settings, such as noisy circuits or particular objective structures.
\end{itemize}

\item \textbf{Adaptive-circuit objective.}
For the adaptive-circuit objective \eqref{loss function of one layer introduction}, existing geometric landscape analyses mainly fall into two categories:
\begin{itemize}
    \item \textit{Convergence analysis.}
    Refs.~\cite{magann2023randomized,pervez2025riemannian,mcmahon2025equating} analyze the convergence behavior of RGD-based methods for ground-state preparation and quantum imaginary time evolution, but do not establish explicit quantitative convergence rates to the target ground state.

    \item \textit{Global landscape analysis.}
    Ref.~\cite{malvetti2024randomized} proves almost-sure convergence of a randomized Riemannian gradient method from almost all initializations to a global minimizer by leveraging the strict-saddle landscape characterization developed in \cite{panageas2019first}. However, this guarantee is algorithm-dependent and relies essentially on randomization to escape saddle points, and therefore does not directly extend to general deterministic RGD methods.
\end{itemize}
\end{itemize}

The above discussion shows that existing geometric analyses of VQE are typically developed under different optimization formulations and therefore capture only partial aspects of the optimization landscape. As a result, a coherent and systematic understanding of VQE from a geometric perspective remains largely lacking. In this paper, we establish a rigorous geometric framework for analyzing VQE and summarize our main contributions as follows:
\begin{itemize}
\item \textbf{Landscape analysis:} We initiate our study by analyzing the VQE landscape. To characterize the optimization landscape under both adaptive-circuit and fixed-ansatz formulations, we adopt a Riemannian optimization perspective and consider the following ansatz-free formulation over the unitary group:
    \begin{eqnarray}
    \label{objective function in the introduction}
    g(\mU_1,\ldots,\mU_N) =  \langle \vphi_0 | \mU_N^\dagger \cdots \mU_1^\dagger \mH \mU_1 \cdots \mU_N | \vphi_0 \rangle,
    \end{eqnarray}
    where each $\mU_k$ is treated as an independent optimization variable in $\calU(D)$, rather than being constrained by a prescribed circuit parametrization. This formulation provides a unified framework for the two objectives discussed above, by removing parameterization-induced couplings in \eqref{objective function in the introduction parameterized} and recovering the adaptive-circuit formulation \eqref{loss function of one layer introduction} as the special case $N=1$. This perspective underlies the following two results.
    \begin{itemize}

    \item For the single-unitary objective ($N=1$), we establish an explicit local convergence guarantee for RGD, including a concrete linear convergence rate and a quantitative characterization of its basin of attraction. We further show that the objective satisfies the strict saddle property: every critical point is either a global minimizer or a strict saddle point.

    \item For the product-unitary objective ($N>1$), we prove that RGD converges linearly to a global minimizer under proper initialization. In addition, we show that the convergence rate deteriorates polynomially with the circuit depth, thereby providing a convergence-theoretic characterization of the BP phenomenon through the interplay between Hilbert-space geometry, target-unitary complexity, per-layer expressivity, and circuit depth.
    \end{itemize}

\item \textbf{Initialization guarantee:} To ensure the initialization conditions required for local convergence, we analyze a practical variational quantum circuit composed of parameterized Pauli rotations with small random angles. We show that this initialization scheme satisfies the required conditions with high probability, and further characterize how the resulting initialization error depends on the reference state, circuit depth, and random initialization angles.

\item \textbf{Noise robustness:} We analyze the robustness of VQE under finite-shot Pauli measurements arising from the Hamiltonian decomposition. We show that RGD retains linear convergence up to a noise-dominated neighborhood of the global minimizer, where the residual error is determined by the number of measurement shots allocated to each Pauli operator. Furthermore, under a fixed total measurement budget, we derive an optimal coefficient-adaptive shot allocation strategy and prove that it achieves strictly smaller statistical error than uniform measurement allocation.

\end{itemize}

{\bf Notation}: We use bold capital letters (e.g., $\mY$) to denote matrices,  bold lowercase letters (e.g., $\vy$) to denote vectors, and italic letters (e.g., $Y$ or $y$) to denote scalar quantities. Bra--ket notation is used to denote quantum states: $|\vphi\rangle$ represents a (column) state vector, and $\langle\vphi|$ its Hermitian transpose.  Matrix elements are denoted using subscripts. For example, $Y_{s_1s_2}$ denotes the element in position $(s_1, s_2)$ of the matrix $\mY$.
$\|\mX\|$ and $\|\mX\|_F$ respectively represent the spectral norm and Frobenius norm of $\mX$. The superscript $(\cdot)^\dagger$ denotes the Hermitian transpose. For a positive integer $K$, $[K]$ denotes the set $\{1,\dots, K \}$.

\section{Landscape Analysis of the Variational Quantum Eigensolver on the Unitary Group}
\label{Sec: landscape analysis of VQE}

In this section, we analyze the optimization landscape of the VQE on the unitary group, viewed as a Riemannian manifold. We investigate both local and global geometric properties of the objective function in \eqref{objective function in the introduction}. Our analysis is carried out in two stages, moving from the single-unitary formulation to the more general product-unitary formulation. First, we consider the single-unitary setting, which serves as the foundational case for the subsequent product-unitary analysis. We show that, under a sufficiently accurate initialization, Riemannian gradient descent (RGD) converges linearly to a global minimizer, with a rate determined by the spectral gap and the operator norm of the Hamiltonian. Moreover, in the absence of such initialization guarantees, we characterize the global landscape and show that all critical points are either strict saddle points or global minima. Second, we extend the analysis to the product-unitary setting. We establish that similar local convergence guarantees hold under appropriate initialization, and we further discuss the corresponding global landscape structure.

\subsection{Single-Unitary Formulation}
\label{subsec One-Layer Quantum Circuit}

As a preliminary step toward analyzing the product-unitary formulation, we first consider the reduced single-unitary setting. Specifically, by setting $N=1$ in \eqref{objective function in the introduction}, we identify
\begin{eqnarray}
\label{product of unitary matrices}
\mU := \mU_1,
\end{eqnarray}
which serves as the fundamental building block of the general product-unitary model. For convenience, we recall the corresponding single-unitary objective:
\begin{eqnarray}
\label{loss function of one layer}
f(\mU) = \trace(\mH \mU \vrho_0 \mU^\dagger) = \langle\vphi_0|\mU^\dagger\mH \mU |\vphi_0 \rangle,
\end{eqnarray}
where the unitary matrix $\mU \in \calU(D) := \{\mU\in\C^{D\times D}:  \mU\mU^\dagger = \mU^\dagger\mU = \mId   \}$, which denotes the unitary group and coincides with the Stiefel manifold $\text{St}(D,D)$ in the square case. The initial state is given by $\vrho_0 = |\vphi_0\rangle \langle \vphi_0| \in \mathbb{C}^{D \times D}$, and without loss of generality, the Hamiltonian is defined as $\mH = \sum_{k=0}^{D-1}E_k |\vpsi_k \rangle\langle \vpsi_k|$,
where $E_0 \leq E_1 \leq \cdots \leq E_{D-1}$ are the eigenvalues of $\mH$  and $\{|\vpsi_k\rangle\}_{k=0}^{D-1}$ are the corresponding
orthonormal eigenvectors. We define the spectral gap as $\Delta_1 \coloneqq E_s - E_0 > 0$, where $s = \min\{k : E_k > E_0\}$ is the index of the first eigenvalue strictly greater than $E_0$. Consequently, the ground-state eigenspace may be degenerate, i.e., $E_0 = \cdots = E_{s-1}< E_{s}\leq \cdots \leq E_{D-1}$.

We now define the optimal solution of the optimization problem~\eqref{loss function of one layer} over the unitary group:
\begin{eqnarray}
\label{loss function of one layer optimal}
\mU^\star := \argmin_{\mU\in\calU(D)}f(\mU).
\end{eqnarray}
By the variational principle \cite{rayleigh1870finding,weber2004essential}, we have $f(\mU) \geq E_0$ for all $\mU \in \calU(D)$. Moreover, since $\calU(D)$ acts transitively on the unit sphere, any pure state can be generated from $|\vphi_0\rangle$, implying that there exists $\mU^\star \in \calU(D)$ such that $\mU^\star |\vphi_0\rangle$ lies in the ground-state eigenspace of $\mH$. Consequently, the global minimum is attainable and satisfies $f(\mU^\star) = E_0$. We next study the local geometric properties of the optimization landscape in a neighborhood of the global minimizer.

\paragraph{Local Convergence Analysis.} We employ RGD~\cite{wiersema2023optimizing,magann2023randomized,malvetti2024randomized,mcmahon2025equating,pervez2025riemannian,lai2026quantum} on the unitary group to optimize~\eqref{loss function of one layer}, which naturally preserves the unitary structure throughout the optimization process. We then analyze the corresponding RGD dynamics to characterize its local convergence behavior. Specifically, the update is given by
\begin{eqnarray}
\label{iteration of RGD on the Stiefel}
\mU^{t+1} = \text{Retr}_{\mU}( - \mu \text{grad}_{\mU}f(\mU^t) ),
\end{eqnarray}
where $\mu > 0$ is the step size. The mapping $\text{Retr}_{\mU}(\vxi)\coloneqq (\mU + \vxi)\big((\mU + \vxi)^\dagger(\mU + \vxi)\big)^{-\frac{1}{2}}$ denotes
the retraction\footnote{For theoretical convenience, we adopt the polar-based retraction in the analysis. In practical quantum circuit implementations, this retraction can be equivalently realized via the exponential map \cite{lee2003smooth} or its first-order Trotter approximations \cite{trotter1959product,lloyd1996universal}; see \cite{wiersema2023optimizing,lai2026quantum} for implementation details.} onto the unitary group. To derive the Riemannian gradient $\text{grad}_{\mU}f(\mU^t)$, we first characterize the tangent space of the unitary group. The tangent space at $\mU\in\C^{D\times D}$ is given by
\begin{eqnarray}
\text{T}_{\mU}\calU(D) \coloneqq \{\mOmega\mU \in \mathbb{C}^{D\times D}:
\mOmega^\dagger = -\mOmega\},
\end{eqnarray}
i.e., the set of matrices of the form $\mOmega\mU$, where $\mOmega$ is skew-Hermitian. With this characterization, the Riemannian gradient\footnote{By introducing the matrix commutator $[\mA,\mB] := \mA\mB - \mB\mA$, \eqref{gradient of RGD on the Stiefel} can be rewritten as $\text{grad}_{\mU}f(\mU^t) = \frac{1}{2}[\mH, \mU^t \vrho_0 (\mU^t)^\dagger ]\mU^t$.} is obtained by orthogonally projecting the Euclidean gradient $\nabla_{\mU}f(\mU^t) = \mH\mU^t\vrho_0$ onto $\text{T}_{\mU}\calU(D)$, yielding
\begin{eqnarray}
\label{gradient of RGD on the Stiefel}
\text{grad}_{\mU}f(\mU^t) &\!\!\!\!=\!\!\!\!& \mH \mU^t \vrho_0 - \mU^t\text{sym}({\mU^t}^\dagger\mH\mU^t\vrho_0),
\end{eqnarray}
where $\text{sym}(\mA) \coloneqq \frac{\mA + \mA^\dagger}{2}$ denotes the Hermitian part of $\mA$.

Building upon this Riemannian gradient formulation and its associated geometric structure, we establish the following local convergence guarantee.
\begin{theorem}
\label{Theorem: convergence analysis of single circuit}
Suppose that the initialization $\mU^0$ satisfies $f(\mU^0) - f(\mU^\star)\leq \frac{\Delta_1}{2}$. where $\Delta_1 = E_s - E_0>0$ denotes the spectral gap. Let $\{\mU^t\}_{t\geq 1}$ be the sequence generated by Riemannian gradient descent applied to \eqref{loss function of one layer} with step size $\mu \leq \frac{1}{9\|\mH\|}$. Then, the iterates converge linearly to $\mU^\star$ in the sense that
\begin{eqnarray}
\label{final conclusion of single circuit main theorem}
f(\mU^{t} ) - f(\mU^\star) \leq \bigg( 1 - \frac{\Delta_1^2\mu}{32\|\mH\|} \bigg)^t\big(f(\mU^0 ) - f(\mU^\star)\big).
\end{eqnarray}
\end{theorem}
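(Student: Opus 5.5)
The plan is the standard ``sufficient decrease plus Polyak--Łojasiewicz'' argument, carried out entirely in terms of the state $\vrho = \mU\vrho_0\mU^\dagger = |\vphi\rangle\langle\vphi|$ with $|\vphi\rangle = \mU|\vphi_0\rangle$. Everything relevant depends only on $|\vphi\rangle$: $f(\mU) = \langle\vphi|\mH|\vphi\rangle$, and by the footnote $\text{grad}_{\mU}f = \frac12[\mH,\vrho]\mU$. Hence $\|\text{grad}_{\mU}f\|_F^2 = \frac14\|[\mH,\vrho]\|_F^2 = \frac12\big(\langle\vphi|\mH^2|\vphi\rangle - \langle\vphi|\mH|\vphi\rangle^2\big)$, which is half the energy variance $\mathrm{Var}_{\vphi}(\mH)$.

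\textbf{Step 1 (PL-type inequality).} Write $|\vphi\rangle = \sum_k c_k|\vpsi_k\rangle$, let $p_k = |c_k|^2$, and let $\delta := f(\mU)-E_0 = \sum_k p_k(E_k-E_0)$. Set $x_k = E_k - E_0 \ge 0$. Then
\[
\mathrm{Var} = \sum_k p_k x_k^2 - \delta^2 .
\]
Since $x_k\ge\Delta_1$ whenever $x_k>0$, we have $\sum_k p_k x_k^2 \ge \Delta_1\delta$, so $\mathrm{Var}\ge \delta(\Delta_1-\delta)$. The key is to show that $\delta$ stays at most $\Delta_1/2$ along the iterates. On that region $\mathrm{Var}\ge \frac{\Delta_1}{2}\delta$, so $\|\text{grad} f\|_F^2 \ge \frac{\Delta_1}{4}\delta$. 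If a cleaner constant is needed, one can instead use $x_k\le 2\|\mH\|$ to trade powers; the target constant $\Delta_1^2/(32\|\mH\|)$ suggests the final rate comes from combining a PL constant of order $\Delta_1$ with a decrease factor of order $\mu$, and then loosening one factor via $\Delta_1\le 2\|\mH\|$.

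\textbf{Step 2 (sufficient decrease).} I will show that for $\mu\le 1/(9\|\mH\|)$,
\[
f(\mU^{t+1}) \le f(\mU^t) - c\,\mu\,\|\text{grad} f(\mU^t)\|_F^2
\]
with an absolute constant $c$ (say $c=1/2$). To do this, write $\xi = -\mu\,\text{grad} f = \mOmega\mU$ with $\mOmega = -\frac{\mu}{2}[\mH,\vrho]$ skew-Hermitian, and note $\|\mOmega\|\le \mu\|\mH\|$. The polar retraction gives $\mU^{t+1} = (\mId+\mOmega)(\mId-\mOmega^2)^{-1/2}\mU$; this uses $(\mId+\mOmega)^\dagger(\mId+\mOmega) = \mId - \mOmega^2$, since $\mOmega^\dagger=-\mOmega$. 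So the new state is $|\vphi^+\rangle = (\mId+\mOmega)(\mId-\mOmega^2)^{-1/2}|\vphi\rangle$. I will expand $f(\mU^{t+1}) = \langle\vphi^+|\mH|\vphi^+\rangle$ to second order in $\mOmega$. The first-order term is $\langle\vphi|[\mH,\mOmega]|\vphi\rangle$, which equals $-2\mu\|\text{grad} f\|_F^2$ in appropriate normalization. I will bound the remainder, consisting of the $\mOmega^\dagger\mH\mOmega$ term and the $(\mId-\mOmega^2)^{-1/2}$ corrections, by $O(\|\mH\|)\,\|\mOmega|\vphi\rangle\|^2$ plus higher-order terms. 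Each such piece is controlled by $\mu^2\|\mH\|\,\|\text{grad} f\|_F^2$, using that $\mOmega|\vphi\rangle$ has norm comparable to $\mu\|\text{grad}f\|$. The condition $\mu\|\mH\|\le 1/9$ then absorbs these remainders into half of the linear term. I expect this step to be the main technical obstacle: bookkeeping the retraction's second-order terms with explicit constants. I will handle the inverse square root via the series $(\mId-\mOmega^2)^{-1/2} = \mId + \frac12\mOmega^2 + \cdots$, with norm bound $\|\mOmega\|\le 1/9$.

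\textbf{Step 3 (induction).} Monotone decrease from Step 2 gives $\delta_{t+1}\le\delta_t\le\delta_0\le\Delta_1/2$, so the Step 1 bound applies at every iterate. Combining the two steps,
\[
\delta_{t+1}\le \Big(1 - c\mu\frac{\Delta_1}{4}\Big)\delta_t .
\]
Since $\Delta_1/\|\mH\|\le 2$, this contraction factor is at most $1-\frac{\Delta_1^2\mu}{32\|\mH\|}$ for the constants obtained, which yields \eqref{final conclusion of single circuit main theorem} by iteration. Here ``converges to $\mU^\star$'' is understood in objective value, as the displayed bound states.
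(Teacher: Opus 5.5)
Your proof is correct, and it differs from the paper's in both steps.

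\textbf{PL step.} The paper bounds $\|\text{grad} f\|_F^2=\frac12\mathrm{Var}\ge\frac14\Delta_1^2p(1-p)$ in terms of the excited weight $p$. It then uses $1-p\ge\frac12$ and the crude bound $p\ge\delta/(2\|\mH\|)$ to reach $\frac{\Delta_1^2}{16\|\mH\|}\delta$. Your shifted-variance bound $\mathrm{Var}\ge\delta(\Delta_1-\delta)$ gives $\frac{\Delta_1}{4}\delta$ directly. This is larger by the factor $4\|\mH\|/\Delta_1\ge 2$, so your rate $1-\mu\Delta_1/8$ implies the stated one.

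\textbf{Sufficient decrease.} The paper shows the Riemannian gradient is $4\|\mH\|$-Lipschitz and invokes a generic descent lemma (retraction error at most $\frac12\|\vxi\|_F^2$, plus integration along $\mU+t\vxi$) with $L=\frac92\|\mH\|$. Its main advantage is modularity: the same template is reused for the product-unitary theorem with simultaneous layer updates. Your state-level computation is more self-contained, and it can be done exactly. Set $|\vct{b}\rangle:=(\mH-f(\mU^t)\mId)|\vphi\rangle$. Then $\mOmega=-\frac{\mu}{2}(|\vct{b}\rangle\langle\vphi|-|\vphi\rangle\langle\vct{b}|)$ has rank two, and $\mOmega^2|\vphi\rangle=-\epsilon^2|\vphi\rangle$ with $\epsilon=\frac{\mu}{2}\|\vct{b}\|$. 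Hence $|\vphi^+\rangle=(1+\epsilon^2)^{-1/2}(\mId+\mOmega)|\vphi\rangle$, and
\[
f(\mU^{t+1})-f(\mU^t)=\frac{-\mu\,\mathrm{Var}+\frac{\mu^2}{4}\langle\vct{b}|(\mH-f(\mU^t)\mId)|\vct{b}\rangle}{1+\epsilon^2}\le-\frac{\mu\,(1-\mu\|\mH\|/2)}{1+\epsilon^2}\,\mathrm{Var}.
\]
This yields $c=\frac12$ with ample room and needs no series. It also makes explicit the factor $2$ in the linear term. With the paper's normalization $\nabla f=\mH\mU\vrho_0$, the directional derivative is $2\langle\text{grad} f,\vxi\rangle$, which the paper's descent lemma (written with $\langle\text{grad} f,\vxi\rangle$) does not track. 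That is harmless for $\vxi=-\mu\,\text{grad} f$, where it only makes the bound conservative.

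\textbf{A caution if you keep the series approach.} Bound the $(\mId-\mOmega^2)^{-1/2}$ corrections using $\|\mOmega\|\le\|\mOmega\|_F=\mu\|\text{grad} f\|_F$, not merely $\|\mOmega\|\le\frac19$. With the latter alone, a term such as $\Re{\langle\vphi|\mH\mOmega^2|\vphi\rangle}$ is only first order in the gradient. It then cannot be absorbed into the linear decrease.
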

The proof is provided in {Appendix}~\ref{Proof of convergence analysis of single circuit appendix}. From \Cref{Theorem: convergence analysis of single circuit}, by choosing $\mu = \frac{1}{9\|\mH\|}$, the convergence rate is specified as $1 - \frac{\Delta_1^2}{288\|\mH\|^2}$. This result establishes the existence of a basin of attraction around the optimum, within which the iterates converge linearly. In particular, the convergence rate depends explicitly on the spectral gap $\Delta_1$ and the operator norm $\|\mH\|$, and deteriorates as $\Delta_1$ decreases relative to $\|\mH\|$. This behavior reflects the intrinsic difficulty of optimizing quantum systems with nearly degenerate ground states. Overall, this establishes the existence of a well-behaved local landscape around the global optimum, within which first-order Riemannian methods exhibit provable linear convergence. Compared with the previous convergence analyses in \cite{magann2023randomized,pervez2025riemannian,mcmahon2025equating}, our result provides an explicit local convergence characterization by establishing a concrete linear convergence rate together with an explicit basin of attraction around a global minimizer.

\paragraph{Global Landscape Structure.} Beyond the benign local region discussed above, existing work \cite{malvetti2024randomized} shows that a randomized Riemannian gradient method converges to a global minimizer from almost all initializations; however, this result is inherently algorithm-dependent and does not directly characterize the intrinsic geometry of the objective. A fundamental open question is whether the global landscape of $f(\mU)$ admits a favorable geometric structure independent of any particular optimization algorithm. In general, non-convex optimization problems \cite{zhu2018global,haeffele2019structured,chi2019nonconvex,zhu2021global} may possess local minima that are suboptimal, making it impossible to guarantee convergence to a global solution from random initialization via gradient descent. Despite this general difficulty, we show that the objective function $f(\mU)$ in \eqref{loss function of one layer} exhibits a remarkably benign global geometric structure: every critical point is either a global minimum or a strict saddle point, i.e., a point at which the Riemannian Hessian possesses at least one strictly negative eigenvalue.  We formalize this result in the following theorem.
\begin{theorem}
\label{Theorem: global optimization of one-layer quantum circuit}
Let $f(\mU) = \langle\vphi_0|\mU^\dagger\mH \mU |\vphi_0 \rangle$ defined in \eqref{loss function of one layer}, where $\mU \in \calU(D)$ and the Hamiltonian $\mH$ admits the spectral decomposition $\mH = \sum_{k=0}^{D-1}E_k|\vpsi_k\rangle\langle\vpsi_k|$. We assume that the eigenvalues satisfy $E_0 = \cdots = E_{s-1} < E_s \leq \cdots \leq E_{D-1}$, and define the spectral gap $\Delta_1 = E_s - E_0 > 0$. Then $f(\mU)$ has exactly $D$ critical points $\{\wh{\mU}_k\}_{k=0}^{D-1}$, characterized by $\wh{\mU}_k|\vphi_0\rangle = |\vpsi_k\rangle$, whose Riemannian Hessian bilinear form satisfies
\begin{eqnarray}
    \text{Hess}\,f(\wh{\mU}_k)[\mOmega\wh{\mU}_k, \mOmega\wh{\mU}_k]
    = 2\sum_{l=0}^{D-1}(E_l - E_k)|\Omega_{lk}|^2,
\end{eqnarray}
where $\mOmega $ is skew-Hermitian with matrix elements $\Omega_{jl} := \langle\vpsi_j|\mOmega|\vpsi_l\rangle$, and $\mOmega\wh{\mU}_k \in T_{\mU}\calU(D)$. In particular:
\begin{enumerate}
    \item \textbf{(Global minima.)} The $s$ critical points $\{\wh{\mU}_k\}_{k=0}^{s-1}$, each achieving $f(\wh{\mU}_k) = E_0$, are global minima of $f$. Their Hessians are positive semidefinite:
    \begin{eqnarray}
        \text{Hess}\,f(\wh{\mU}_k)[\mOmega\wh{\mU}_k, \mOmega\wh{\mU}_k]
        = 2\sum_{l=s}^{D-1}(E_l-E_0)|\Omega_{lk}|^2
        \geq 2\Delta_1\sum_{l=s}^{D-1}|\Omega_{lk}|^2 \geq 0.
    \end{eqnarray}

    \item \textbf{(Strict saddle points.)} Every critical point $\wh{\mU}_k$ with $k \geq s$ is a strict saddle point. Specifically, for any $l < s$, choosing $\mOmega$ with $\Omega_{lk} \neq 0$ yields
    \begin{eqnarray}
        \text{Hess}\,f(\wh{\mU}_k)[\mOmega\wh{\mU}_k, \mOmega\wh{\mU}_k]
        \leq 2(E_l - E_k)|\Omega_{lk}|^2 < 0,
    \end{eqnarray}
    confirming that the Hessian has at least one strictly negative eigenvalue, i.e., a strict descent direction exists.
\end{enumerate}
Consequently, $f$ satisfies the \emph{strict saddle property} on $\mathcal{U}(D)$: every critical point is either a global minimum or a strict saddle point.
\end{theorem}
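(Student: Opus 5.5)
Our plan is to reduce everything to the state $|\vphi\rangle := \mU|\vphi_0\rangle$, since $f$ depends on $\mU$ only through $\vrho := \mU\vrho_0\mU^\dagger$.

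\textbf{Step 1: critical points.} Using the commutator form of the Riemannian gradient from the footnote to \eqref{gradient of RGD on the Stiefel}, $\text{grad}_{\mU}f(\mU) = \frac{1}{2}[\mH,\vrho]\mU$. Since $\mU$ is invertible, $\mU$ is critical iff $[\mH,|\vphi\rangle\langle\vphi|]=0$. This holds iff $\mH|\vphi\rangle\langle\vphi| = |\vphi\rangle\langle\vphi|\mH$. Applying both sides to $|\vphi\rangle$ gives $\mH|\vphi\rangle = \langle\vphi|\mH|\vphi\rangle|\vphi\rangle$, so $|\vphi\rangle$ is an eigenvector of $\mH$. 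Conversely, every eigenvector gives a critical point. Choosing the eigenbasis $\{|\vpsi_k\rangle\}$, the critical points are exactly those $\wh{\mU}_k$ with $\wh{\mU}_k|\vphi_0\rangle = |\vpsi_k\rangle$, up to a phase. The statement's ``exactly $D$'' should be read modulo the stabilizer of $|\vphi_0\rangle$ and within degenerate eigenspaces, where any unit vector in the eigenspace can be relabeled as $|\vpsi_k\rangle$. I would state this identification explicitly.

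\textbf{Step 2: Hessian.} Rather than computing the Riemannian Hessian through projections and the Weingarten map, we use the fact that at a critical point it equals the second derivative along any smooth curve with the given velocity. I would take the geodesic-type curve $\gamma(\tau) = e^{\tau\mOmega}\wh{\mU}_k$, whose velocity at $\tau=0$ is $\mOmega\wh{\mU}_k$. Then
\[
f(\gamma(\tau)) = \langle\vpsi_k|e^{-\tau\mOmega}\mH e^{\tau\mOmega}|\vpsi_k\rangle,
\]
and its second derivative at $0$ is
\[
\langle\vpsi_k|(\mOmega^2\mH - 2\mOmega\mH\mOmega + \mH\mOmega^2)|\vpsi_k\rangle.
\]
Using $\mH|\vpsi_k\rangle = E_k|\vpsi_k\rangle$ and $\mOmega^\dagger = -\mOmega$, this becomes
\[
2E_k\langle\vpsi_k|\mOmega^2|\vpsi_k\rangle + 2\langle\mOmega\vpsi_k|\mH|\mOmega\vpsi_k\rangle = -2E_k\sum_l|\Omega_{lk}|^2 + 2\sum_l E_l|\Omega_{lk}|^2,
\]
which gives the claimed formula $2\sum_l(E_l-E_k)|\Omega_{lk}|^2$. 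The one point needing care is that the second derivative along a curve equals $\text{Hess}\,f[\xi,\xi]$ only at critical points, or else along geodesics. Both justifications are available here, since Step 1 establishes criticality.

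\textbf{Step 3: classification.} For $k<s$ we have $E_k = E_0$, so all terms with $l<s$ vanish and each term with $l\ge s$ is at least $2\Delta_1|\Omega_{lk}|^2$. Global minimality already follows from $f(\wh{\mU}_k) = E_0$ and the variational principle. For $k\ge s$ and any $l<s$, take $\mOmega = |\vpsi_l\rangle\langle\vpsi_k| - |\vpsi_k\rangle\langle\vpsi_l|$, which is skew-Hermitian with only $\Omega_{lk}$ and $\Omega_{kl}$ nonzero. The $l'=k$ term vanishes, so the Hessian form equals $2(E_l-E_k)|\Omega_{lk}|^2 \le -2\Delta_1 < 0$. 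For a general $\mOmega$, one needs only a single such direction to conclude strict saddleness; the inequality ``$\le$'' in the statement is then attained by this choice. Hence the strict saddle property holds.

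The main obstacle I anticipate is the rigor in Step 2, namely identifying the second derivative along $e^{\tau\mOmega}\wh{\mU}_k$ with the Riemannian Hessian. This goes through via criticality, as noted above.
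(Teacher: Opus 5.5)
Your proposal is correct. The critical-point step and the classification step match the paper. The paper phrases criticality as $\wh{\mU}_k^\dagger\mH\wh{\mU}_k$ commuting with $\vrho_0$ and applies this to $|\vphi_0\rangle$, while you conjugate first and apply $[\mH,\vrho]=0$ to $|\vphi\rangle$. Both of you witness strict saddleness with an $\mOmega$ supported only on the $(l,k)$ and $(k,l)$ entries.

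The genuine difference is Step~2. The paper imports the Hessian operator $\tfrac12\big([\mH,[\mOmega,\vrho]]+[[\mH,\mOmega],\vrho]\big)\mU$ from Lai et al., pairs it with $\mOmega\wh{\mU}_k$ under the trace, and evaluates six trace terms one by one in the eigenbasis of $\mH$. You instead differentiate $f$ twice along $e^{\tau\mOmega}\wh{\mU}_k$, which reduces to $\langle\vpsi_k|[[\mH,\mOmega],\mOmega]|\vpsi_k\rangle$ in two lines. Both justifications you give for identifying this with the Hessian form are valid: criticality, and the fact that this curve is a geodesic of the bi-invariant metric.

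Your route is self-contained and intrinsic, because the resulting quadratic form does not depend on the choice of metric or gradient normalization. This sidesteps a factor-of-$2$ mismatch in the paper between its stated gradient $\tfrac12[\mH,\vrho]\mU$ and the gradient $[\mH,\vrho]\mU$ whose covariant derivative is the imported Hessian formula. What the paper's route gives in exchange is the Hessian as a linear operator at every point, which is more than this theorem needs.

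Your two interpretive remarks are correct and sharper than the paper's treatment:
\begin{itemize}
\item The critical set consists of positive-dimensional families (phases, the stabilizer of $|\vphi_0\rangle$, and entire unit spheres of degenerate eigenspaces). So ``exactly $D$'' only makes sense after that identification.
\item The displayed strict-saddle inequality holds for your specific $\mOmega$, not for every $\mOmega$ with $\Omega_{lk}\neq 0$. This is also how the paper's proof actually uses it.
\end{itemize}
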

The proof is provided in Appendix~\ref{Proof of global optimization}. Theorem~\ref{Theorem: global optimization of one-layer quantum circuit} establishes that the global optimization landscape of $f(\mU)$ is geometrically benign in the sense that gradient-based methods cannot be permanently trapped at suboptimal critical points. However, this result does not guarantee an efficient convergence rate of RGD from random initialization. As shown in the following discussion, the Riemannian gradient can become exponentially small when the iterate is far from the ground-state subspace, significantly slowing down progress of gradient-based methods. Therefore, despite the absence of spurious local minima, efficient optimization still requires a suitably chosen initialization to avoid regions with vanishing gradients.

\paragraph{Mechanism of Vanishing Gradients.}
Despite the globally benign landscape, this does not guarantee efficient convergence from random initialization. In particular, optimization may still suffer from barren plateaus, where the gradient becomes vanishingly small even when the state remains far from optimal. To characterize the conditions under which such unfavorable regimes arise, we derive a lower bound on the Frobenius norm of the Riemannian gradient. Let $\ket{\vphi} = \mU\ket{\vphi_0} = \sum_{k=0}^{D-1} c_k \ket{\vpsi_k}$, where the coefficients satisfy $\sum_{k=0}^{D-1} |c_k|^2 = 1$. We define the excited-space weight on the orthogonal complement of the ground-state subspace as
\begin{equation}
p \;:=\; \sum_{k=s}^{D-1}|c_k|^2,
\qquad
1-p \;:=\; \sum_{k=0}^{s-1}|c_k|^2.
\end{equation}
Following the derivation in \eqref{lower bound of F in the single layer} of {Appendix}~\ref{Proof of convergence analysis of single circuit appendix},  we have
\begin{equation}\label{eq:grad-lb-p}
\|\text{grad}_{\mU} f(\mU)\|_F^2
\;\ge\; \frac{\Delta_1^2}{4}\,p(1-p)
\;=\; \frac{\Delta_1^2}{4}
\bigg(\sum_{k=s}^{D-1}|c_k|^2\bigg)
\bigg(\sum_{k=0}^{s-1}|c_k|^2\bigg).
\end{equation}
The lower bound in \eqref{eq:grad-lb-p} shows that the gradient norm is controlled by the product $p(1-p)$. In particular, it can be small when $p(1-p)$ is small, which occurs in two qualitatively distinct regimes:
\begin{itemize}
    \item \textbf{Case I (near the ground-state subspace): $p \ll 1$.}
In this regime, $\ket{\vphi}$ has almost all of its weight in the ground-energy subspace $\mathrm{span}\{\ket{\vpsi_0},\dots,\ket{\vpsi_{s-1}}\}$,
i.e., $\|(I-P_0)\ket{\vphi}\|^2 = \|\sum_{k=s}^{D-1} c_k \ket{\vpsi_k}\|^2 = \sum_{k=s}^{D-1}|c_k|^2 = p$ is small, where $P_0:=\sum_{j=0}^{s-1}|\vpsi_j\>\<\vpsi_j|$.
Consequently, the suboptimality in energy is also small (indeed, $\Delta_1 p \le f(\mU)-f(\mU^\star)$ as shown in \eqref{lower bound of p 1} of {Appendix}~\ref{Proof of convergence analysis of single circuit appendix}), so a small gradient is consistent with being
already close to the optimum and does not indicate a barren-plateau-type obstruction.
\item \textbf{Case II (almost orthogonal to the ground-state subspace): $1-p \ll 1$.}
In this regime, the overlap with the ground-energy subspace is exponentially small,
$\|P_0\ket{\vphi}\|^2 = \|\sum_{k=0}^{s-1} c_k \ket{\vpsi_k}\|^2 = \sum_{k=0}^{s-1}|c_k|^2 = 1-p \approx 0$, and the state is almost entirely supported on excited energies.
In contrast to Case~I, the suboptimality $f(\mU)-f(\mU^\star)$ may still be $O(1)$ (or larger), while
\eqref{eq:grad-lb-p} only guarantees
\begin{equation}
\|\text{grad}_{\mU} f(\mU)\|_F^2 \;\gtrsim\; \frac{\Delta_1^2}{4}(1-p),
\end{equation}
which itself becomes very small when $1-p \ll 1$.
This is precisely the regime in which one may observe extremely small gradients despite being far from optimality, consistent with the barren-plateau phenomenon.
\end{itemize}

We now relate the above discussion to the initialization condition in \Cref{Theorem: convergence analysis of single circuit}. In particular, using the relation $p \le (f(\mU^{(0)})-f(\mU^\star))/\Delta_1$ from \eqref{lower bound of p 1} in Appendix~\ref{Proof of convergence analysis of single circuit appendix}, the initialization condition $f(\mU^{(0)})-f(\mU^\star)\le \Delta_1/2$ implies that $p \le 1/2$, and hence $1-p \ge 1/2$. This guarantees that the initial point lies in a favorable region where the overlap with the ground-energy subspace is sufficiently large, thereby excluding the barren-plateau regime associated with Case II. At the same time, this condition places the iterate within the basin of attraction characterized in \Cref{Theorem: convergence analysis of single circuit}, where RGD enjoys linear convergence. Therefore, a suitable initialization simultaneously avoids the vanishing-gradient regime and ensures that the optimization proceeds in a region with provable linear convergence behavior.

\subsection{Product-Unitary Formulation}

We now turn to the product-unitary formulation and study the BP phenomenon from a convergence-theoretic perspective. Unlike existing BP analyses \cite{mcclean2018barren,cerezo2021cost,wang2021noise,diaz2023showcasing,thanasilp2023subtleties,leone2024practical}, which are primarily based on concentration arguments, our analysis is based on deterministic convergence theory and characterizes the trainability of the product-unitary objective through its optimization dynamics. To this end, we consider the product-unitary parameterization $\mU_1 \mU_{2} \cdots \mU_N$, where each $\mU_h \in \calU(D)$ lies on the unitary group $\calU(D)$. This layered structure introduces additional geometric interactions between the unitary factors, which can significantly affect the optimization landscape and convergence behavior compared to the single-unitary case. For convenience, we recall the objective function in \eqref{objective function in the introduction}:
\begin{eqnarray}
\label{loss function of N layer}
g(\mU_1,\ldots,\mU_N)
&\!\!\!\!=\!\!\!\!& \trace\big(\mH \mU_1 \cdots \mU_N \vrho_0 \mU_N^\dagger \cdots \mU_1^\dagger\big) \nonumber\\
&\!\!\!\!=\!\!\!\!& \langle \vphi_0 | \mU_N^\dagger \cdots \mU_1^\dagger \mH \mU_1 \cdots \mU_N | \vphi_0 \rangle.
\end{eqnarray}
By the variational principle, we have $f(\mU_1,\ldots,\mU_N) \geq E_0$. Moreover,  since the product-unitary representation $\mU_1 \cdots \mU_N$ is expressive enough to reach any unitary in $\calU(D)$, the ground state remains attainable within this parameterization. Therefore, the optimal solution
\begin{eqnarray}
\label{optimal solution of N layers loss}
(\mU_1^\star,\ldots,\mU_N^\star) := \argmin_{\mU_1,\ldots,\mU_N} g(\mU_1,\ldots,\mU_N)
\end{eqnarray}
achieves the optimal value
\begin{eqnarray}
\label{optimal solution of N layers in loss}
g(\mU_1^\star,\ldots,\mU_N^\star) = E_0.
\end{eqnarray}

\paragraph{Local Convergence Analysis.} Similar to the single-unitary setting, we can apply RGD to analyze the geometric properties of the product-unitary objective \eqref{loss function of N layer}. Specifically, the update for each unitary factor $h$ is given by
\begin{eqnarray}
\label{iteration of RGD on the Stiefel N layer}
\mU_h^{t+1} = \text{Retr}_{\mU_h}( - \mu \text{grad}_{\mU_h}g(\mU_1^t,\dots, \mU_N^t) )
\end{eqnarray}
where $\mu>0$ is the step size. The Riemannian gradient  for the $h$-th unitary factor is
\begin{eqnarray}
\label{Riemannian gradient on the Stiefel N layer}
\text{grad}_{\mU_h}g(\mU_1^t,\dots, \mU_N^t) =  \nabla_{\mU_h}g(\mU_1^t,\ldots,\mU_N^t) - \mU_h^t\text{sym}({\mU_h^t}^\dagger\nabla_{\mU_h}g(\mU_1^t,\ldots,\mU_N^t)) ,
\end{eqnarray}
where the Euclidean gradient is $\nabla_{\mU_h}g(\mU_1^t,\ldots,\mU_N^t) = (\mU_1^t\cdots \mU_{h-1}^t)^\dagger\mH\mU_1^t\cdots \mU_N^t\vrho_0(\mU_{h+1}^t\cdots \mU_{N}^t)^\dagger$.

Leveraging the Riemannian geometric structure of the product-unitary objective, we now present a formal convergence guarantee for the gradient descent iterates.
\begin{theorem}
\label{Theorem: convergence analysis of multi circuit}
Suppose that the initialization $(\mU_1^0,\dots,\mU_N^0)$ satisfies $g(\mU_1^0,\dots,\mU_N^0) - g(\mU_1^\star,\dots,\mU_N^\star)\leq \frac{\Delta_1}{2}$. where $\Delta_1 = E_s - E_0>0$ denotes the spectral gap. Let $\{(\mU_1^t,\dots,\mU_N^t)\}_{t\geq 1}$ be the sequence generated by Riemannian gradient descent applied to \eqref{iteration of RGD on the Stiefel N layer} with step size $\mu \leq \frac{1}{(8N+1)\|\mH\|}$. Then, the iterates converge linearly to $(\mU_1^\star,\dots,\mU_N^\star)$ in the sense that
\begin{eqnarray}
\label{descent direction of multi layer quantum circuit final conclusion1 main paper}
 g(\mU_1^{t},\ldots,\mU_N^{t})- g(\mU_1^\star,\dots,\mU_N^\star)\leq\bigg(1 -\frac{\Delta_1^2\mu}{32\|\mH\|}\bigg)^t\big(g(\mU_1^0,\dots,\mU_N^0) - g(\mU_1^\star,\dots,\mU_N^\star) \big).
\end{eqnarray}
\end{theorem}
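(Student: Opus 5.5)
We follow the single-unitary proof (Theorem~\ref{Theorem: convergence analysis of single circuit}) and establish a sufficient-decrease inequality combined with a Polyak--{\L}ojasiewicz-type lower bound on the gradient. The key reduction is that every layer sees the same effective state. Write $\mW=\mU_1\cdots\mU_N$, $\vrho=\mW\vrho_0\mW^\dagger$, $\mA_h=\mU_1\cdots\mU_{h-1}$ and $\mB_h=\mU_{h+1}\cdots\mU_N$. Then the Euclidean gradient is $\nabla_{\mU_h}g=\mA_h^\dagger\mH\mW\vrho_0\mB_h^\dagger$. Projecting onto the tangent space gives
\begin{eqnarray}
\text{grad}_{\mU_h}g=\tfrac12\mA_h^\dagger[\mH,\vrho]\mA_h\,\mU_h ,
\end{eqnarray}
which is a unitary conjugate of the single-unitary gradient at the point $\mW$. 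Hence $\|\text{grad}_{\mU_h}g\|_F=\|\text{grad}_{\mU}f(\mW)\|_F$ for every $h$, and $\sum_h\|\text{grad}_{\mU_h}g\|_F^2=N\|\text{grad}_{\mU}f(\mW)\|_F^2$. With this, the lower bound \eqref{eq:grad-lb-p} transfers directly: $\|\text{grad}_{\mU_h}g\|_F^2\ge\frac{\Delta_1^2}{4}p(1-p)$, where $p$ is the excited weight of $\mW\ket{\vphi_0}$. Using $\Delta_1p\le g-E_0$ and $p(1-p)\ge p/2$ when $p\le 1/2$, we also need $p\ge (g-E_0)/\|\mH\|$ up to constants. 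This holds because $g-E_0=\sum_{k\ge s}(E_k-E_0)|c_k|^2\le 2\|\mH\|p$. Together these give a PL inequality of the form $\|\text{grad}_{\mU}f(\mW)\|_F^2\ge\frac{\Delta_1^2}{16\|\mH\|}(g-E_0)$, which holds as long as $g-E_0\le\Delta_1/2$.

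Next we prove the descent step. Each factor updates as $\mU_h^{t+1}=\mU_h^t-\mu\,\text{grad}_{\mU_h}g+\mtx{E}_h$. By the standard second-order bound for the polar retraction, $\|\mtx{E}_h\|_F\lesssim\mu^2\|\text{grad}_{\mU_h}g\|^2$. Expanding the product $\mU_1^{t+1}\cdots\mU_N^{t+1}$ gives $\mW^{t+1}=\mW^t-\mu\sum_h\mA_h\,\text{grad}_{\mU_h}g\,\mB_h+\mtx{R}$. Here the first-order term is $-\frac{\mu}{2}N[\mH,\vrho]\mW^t$, i.e., $N$ times the single-unitary direction. The remainder $\mtx{R}$ collects cross terms of order $\mu^2$; there are at most $\binom{N}{2}$ pairwise products plus retraction errors. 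Substituting into $g=\trace(\mH\mW\vrho_0\mW^\dagger)$, the linear term yields $-\mu N\|\text{grad}_{\mU}f(\mW)\|_F^2\cdot c$. The quadratic terms are bounded by $C\mu^2N^2\|\mH\|\,\|\text{grad}_{\mU}f(\mW)\|_F^2$, using $\|\text{grad}\|\le\|\mH\|$ to absorb higher orders. The step size condition $\mu\le\frac{1}{(8N+1)\|\mH\|}$ is exactly what makes $\mu N\|\mH\|$ small enough that the quadratic terms eat at most half of the linear decrease. This leaves $g^{t+1}-g^t\le-\frac{\mu}{2}\|\text{grad}_{\mU}f(\mW^t)\|_F^2$. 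We deliberately use only a single-layer's worth of decrease here rather than $N$ times it, which matches the $N$-independent rate in the claim.

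Combining the descent step with the PL inequality gives $g^{t+1}-E_0\le(1-\frac{\Delta_1^2\mu}{32\|\mH\|})(g^t-E_0)$. Monotone decrease keeps $g^t-E_0\le\Delta_1/2$, so by induction the PL inequality continues to apply at every step, and the claimed linear rate follows.

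The main obstacle is the second-order expansion of the product of $N$ retracted factors. Bounding the cross terms and the polar-retraction error uniformly in $h$, with constants producing precisely the $8N+1$ factor, requires careful bookkeeping. I would handle it by telescoping, $\mW^{t+1}-\mW^t=\sum_h\mU_1^{t+1}\cdots\mU_{h-1}^{t+1}(\mU_h^{t+1}-\mU_h^t)\mB_h^t$. I would then compare each prefix $\mU_1^{t+1}\cdots\mU_{h-1}^{t+1}$ to $\mA_h^t$, whose difference is of size $\mu(h-1)\|\mH\|$.
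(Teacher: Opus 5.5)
Your proof is correct, but it reaches the sufficient-decrease step by a different route from the paper's.

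\textbf{The paper's route.} The paper never uses the structure of the block gradients. It proves a block Lipschitz bound, $\|\text{grad}_{\mU_h}g(\mV_1,\ldots,\mV_N)-\text{grad}_{\mU_h}g(\mU_1,\ldots,\mU_N)\|_F\le 4\sqrt{N}\|\mH\|\big(\sum_b\|\mU_b-\mV_b\|_F^2\big)^{1/2}$, by telescoping the products. It feeds this into a generic retraction descent lemma with $L=C\sqrt{N}+\frac12\|\mH\|=\frac{8N+1}{2}\|\mH\|$, which is where $8N+1$ comes from, and obtains $g^{t+1}\le g^t-\frac{\mu}{2}\sum_h\|\text{grad}_{\mU_h}g\|_F^2$. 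It then keeps only the $h=1$ term before applying the single-unitary PL bound.

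\textbf{Your route.} You use the identity $\text{grad}_{\mU_h}g=\mA_h^\dagger\mtx{G}\mB_h^\dagger$, with $\mtx{G}:=\text{grad}_{\mU}f(\mW)=\frac12[\mH,\vrho]\mW$. You then expand exactly in the product: $g(\mW+\Delta)=g(\mW)+2\,\mathrm{Re}\,\langle\vphi_0|\mW^\dagger\mH\Delta|\vphi_0\rangle+\langle\vphi_0|\Delta^\dagger\mH\Delta|\vphi_0\rangle$.

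\textbf{What each buys.} The paper's route is modular and produces the constant $8N+1$. Yours will not reproduce that constant, and it does not need to. In exchange it avoids both the Lipschitz estimate and the generic lemma, whose Term~II applies the fundamental theorem of calculus with Riemannian gradients at the non-unitary points $\mU_h+t\vxi_h$. Your route is also sharper. Since $\langle\vphi_0|\mW^\dagger\mH\mtx{G}|\vphi_0\rangle=\|\mtx{G}\|_F^2$, the first-order part $-N\mu\mtx{G}$ contributes exactly $-2N\mu\|\mtx{G}\|_F^2$, so your constant $c$ equals $2$. Even inside the paper's own inequality, your identity $\sum_h\|\text{grad}_{\mU_h}g\|_F^2=N\|\mtx{G}\|_F^2$ improves the contraction to $1-\frac{N\Delta_1^2\mu}{32\|\mH\|}$. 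For $\mu=\frac{1}{(8N+1)\|\mH\|}$ this is at most $1-\frac{\Delta_1^2}{288\|\mH\|^2}$, which is the single-unitary rate. So the depth dependence in the stated rate comes from discarding $N-1$ blocks of equal norm. Dropping that factor, as you do, is still valid for the theorem as stated.

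\textbf{One detail to fix.} The prefix discrepancy $\|\mU_1^{t+1}\cdots\mU_{h-1}^{t+1}-\mA_h^t\|$ must be bounded by $(h-1)\big(\mu\|\mtx{G}\|_F+\frac12\mu^2\|\mtx{G}\|_F^2\big)$. This holds because every block moves by $\mu\|\text{grad}_{\mU_h}g\|_F=\mu\|\mtx{G}\|_F$ plus the retraction error. It must not be bounded by $(h-1)\mu\|\mH\|$ as you wrote. With $\|\mH\|$ the cross terms are only linear in $\|\mtx{G}\|_F$. They then cannot be absorbed by a decrease proportional to $\|\mtx{G}\|_F^2$ as $\mtx{G}\to 0$, which is exactly the regime the PL argument needs.

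With the correct bound, write $\mW^{t+1}-\mW^t=-N\mu\mtx{G}+\mtx{R}$. Then $\|\mtx{R}\|_F\le\frac{N}{2}\mu^2\|\mtx{G}\|_F^2+\frac{N(N-1)}{2}\mu^2\|\mtx{G}\|_F^2\big(1+\frac12\mu\|\mtx{G}\|_F\big)$, and also $\|\mW^{t+1}-\mW^t\|_F\le N\big(\mu\|\mtx{G}\|_F+\frac12\mu^2\|\mtx{G}\|_F^2\big)$. Together these give $g^{t+1}-g^t\le-(2-3N\mu\|\mH\|)N\mu\|\mtx{G}\|_F^2$. Under $N\mu\|\mH\|<\frac18$ this is far stronger than the $-\frac{\mu}{2}\|\mtx{G}\|_F^2$ you need.
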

The proof is provided in Appendix~\ref{Proof of convergence analysis of multi circuit appendix}. We note that by choosing $\mu = \frac{1}{(8N+1)\|\mH\|}$, the convergence rate $1 - \frac{\Delta_1^2}{32(8N+1)\|\mH\|^2}$ depends only polynomially on $N$. In general nonconvex optimization problems, multi-layer parameterizations often suffer from exploding or vanishing gradients during training \cite{bengio1994learning,pascanu2013difficulty,le2015simple,arjovsky2016unitary,hanin2018neural,han2022optimal}, primarily due to unstable propagation across layers. In contrast, the Riemannian optimization on the unitary group preserves norm structure at each layer, which prevents exponential growth or decay of the effective gradient. As a result, the convergence behavior of \eqref{iteration of RGD on the Stiefel N layer} exhibits only linear dependence on the depth $N$, rather than the exponential instability typically observed in general deep nonconvex models. This phenomenon is consistent with recent results on structured tensor-factor learning \cite{qin2024guaranteed,qin2025robust,qin2025computational,qin2025scalable}. In addition, the analysis of the vanishing Riemannian gradient for \eqref{loss function of N layer} follows the same general strategy as in \Cref{subsec One-Layer Quantum Circuit}. The main difference is that we now consider $\ket{\vphi_1} = \mU_1\cdots\mU_N \ket{\vphi_0} = \sum_{k=0}^{D-1} c_k \ket{\vpsi_k}$, where the coefficients satisfy $\sum_{k=0}^{D-1} |c_k|^2 = 1$. In other words, the single-unitary parametrization is replaced by a product of unitary matrices. Consequently, the Riemannian gradient, measured in the Frobenius norm, can become exponentially small when the initialization condition is not satisfied.

The above convergence result is established under the ansatz-free product-unitary formulation, where each layer $\mU_h\in\calU(D)$ is treated as an independent optimization variable. It therefore characterizes an ideal geometric benchmark for the best achievable optimization behavior. In practical variational quantum circuits, however, each unitary layer is typically constrained by a finite-dimensional parametrization, which introduces an additional expressivity bottleneck absent in the ansatz-free setting. To make this connection explicit, we consider the layered variational ansatz
\begin{eqnarray}
\label{variational ansatz main paper}
\mU(\vtheta)=\mU_1(\theta_1)\cdots \mU_N(\theta_N),
\end{eqnarray}
where each layer $\mU_h(\theta_h)$ is a smooth parametrization of an ansatz manifold $\mathcal A\subseteq \mathrm{SU}(D)$ with parameter $\theta_h\in\mathbb R^p$. Here, $\mathrm{SU}(D):=\{\mU\in\mathbb C^{D\times D}:\mU^\dagger\mU=\mId,\ \det(\mU)=1\}$ denotes the special unitary group. We restrict the variational ansatz to $\mathrm{SU}(D)$ rather than the full unitary group $\calU(D)$, since global phase factors are physically irrelevant in quantum state evolution. The quantity $p=\dim(\mathcal A)$ denotes the intrinsic dimension of the ansatz manifold. Suppose the target unitary family lies in a structured subset
\begin{eqnarray}
\mathcal U_{\mathrm{target}}\subseteq \mathrm{SU}(D),
\end{eqnarray}
with intrinsic dimension $d_{\mathrm{target}}:=\dim(\mathcal U_{\mathrm{target}})$. Then, by the dimension-counting argument in Appendix~\ref{proof of dimension of multilayers}, a necessary condition for the variational ansatz to represent the target family is
\begin{eqnarray}
\label{necessary condition of required layers main paper}
N\ge \left\lceil \frac{d_{\mathrm{target}}}{p}\right\rceil.
\end{eqnarray}
From the above discussions, we can identify several interrelated mechanisms underlying the emergence of the BP phenomenon, which can be interpreted from a unified geometric perspective.
\begin{itemize}

\item \textbf{Depth--expressivity tradeoff.} When the circuit depth $N$ increases, the expressivity of the variational ansatz increases accordingly. However, the corresponding convergence rate $1 - \frac{\Delta_1^2}{32(8N+1)\|\mH\|^2}$ approaches $1$, indicating increasingly slow convergence and an effectively flatter optimization landscape. This is consistent with the observation in \cite{holmes2022connecting} that highly expressive ansatzes tend to exhibit flatter landscapes and reduced trainability.

\item \textbf{Exponential complexity of universal expressivity.} In the extreme case where the variational family is required to represent the full unitary group, i.e., $\mathcal U_{\mathrm{target}} = \mathrm{SU}(D)$, we have $d_{\mathrm{target}} = D^2 - 1$. For an $n$-qubit system with $D=2^n$, this yields $d_{\mathrm{target}} = 4^n - 1$. In this regime, the required circuit depth satisfies $N \ge \lceil \frac{4^n - 1}{p} \rceil$, which grows exponentially unless the per-layer expressivity $p$ scales accordingly. As a consequence, the convergence rate becomes exponentially suppressed in $n$, reflecting the intrinsic geometric complexity of the full unitary manifold.

\item \textbf{Per-layer expressivity bottleneck.} For standard hardware-efficient ansatzes, each layer consists of single-qubit rotations followed by fixed entangling gates such as CNOT or Controlled-Z. Since only the local rotations contribute trainable parameters, the intrinsic dimension per layer is $p=3n$. Consequently, $N \ge \lceil \frac{d_{\mathrm{target}}}{3n} \rceil$. This indicates that increasing single-layer expressivity can reduce required depth; however, for exponentially large target families, a polynomially expressive layer is insufficient to avoid large depth scaling.

\end{itemize}
Overall, from a convergence-theoretic perspective, these observations suggest that the BP phenomenon arises from a fundamental interplay between $(i)$ the exponential geometry of the Hilbert space, $(ii)$ the intrinsic dimension of the target unitary family, $(iii)$ the per-layer expressivity of the variational ansatz, and $(iv)$ the circuit depth required to bridge these mismatches.

\paragraph{Global Landscape Structure.} In addition to the local behavior discussed above, the global geometry of \eqref{loss function of N layer} exhibits significantly more intricate structure. In contrast to the single-unitary objective, compositions of multiple (in particular, more than two) matrix factors can give rise to higher-order saddle points and potentially spurious local minima far from the global optimum \cite{vidal2022optimization}. As a result, a global landscape characterization analogous to the single-unitary case is generally unavailable, and local first- and second-order information (gradient and Hessian) may be insufficient to reliably avoid such non-global critical points. An alternative that can locally restore favorable global structure for each subproblem is block-wise optimization, which has been explored in parametrized quantum circuit learning \cite{ding2024random,raymond2025optimizing,lai2026interpolation}. Concretely, when optimizing $g(\mU_1,\ldots,\mU_N)$ in \eqref{loss function of N layer}, we may update a single block $\mU_{h}$ while keeping the remaining factors fixed. The objective then reduces to a single-unitary form,
\begin{eqnarray}
\label{one-layer format of N layers quantum circuits}
 \ol{g}(\mU_{h}) = \langle \wt{\vphi}_0 |  \mU_{h}^\dagger \wt{\mH} \mU_{h}  | \wt{\vphi}_0 \rangle,
\end{eqnarray}
where $| \wt{\vphi}_0 \rangle = \mU_{h+1} \cdots \mU_N | \vphi_0 \rangle$ and $\wt{\mH} = \mU_{h-1}^\dagger \cdots \mU_{h}^\dagger \mH \mU_{h}\cdots \mU_{h-1}$ depend on the current values of the remaining blocks. This shows that, when all other factors are fixed, the optimization over $\mU_{h}$ takes the same form as the single-unitary variational problem.  Consequently, the global landscape characterization established in Theorem~\ref{Theorem: global optimization of one-layer quantum circuit} applies to each block subproblem: every critical point of the corresponding layerwise objective is either a global minimizer of that subproblem or a strict saddle point, implying the absence of spurious local minima at the block level. When the optimization is restricted to the scalar phase parameter $\theta_h$ in $\mU(\theta_h)$, rather than the full unitary block, \cite{wiedmann2025convergence} further shows that the Riemannian gradient vanishes only at global optima and strict saddle points, and that the latter are avoided almost surely by gradient descent under Lipschitz continuity of the gradient. Nevertheless, the existence of a global minimizer for each individual block subproblem does not imply that the composition of these minimizers yields a global minimizer of the full objective $g(\mU_1,\ldots,\mU_N)$. This is because the full objective is not jointly separable across the unitary factors: the optimal choice of $\mU_h$ depends on the configuration of all other blocks, and optimizing each factor independently in a sequential fashion may lead to a fixed point that is only locally optimal for the overall problem. Therefore, while the favorable landscape structure of each block subproblem provides a theoretical foundation for layerwise optimization, a carefully chosen initialization of the full parameter $(\mU_1,\ldots,\mU_N)$ remains essential for ensuring convergence to a globally optimal solution.

\section{Initialization Guarantee for Convergence}
\label{Sec:initialization requirement}

In the previous section, the convergence guarantee was established under the assumption that the initialization satisfies
\begin{eqnarray}
\label{initialization requirement for target}
g(\mU_1^0,\dots,\mU_N^0) - g(\mU_1^\star,\dots,\mU_N^\star)\leq \frac{\Delta_1}{2}.
\end{eqnarray}
While this condition ensures convergence, it is not a priori clear whether it is achievable in practice. This raises the question of existence: {\it{can one construct an initialization scheme that satisfies the above requirement with high probability?}}

In this section, we answer this question in the affirmative by establishing the existence of a feasible initialization scheme. Specifically, we show that such an initialization can be constructed through a practical variational quantum circuit, where a parameterized Pauli-rotation circuit with small random angles satisfies the required condition with high probability. The randomness is controlled by a variance parameter $\sigma^2$, and the guarantee holds provided that both the perturbation magnitude and the choice of the reference state $|\vphi_0\rangle$ are appropriately specified. Concretely, we consider the following initialization scheme:
\begin{eqnarray}
\label{initialization design}
\mU_h^0 := \mU_h^0(\theta_h)  = e^{-i\theta_h\mP_h} = \cos(\theta_h)\mId - i\sin(\theta_h) \mP_h,  \ h\in[N],
\end{eqnarray}
where $\mP_h$ is a Pauli matrix satisfying $\mP_h^2 = \mId$, and $\{\theta_h\}_{h=1}^N$ are i.i.d. random variables drawn from $\calN(0,\sigma^2)$.

To establish the desired initialization guarantee, it is necessary to quantify how close the initialized unitary matrices are to the optimal ones.
Specifically, we establish a high-probability upper bound on the initialization gap $g(\mU_1^0,\dots,\mU_N^0) - g(\mU_1^\star,\dots, \mU_N^\star)$ under the random initialization scheme in \eqref{initialization design}. The following theorem formalizes this bound under the initialization scheme described in \eqref{initialization design}.
\begin{theorem}
\label{Theorem: initialization design of multi circuit}
Let $g(\mU_1,\ldots,\mU_N)$ be the objective function defined in \eqref{loss function of N layer}, and consider the initialization scheme in \eqref{initialization design}. Then, with probability at least $1-\delta$ over the random initialization $\{\theta_h\}_{h=1}^N$, the initialization error satisfies
\begin{eqnarray}
\label{upper bound of gap difference main result}
g(\mU_1^0,\dots, \mU_N^0) - g(\mU_1^\star,\dots, \mU_N^\star) &\!\!\!\!=\!\!\!\!& g(\mU_1^0(\theta_1),\dots, \mU_N^0(\theta_N)) - g(\mU_1^\star,\dots, \mU_N^\star) \nonumber\\
&\!\!\!\!\leq\!\!\!\!& \bigg(\frac{1+e^{-2\sigma^2}}{2}\bigg)^N \big(\langle\vphi_0| \mH |\vphi_0 \rangle - g(\mU_1^\star,\dots, \mU_N^\star) \big)\nonumber\\
&\!\!\!\!\!\!\!\!& + \bigg(1 - \bigg(\frac{1+e^{-2\sigma^2}}{2}\bigg)^N\bigg) \big(\|\mH\| - g(\mU_1^\star,\dots, \mU_N^\star)\big)\nonumber\\
&\!\!\!\!\!\!\!\!& + 2\|\mH\|\sigma\sqrt{2N\log(2/\delta) }.
\end{eqnarray}
\end{theorem}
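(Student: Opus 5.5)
The plan is to split the initialization error into its mean over the random angles and a fluctuation around that mean:
\begin{eqnarray*}
g(\mU_1^0,\dots,\mU_N^0) - E_0 = \big(\E_{\vtheta}[g] - E_0\big) + \big(g - \E_{\vtheta}[g]\big),
\end{eqnarray*}
where $E_0 = g(\mU_1^\star,\dots,\mU_N^\star)$ by \eqref{optimal solution of N layers in loss}. The first two terms of \eqref{upper bound of gap difference main result} should come from the mean. The last term should come from Gaussian concentration of a Lipschitz function of $\vtheta=(\theta_1,\dots,\theta_N)$.

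\textbf{Mean term.} Write $g = \trace(\mH\,\mU_1^0\cdots\mU_N^0\vrho_0(\mU_N^0)^\dagger\cdots(\mU_1^0)^\dagger)$.
\begin{itemize}
\item For a single layer, expanding $e^{-i\theta\mP}\vrho e^{i\theta\mP}$ with $e^{-i\theta\mP}=\cos\theta\,\mId - i\sin\theta\,\mP$ gives
\begin{eqnarray*}
\cos^2\theta\,\vrho + \sin^2\theta\,\mP\vrho\mP - i\sin\theta\cos\theta\,[\mP,\vrho].
\end{eqnarray*}
\item For $\theta\sim\calN(0,\sigma^2)$, symmetry gives $\E[\sin\theta\cos\theta]=0$. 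Also $\E[\cos^2\theta]=\frac{1+\E\cos 2\theta}{2}=\frac{1+e^{-2\sigma^2}}{2}=:a$.
\item Since the $\theta_h$ are independent and each conjugation is linear in the state, the expectation factorizes as a composition of channels. Each channel has the form $\Phi_h(\vrho)=a\vrho+(1-a)\mP_h\vrho\mP_h$, applied from $h=N$ down to $h=1$.
\item Expanding the composition gives $\E[\vrho]=a^N\vrho_0+(1-a^N)\vrho'$. Here $\vrho'$ is a convex combination of states $\mP_{h_1}\cdots\vrho_0\cdots\mP_{h_1}$, so it is a density matrix.
\item Hence $\E[g]=a^N\langle\vphi_0|\mH|\vphi_0\rangle+(1-a^N)\trace(\mH\vrho')$. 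Using $\trace(\mH\vrho')\le\|\mH\|$ and subtracting $E_0=a^NE_0+(1-a^N)E_0$ yields exactly the first two terms.
\end{itemize}

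\textbf{Fluctuation term.} View $F(\vtheta)=g(\mU_1^0(\theta_1),\dots,\mU_N^0(\theta_N))$ as a smooth function on $\setR^N$.
\begin{itemize}
\item Differentiating layer $h$ gives
\begin{eqnarray*}
\partial_{\theta_h}F = i\,\langle\vphi_0|\mV_h^\dagger[\mP_h,\mW_h^\dagger\mH\mW_h]\mV_h|\vphi_0\rangle,
\end{eqnarray*}
with unitaries $\mW_h=\mU_1^0\cdots\mU_{h-1}^0$ and $\mV_h=\mU_h^0\cdots\mU_N^0$ (possibly with $\mP_h$ placed appropriately).
\item Since $\|\mP_h\|=1$, we get $|\partial_{\theta_h}F|\le\|[\mP_h,\cdot]\|\le 2\|\mH\|$. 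Hence $\|\nabla F\|_2\le 2\|\mH\|\sqrt N$, so $F$ is $2\|\mH\|\sqrt N$-Lipschitz.
\item Gaussian concentration for Lipschitz functions (Tsirelson--Ibragimov--Sudakov) applied to $\vtheta\sim\calN(0,\sigma^2\mId_N)$ gives
\begin{eqnarray*}
\P{|F-\E F|\ge u}\le 2\exp\!\big(-u^2/(2\sigma^2L^2)\big),
\end{eqnarray*}
with $L=2\|\mH\|\sqrt N$.
\item Setting the right-hand side to $\delta$ gives $u=2\|\mH\|\sigma\sqrt{2N\log(2/\delta)}$, which matches the last term.
\end{itemize}

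\textbf{Main obstacle.} The delicate step is the mean computation: checking carefully that all cross terms vanish after composing $N$ layers. This requires a layer-by-layer induction that conditions on the other angles, using independence and the fact that each channel $\Phi_h$ is linear and trace-preserving. It also requires that the resulting mixture $\vrho'$ is a genuine density matrix, so that the crude bound $\trace(\mH\vrho')\le\|\mH\|$ is legitimate. The Lipschitz bound is routine once the derivative is written as a commutator.
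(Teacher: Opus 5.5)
Your proposal is correct and follows the paper's route: it bounds the mean $\E_{\vtheta}[g]-E_0$ through the per-layer averages $a(\cdot)+(1-a)\mP_h(\cdot)\mP_h$ with $a=\frac{1+e^{-2\sigma^2}}{2}$, then adds a Gaussian-concentration term for a function of $\vtheta$ with Lipschitz constant $2\sqrt{N}\|\mH\|$. The differences are only cosmetic. You work in the Schr\"odinger picture and bound the mixed part through the density matrix $\vrho'$, which gives $\trace(\mH\vrho')-E_0\le\|\mH\|-E_0$ directly. The paper instead propagates $\mH_k=\alpha\mH_{k-1}+\beta\mP_k\mH_{k-1}\mP_k$ in the Heisenberg picture with an operator-norm induction, and it gets the Lipschitz constant from $\|\mU_k(\theta_k)-\mU_k(\theta_k')\|\le|\theta_k-\theta_k'|$ rather than from your bound $|\partial_{\theta_h}F|\le 2\|\mH\|$.
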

The proof is provided in {Appendix}~\ref{Proof of initialization requirement appendix}. From Theorem~\ref{Theorem: initialization design of multi circuit}, it follows that, in order to guarantee $g(\mU_1^0,\dots,\mU_N^0) - g(\mU_1^\star,\dots,\mU_N^\star)\leq \frac{\Delta_1}{2}$, each of the three terms in the upper bound \eqref{upper bound of gap difference main result} must be controlled appropriately. Specifically, we can require that
\begin{eqnarray}
\label{requirement of three terms}
\begin{cases}
\big(\frac{1+e^{-2\sigma^2}}{2}\big)^N (\langle\vphi_0| \mH |\vphi_0 \rangle - g(\mU_1^\star,\dots, \mU_N^\star) ) \leq c_1\Delta_1,\\
\big(1 - \big(\frac{1+e^{-2\sigma^2}}{2}\big)^N\big) (\|\mH\| - g(\mU_1^\star,\dots, \mU_N^\star)) \leq c_2\Delta_1,\\
2\|\mH\|\sigma\sqrt{2N\log(2/\delta) } \leq c_3\Delta_1,
\end{cases}
\end{eqnarray}
where $c_1$, $c_2$, and $c_3$ are positive constants satisfying $c_1+c_2+c_3 =\frac{1}{2}$. To simplify the analysis, we consider the regime where $\sigma^2$ is small.
In this case, we have the first-order approximation $\frac{1+e^{-2\sigma^2}}{2} \approx e^{-\sigma^2}$, and consequently $\big(\frac{1+e^{-2\sigma^2}}{2}\big)^N \approx e^{-N\sigma^2}$. Under this approximation, the three conditions in \eqref{requirement of three terms}
yield the following requirement on $\sigma^2$:
\begin{eqnarray}
\label{requirement of sigma}
\sigma^2\in &\!\!\!\!\!\!\!\!&\bigg[\frac{1}{N}\log\frac{\langle\vphi_0|\mH|\vphi_0\rangle - g(\mU_1^\star,\dots,\mU_N^\star)}
{c_1\Delta_1}, \nonumber\\
&\!\!\!\!\!\!\!\!& \min\bigg\{\frac{1}{N}\log\frac{\|\mH\| - g(\mU_1^\star,\dots, \mU_N^\star)}{\|\mH\| - g(\mU_1^\star,\dots, \mU_N^\star) - c_2\Delta_1},\
\frac{c_3^2\Delta_1^2}{8N\|\mH\|^2\log(2/\delta)}
\bigg\} \bigg].
\end{eqnarray}
Based on \eqref{requirement of sigma}, we make the following observations regarding the initialization strategy.
\begin{itemize}
\item \textbf{Feasibility condition.} The feasibility of the range \eqref{requirement of sigma} requires the lower bound not to exceed both upper bounds, which imposes conditions on both the reference state $|\vphi_0\rangle$ and the confidence parameter $\delta$. Specifically, the initial energy gap $\langle\vphi_0|\mH|\vphi_0\rangle - g(\mU_1^\star,\dots,\mU_N^\star)$ must be sufficiently small, meaning that $|\vphi_0\rangle$ must already be a reasonable approximation to the ground state in terms of energy. At the same time, the confidence parameter $\delta$ must not be chosen too small, as an overly stringent confidence requirement shrinks the upper bound and may render the feasible range empty. Together, these conditions underscore that the existence of a valid initialization variance $\sigma^2$ depends jointly on the quality of the reference state and the choice of $\delta$.

\item \textbf{Dependence on initialization quality.} The lower bound on $\sigma^2$ is determined by the quality of the reference state $|\vphi_0\rangle$: a smaller initial energy gap $\langle\vphi_0|\mH|\vphi_0\rangle - g(\mU_1^\star,\dots,\mU_N^\star)$ allows a smaller $\sigma^2$, meaning the unitary matrices can be initialized closer to the identity. This is consistent with the intuition that deeper circuits require initialization closer to the identity, a principle also underlying the identity block initialization strategy of \cite{grant2019initialization}.

\item \textbf{Scaling behavior.} Both bounds scale as $O(1/N)$, implying $\sigma = O(1/\sqrt{N})$, so that as the number of unitary factors $N$ increases, the perturbation must be correspondingly reduced.
\end{itemize}
Together, these observations highlight that a good initialization requires both the reference state $|\vphi_0\rangle$ to be close to the ground state eigenvectors $\{|\vpsi_k\rangle\}_{k=0}^{s-1}$ and the initial unitary matrices to be sufficiently close to the identity.

We conclude this section with a remark on the practical limitations of the proposed initialization scheme. The valid range of $\sigma^2$ depends on problem-specific quantities such as the spectral gap $\Delta_1$, the operator norm $\|\mH\|$, and the ground-state energy $g(\mU_1^\star,\dots,\mU_N^\star)$, which are generally not accessible in practice. In addition, the choice of Pauli operators $\{\mP_h\}$ and the reference state $|\vphi_0\rangle$ may not be directly implementable without prior knowledge of the underlying problem structure. Despite these limitations, the purpose of this section is to establish a theoretical guarantee of feasibility: there exists an initialization scheme that satisfies $g(\mU_1^0,\dots,\mU_N^0) - g(\mU_1^\star,\dots,\mU_N^\star)\leq \frac{\Delta_1}{2}$ with high probability. This ensures that the optimization trajectory of Riemannian gradient descent remains in a locally well-behaved region, where the linear convergence result derived in the previous section applies.

\section{Noise Robustness of Variational Quantum Eigensolver}

\subsection{Uniform Measurement Allocation}

In the preceding sections, the analysis has been carried out under the assumption of exact measurements. In practice, however, the expectation values of observables can only be estimated from a finite number of measurement shots, introducing statistical noise. In this section, we extend our analysis to the noisy setting by considering a stochastic approximation of the objective function.

To this end, we decompose the Hamiltonian as $\mH = \sum_{k=1}^{L}\alpha_k \mP_k$, where $\{\mP_k\}_{k=1}^{L}$ are Pauli operators and $\alpha_k \in \R$ are the corresponding
coefficients. Note that such a decomposition always exists for any Hermitian matrix and satisfies $L \leq 4^n$. Substituting this into \eqref{loss function of N layer}, the objective function can be written as
\begin{eqnarray}
\label{loss function of N layer another form}
g(\mU_1,\ldots,\mU_N) &\!\!\!\!=\!\!\!\!& \sum_{k=1}^{L}\alpha_k \langle \vphi_0 | \mU_N^\dagger \cdots \mU_1^\dagger \mP_k \mU_1 \cdots \mU_N | \vphi_0 \rangle\nonumber\\
&\!\!\!\!=\!\!\!\!& \sum_{k=1}^{L}\alpha_k (p_k^+ - p_k^-),
\end{eqnarray}
where $|\vphi_N\rangle := \mU_1\cdots\mU_N|\vphi_0\rangle$ denotes the
output state of the circuit. It follows directly that
\begin{eqnarray}
p_k^+ - p_k^- =  \frac{1}{2}\langle\vphi_N|(\mId + \mP_k)|\vphi_N\rangle - \frac{1}{2}\langle\vphi_N|(\mId - \mP_k)|\vphi_N\rangle  = \langle\vphi_N|\mP_k|\vphi_N\rangle,
\end{eqnarray}
which establishes the second equality in \eqref{loss function of N layer another form}. Note that $p_k^+$ and $p_k^-$ denote the probabilities of obtaining the $+1$ and $-1$ measurement outcomes of $\mP_k$, respectively, so that $p_k^+ + p_k^- = 1$. In practice they are unknown and must be estimated from a finite number of measurement shots. Specifically, given $M$ independent measurement shots of $\mP_k$, let $f_k^+$ and $f_k^-$ denote the number of times the outcomes $+1$ and $-1$ are observed, respectively, with $f_k^+ + f_k^- = M$. The corresponding empirical probabilities are given by $\hat{p}_k^+ = \frac{f_k^+}{M}$ and $\hat{p}_k^- = \frac{f_k^-}{M}$, which yield a noisy estimate of the objective function:
\begin{eqnarray}
\label{noisy objective}
\wh{g}(\mU_1,\ldots,\mU_N) = \sum_{k=1}^{L}\alpha_k(\hat{p}_k^+ - \hat{p}_k^-)
= g(\mU_1,\ldots,\mU_N) + \eta,
\end{eqnarray}
where $\eta = \sum_{k=1}^{L}\alpha_k \big(  \hat{p}_k^+ - \hat{p}_k^- - (p_k^+ - p_k^-)  \big)$ denotes the total measurement noise. In what follows, we analyze the impact of measurement noise on the convergence behavior and robustness of Riemannian gradient descent on the unitary group.
\begin{theorem}
\label{Theorem: robustness of noisy multi circuit}
Suppose that the initialization $(\mU_1^0,\dots,\mU_N^0)$ satisfies $g(\mU_1^0,\dots,\mU_N^0) - g(\mU_1^\star,\dots,\mU_N^\star)\leq \frac{\Delta_1}{2}$. where $\Delta_1 = E_s - E_0>0$ denotes the spectral gap. Let $\{(\mU_1^t,\dots,\mU_N^t)\}_{t\geq 1}$ be the sequence generated by Riemannian gradient descent applied to \eqref{noisy objective} with step size $\mu \leq \frac{1}{(8N+1)\|\mH\|}$. Then, with probability at least $1-\gamma$, the iterates exhibit a linear convergence rate toward a neighborhood of $(\mU_1^\star,\dots,\mU_N^\star)$, where the size of the neighborhood is determined by the measurement noise level. Specifically, given $M$ independent measurement shots of each Pauli operator $\mP_k$, the following bound holds:
\begin{eqnarray}
\label{noisy objective expansion final version main paper}
&\!\!\!\!\!\!\!\!&\wh{g}(\mU_1^{t},\ldots,\mU_N^{t}) - g(\mU_1^\star,\ldots,\mU_N^\star)\nonumber\\
&\!\!\!\! \leq \!\!\!\!& \sqrt{2\log(1/\gamma)}\sqrt{\frac{\sum_{k=1}^L \alpha_k^2 }{M}} + \bigg(1 -\frac{\Delta_1^2\mu}{32\|\mH\|}\bigg)^{t}\big(g(\mU_1^0,\dots,\mU_N^0) - g(\mU_1^\star,\dots,\mU_N^\star) \big).
\end{eqnarray}
\end{theorem}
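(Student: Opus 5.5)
The bound splits the noisy objective evaluated at the RGD iterates as
\[
\wh{g}(\mU_1^{t},\ldots,\mU_N^{t}) - g(\mU_1^\star,\ldots,\mU_N^\star) = \eta_t + \big(g(\mU_1^{t},\ldots,\mU_N^{t}) - g(\mU_1^\star,\ldots,\mU_N^\star)\big),
\]
where $\eta_t$ is the measurement noise in the final evaluation. We control the second term with the deterministic contraction of \Cref{Theorem: convergence analysis of multi circuit}, and the first with a sub-Gaussian concentration bound for shot noise.

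\textbf{Step 1 (optimization term).} We read the theorem as follows: the RGD updates use the exact Riemannian gradient \eqref{Riemannian gradient on the Stiefel N layer}, which is the derivative of $g$ since $\eta$ does not depend on the parameters in expectation. Under that reading, the iterates $\{(\mU_1^t,\dots,\mU_N^t)\}$ are those of \Cref{Theorem: convergence analysis of multi circuit}. The same initialization condition and step size $\mu\le \frac{1}{(8N+1)\|\mH\|}$ then give
\[
g(\mU_1^{t},\ldots,\mU_N^{t}) - g(\mU_1^\star,\ldots,\mU_N^\star)\le \Big(1-\tfrac{\Delta_1^2\mu}{32\|\mH\|}\Big)^t\big(g(\mU_1^0,\dots,\mU_N^0)-g(\mU_1^\star,\dots,\mU_N^\star)\big).
\]

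\textbf{Step 2 (noise term).} Write $\hat p_k^+-\hat p_k^- = \frac1M\sum_{m=1}^M X_{k,m}$, where the $X_{k,m}\in\{\pm1\}$ are i.i.d. outcomes with mean $p_k^+-p_k^-$. Then
\[
\eta_t=\sum_{k=1}^L\sum_{m=1}^M \frac{\alpha_k}{M}\big(X_{k,m}-\E X_{k,m}\big)
\]
is a sum of $LM$ independent, zero-mean, bounded variables. Each summand lies in an interval of length $2|\alpha_k|/M$. Hoeffding's inequality therefore gives
\[
\P{\eta_t\ge \epsilon}\le \exp\!\Big(-\frac{2\epsilon^2}{\sum_{k,m}4\alpha_k^2/M^2}\Big)=\exp\!\Big(-\frac{\epsilon^2 M}{2\sum_k\alpha_k^2}\Big).
\]
Setting the right-hand side equal to $\gamma$ gives $\epsilon=\sqrt{2\log(1/\gamma)}\sqrt{\sum_k\alpha_k^2/M}$, which is exactly the stated neighborhood size.

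\textbf{Step 3 (combine).} Adding the two bounds on the event of probability at least $1-\gamma$ yields \eqref{noisy objective expansion final version main paper}.

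\textbf{Main obstacle.} The difficulty is the modeling question in Step 1: whether the noise also enters the gradient. If the gradients were estimated from shots as well (e.g., by parameter-shift), the contraction would pick up an additive error at every iteration. One would then need a perturbed version of the descent inequality from the proof of \Cref{Theorem: convergence analysis of multi circuit}, together with a union bound over $t$, which would accumulate a geometric sum of noise terms. Since the stated bound has no $t$-dependent noise accumulation, I would commit to the interpretation in which the noise affects only the reported objective value. I would state this explicitly, and also note that this interpretation keeps the iterates inside the basin $p\le 1/2$ from the initialization condition.
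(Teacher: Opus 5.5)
Your proposal is correct and follows the paper's proof. Like the paper, you split the error into the shot-noise term $\wh g - g$ at the current iterate and the optimization term. You bound the shot-noise term with Hoeffding's inequality over the bounded $\pm1$ outcomes, which gives $\exp(-Mt^2/(2\sum_k\alpha_k^2))$, and you bound the optimization term by invoking \Cref{Theorem: convergence analysis of multi circuit} directly. The paper also implicitly assumes that the iterates are the noiseless RGD iterates, with noise entering only the reported objective value, so your explicit modeling caveat is exactly the assumption its argument depends on.
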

The proof is provided in {Appendix}~\ref{Proof of error bound appendix}. \Cref{Theorem: robustness of noisy multi circuit} shows that the convergence behavior under noisy measurements consists of two components. The first term represents the irreducible statistical error due to finite measurements, which is independent of the number of iterations $t$ and decreases at a rate of $O(1/\sqrt{M})$ as the number of measurement shots $M$ increases. The second term is the optimization error, which decays exponentially at the same linear rate $1 - \frac{\Delta_1^2\mu}{32\|\mH\|}$ as in the noiseless case (\Cref{Theorem: convergence analysis of multi circuit}). Consequently, the iterates converge to a neighborhood of the global minimizer whose size is controlled by the noise level. In particular, by taking $M = O\big(\frac{\sum_{k=1}^L\alpha_k^2}{\epsilon^2}\big)$ measurement shots, the statistical error can be made smaller than any target precision $\epsilon > 0$.

We further extend the conclusion of \Cref{Theorem: robustness of noisy multi circuit} to the variational objective in \eqref{objective function in the introduction parameterized}. Consider the noisy variational objective:
\begin{eqnarray}
\label{noisy objective fixed ansatz}
\wh{G}(\theta_1,\ldots,\theta_N) = G(\theta_1,\ldots,\theta_N) + \eta,
\end{eqnarray}
where $\eta$ captures the measurement noise induced by finite-shot measurements.  Since the noise term $\eta$ depends only on the measurement statistics---specifically, on the number of shots $M$ and the Pauli coefficients $\{\alpha_k\}$---and is independent of the circuit parameterization $(\theta_1,\ldots,\theta_N)$, the concentration analysis of Appendix~\ref{Proof of error bound appendix} extends directly to the parameterized setting.
\begin{cor}
\label{Corollary: noisy objective error bound}
Let $(\theta_1^\star,\ldots,\theta_N^\star)$ be an optimal solution of the noiseless objective $G(\theta_1,\ldots,\theta_N)$, and let $(\wh{\theta}_1,\ldots,\wh{\theta}_N)$ be the output of an optimization algorithm applied to the noisy objective $\wh{G}$ in the steady-state regime, where the optimization error is sufficiently small compared with the statistical noise level. Then, given $M$ independent measurement shots of each Pauli operator $\mP_k$, with probability at least $1-\gamma$, we have
\begin{eqnarray}
\label{noisy objective expansion final version main paper conclusion}
\wh{G}(\wh\theta_1,\ldots,\wh\theta_N)
-
G(\theta_1^\star,\ldots,\theta_N^\star)
=
O\bigg(\sqrt{\sum_{k=1}^L\frac{\log(1/\gamma) \alpha_k^2}{M}}\bigg).
\end{eqnarray}
\end{cor}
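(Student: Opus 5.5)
The plan is to split the quantity of interest into a statistical part and an optimization part:
\begin{eqnarray}
\wh{G}(\wh\theta_1,\ldots,\wh\theta_N) - G(\theta_1^\star,\ldots,\theta_N^\star)
= \eta + \big(G(\wh\theta_1,\ldots,\wh\theta_N) - G(\theta_1^\star,\ldots,\theta_N^\star)\big).
\end{eqnarray}
The first term is the measurement noise evaluated at the output parameters. By the steady-state assumption, the second term is dominated by the statistical level, so it is at most a constant multiple of the bound we derive for the noise. Everything then comes down to a high-probability bound on $|\eta|$ that depends only on $M$ and $\{\alpha_k\}$. This mirrors how the first term in \eqref{noisy objective expansion final version main paper} of \Cref{Theorem: robustness of noisy multi circuit} arises.

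For the noise term, I would write $\eta = \sum_{k=1}^L \alpha_k(\hat p_k^+ - \hat p_k^- - (p_k^+ - p_k^-)) = \sum_{k=1}^L\sum_{m=1}^M \frac{\alpha_k}{M}(X_{k,m} - \E X_{k,m})$. Here $X_{k,m}\in\{-1,+1\}$ is the outcome of the $m$-th shot of $\mP_k$ on the state prepared by $\mU(\wh\theta)$. Conditional on the circuit parameters, the $LM$ summands are independent, mean zero, and bounded: each lies in an interval of length $2|\alpha_k|/M$. Hoeffding's inequality then gives $\mathbb{P}(|\eta|\ge \epsilon)\le 2\exp\big(-\epsilon^2 M/(2\sum_k\alpha_k^2)\big)$. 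Setting the right-hand side equal to $\gamma$ yields $|\eta|\le \sqrt{2\log(2/\gamma)}\sqrt{\sum_k\alpha_k^2/M}$ with probability at least $1-\gamma$. This is the same concentration step as in Appendix~\ref{Proof of error bound appendix}, and none of it refers to the parametrization, which is why it carries over from the unitary setting to the parametrized one.

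The step I expect to be the main obstacle is the dependence between the noise and the output parameters. The iterate $\wh\theta$ is itself produced from noisy measurements, so the estimate at $\wh\theta$ is not a fixed-point Hoeffding quantity. To handle this, I would take $\wh G(\wh\theta)$ to mean a fresh evaluation with $M$ new shots per Pauli term, independent of the optimization history. Conditioning on $\wh\theta$, the Hoeffding bound then applies verbatim, and integrating over $\wh\theta$ keeps the probability at $1-\gamma$; this reading matches the statement's stipulation that the noise depends only on the measurement statistics. If a fresh evaluation is not allowed, the fallback is a union bound over the finitely many evaluations made during the run, which only changes $\log(1/\gamma)$ by a logarithmic factor absorbed in the $O(\cdot)$.

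Finally, I combine the two parts. The steady-state assumption gives $G(\wh\theta)-G(\theta^\star) \le C\sqrt{\sum_k\alpha_k^2\log(1/\gamma)/M}$. Adding this to the bound on $|\eta|$, and using $\log(2/\gamma)\lesssim\log(1/\gamma)$ for $\gamma\le 1/2$, gives \eqref{noisy objective expansion final version main paper conclusion}.
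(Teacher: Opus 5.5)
Your argument is essentially the paper's: you split $\wh{G}(\wh\theta_1,\ldots,\wh\theta_N)-G(\theta_1^\star,\ldots,\theta_N^\star)$ into the measurement noise $\eta$ at the output, bounded by the same parametrization-free Hoeffding step as in Appendix~\ref{Proof of error bound appendix}, plus the optimization gap, which the steady-state assumption makes subdominant. Your fresh-evaluation device for decoupling $\wh\theta$ from the noise is extra rigor the paper does not supply; note, though, that the union-bound fallback gives $\log(T/\gamma)$ for $T$ evaluations, and this is absorbed into the $O(\cdot)$ only if $T$ is treated as a constant.
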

This result highlights that, once the optimization reaches the noise-dominated regime, the dominant limitation on accuracy arises from finite-shot measurement statistics rather than the optimization landscape. Consequently, the achievable estimation accuracy is fundamentally limited by the measurement budget $M$.

\subsection{Non-Uniform Measurement Allocation}

In the previous analysis, we assumed a uniform measurement budget, i.e., each Pauli operator $\mP_k$ is measured using the same number of measurement shots $M$. While this simplifies the presentation, such a uniform allocation is generally suboptimal when the total measurement budget is fixed. To better utilize the available resources, we now consider a heterogeneous allocation strategy. Specifically, given a total measurement budget, we assign a potentially different number of measurement shots $M_k$ to each Pauli operator $\mP_k$, for $k=1,\ldots,L$, thereby enabling an adaptive measurement strategy that accounts for the heterogeneous contributions of different Hamiltonian terms. Under this more general setting, we establish the following result, which characterizes the local geometry of the Riemannian gradient descent algorithm with non-uniform measurement allocation.
\begin{theorem}
\label{lemma: robustness of noisy multi circuit}
Under the same conditions as in \Cref{Theorem: robustness of noisy multi circuit}, suppose that for each Pauli operator $\mP_k$, $M_k$ independent measurements are performed. Then, given $M_k$ independent measurement shots of each Pauli operator $\mP_k$, with probability at least $1-\gamma$, the iterates satisfy
\begin{eqnarray}
\label{noisy objective expansion final version M main paper}
&\!\!\!\!\!\!\!\!&\wh{g}(\mU_1^{t},\ldots,\mU_N^{t}) - g(\mU_1^\star,\ldots,\mU_N^\star)\nonumber\\
&\!\!\!\! \leq \!\!\!\!& \sqrt{2\log(1/\gamma)}\sqrt{\sum_{k=1}^L \frac{\alpha_k^2}{M_k} } + \bigg(1 -\frac{\Delta_1^2\mu}{32\|\mH\|}\bigg)^{t}\big(g(\mU_1^0,\dots,\mU_N^0) - g(\mU_1^\star,\dots,\mU_N^\star) \big).
\end{eqnarray}
\end{theorem}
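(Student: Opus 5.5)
The plan is to follow the proof of \Cref{Theorem: robustness of noisy multi circuit} and change only the concentration step. The quantity to control splits into a noise term and an optimization term:
\begin{eqnarray}
\wh{g}(\mU_1^{t},\ldots,\mU_N^{t}) - g(\mU_1^\star,\ldots,\mU_N^\star) = \eta^t + \big(g(\mU_1^{t},\ldots,\mU_N^{t}) - g(\mU_1^\star,\ldots,\mU_N^\star)\big),
\end{eqnarray}
where $\eta^t$ is the measurement noise of \eqref{noisy objective} evaluated at the current iterate. For the second term I will invoke the same argument as in \Cref{Theorem: robustness of noisy multi circuit}, whose conditions are assumed here verbatim. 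Under the initialization condition and step size $\mu \leq \frac{1}{(8N+1)\|\mH\|}$, the deterministic contraction of \Cref{Theorem: convergence analysis of multi circuit} yields the factor $(1-\frac{\Delta_1^2\mu}{32\|\mH\|})^t$. The shot allocation does not enter this part at all: it affects only the statistics of $\eta$, not the geometry of the update.

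The new work is a sub-Gaussian tail bound for
\begin{eqnarray}
\eta = \sum_{k=1}^L \alpha_k\big(\hat p_k^+ - \hat p_k^- - (p_k^+ - p_k^-)\big)
\end{eqnarray}
when $\mP_k$ is measured $M_k$ times. I will write $\hat p_k^+ - \hat p_k^- = \frac{1}{M_k}\sum_{j=1}^{M_k} X_{k,j}$, where the $X_{k,j}\in\{\pm1\}$ are independent outcomes with mean $\langle\vphi_N|\mP_k|\vphi_N\rangle$. Then $\eta=\sum_{k,j}\frac{\alpha_k}{M_k}(X_{k,j}-\E X_{k,j})$ is a sum of independent, centered, bounded variables. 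Each summand has range length $2|\alpha_k|/M_k$, so by Hoeffding's lemma it is sub-Gaussian with variance proxy $\alpha_k^2/M_k^2$. Summing over $j$ gives a proxy of $\alpha_k^2/M_k$ for the $k$-th term. Summing over $k$ gives a total proxy of $\sum_{k=1}^L \alpha_k^2/M_k$.

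The one-sided Chernoff bound then gives $\P{\eta \ge \sqrt{2\log(1/\gamma)\sum_k \alpha_k^2/M_k}}\le\gamma$. This is exactly the first term of \eqref{noisy objective expansion final version M main paper}. Setting $M_k=M$ recovers \eqref{noisy objective expansion final version main paper}, which is a useful consistency check. Combining the two bounds completes the statement.

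The main obstacle is the interface between the random noise and the iterates. If each gradient step uses fresh shots, the iterate $(\mU_1^t,\ldots,\mU_N^t)$ depends on earlier measurements. I will handle this in the same way as \Cref{Theorem: robustness of noisy multi circuit}, so the statement is no stronger than that theorem. First, I condition on the iterate: the shots used to evaluate $\wh g$ at step $t$ are independent of that iterate, so conditionally on it the bound above holds with the same sub-Gaussian proxy, which does not depend on the state. Second, the contraction is applied to the exact objective $g$; if the gradient itself is noisy, this step relies on whatever treatment the earlier appendix gives it. The only change from the uniform case is the variance proxy, so no new geometric argument is needed.
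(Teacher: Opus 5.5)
Your proposal is correct and matches the paper's proof. The paper likewise splits the error into a measurement term and an optimization term, bounds the latter by directly reusing the noiseless contraction of \Cref{Theorem: convergence analysis of multi circuit}, and applies Hoeffding's inequality with per-shot ranges $2|\alpha_k|/M_k$ to get the tail $e^{-t^2/(2\sum_{k}\alpha_k^2/M_k)}$ before setting $t=\sqrt{2\log(1/\gamma)\sum_k \alpha_k^2/M_k}$. The paper does not address your caveat about noisy gradients driving the iterates either, so your argument is as rigorous as the original.
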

The detailed proof is deferred to Appendix~\ref{Proof of optimal allocation}. This result generalizes \Cref{Theorem: robustness of noisy multi circuit} to the non-uniform sampling regime. In particular, the statistical error term now explicitly depends on the allocation $\{M_k\}_{k=1}^L$, revealing how different measurement strategies affect the overall estimation accuracy. This formulation naturally enables an optimization of the measurement allocation under a fixed total budget, leading to an improved trade-off between statistical error and sampling cost. Specifically, we formulate the problem of minimizing the total error under a fixed measurement budget $M_{\text{tot}}$ as the following constrained optimization problem:
\begin{eqnarray}
\label{minimization of total error under constraints}
\min_{M_1,\dots,M_L} \sum_{k=1}^L \frac{\alpha_k^2}{M_k}, \quad \text{s.t. } \sum_{k=1}^L M_k = M_{\text{tot}}.
\end{eqnarray}
Based on the derivation in Appendix~\ref{Proof of optimal allocation closed form}, the optimal solution to \eqref{minimization of total error under constraints} is given by
\begin{eqnarray}
\label{closed-form of constrained optimization main paper}
\wh{M}_k = \frac{|\alpha_k|}{\sum_{j=1}^L |\alpha_j|} M_{\text{tot}}, \ \ k=1,\dots,L.
\end{eqnarray}
Substituting this allocation into the error formula in \eqref{minimization of total error under constraints} gives $\sum_{k=1}^L \frac{\alpha_k^2}{\wh{M}_k} = \frac{(\sum_{k=1}^L |\alpha_k|)^{2}}{M_{\text{tot}}}$. Finally, by Cauchy--Schwarz inequality, we have
\begin{eqnarray}
\label{error formula 2}
\sum_{k=1}^L \frac{\alpha_k^2}{\wh{M}_k} = \frac{(\sum_{k=1}^L |\alpha_k|)^{2}}{M_{\text{tot}}} \leq \frac{L \sum_{k=1}^L \alpha_k^{2}}{M_{\text{tot}}} = \frac{ \sum_{k=1}^L \alpha_k^{2}}{M},
\end{eqnarray}
where the inequality becomes equality if and only if $\alpha_k$ are identical for all $k$. Here $M = M_{\text{tot}}/L$ and the right-hand side corresponds to the uniform allocation. This establishes that, under a fixed total measurement budget, the optimal non-uniform allocation strictly reduces the statistical error compared to uniform allocation whenever the coefficients $\{\alpha_k\}$ are not all equal. Finally, we clarify the distinction between the present setting and the optimal measurement allocation studied in \cite{qin2026optimal}. The work \cite{qin2026optimal} considers a joint optimization over both the number of Pauli measurement settings and the number of repetitions per setting. In contrast, in our setting the measurement operators are fixed, and we focus solely on the allocation of repeated measurements across a given set of Pauli operators.

Motivated by this observation, we extend the uniform-shot error analysis in Corollary~\ref{Corollary: noisy objective error bound} to the general non-uniform measurement allocation case.
\begin{cor}
\label{Corollary: noisy objective error bound nonuniform}
Let $(\theta_1^\star,\ldots,\theta_N^\star)$ be an optimal solution of the noiseless objective $G(\theta_1,\ldots,\theta_N)$, and let $(\wh{\theta}_1,\ldots,\wh{\theta}_N)$ be the output of an optimization algorithm applied to the noisy objective $\wh{G}$ in the steady-state regime, where the optimization error is sufficiently small compared with the statistical noise level. Then, given $M_k$ independent measurement shots of each Pauli operator $\mP_k$, with probability at least $1-\gamma$, we have
\begin{eqnarray}
\label{noisy objective expansion final version main paper nonuniform conclusion}
\wh{G}(\wh\theta_1,\ldots,\wh\theta_N)
-
G(\theta_1^\star,\ldots,\theta_N^\star)
=
O\bigg(\sqrt{\sum_{k=1}^L \frac{\log(1/\gamma)\alpha_k^2}{M_k}}\bigg).
\end{eqnarray}
\end{cor}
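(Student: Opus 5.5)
The plan is to reuse the argument behind Corollary~\ref{Corollary: noisy objective error bound}, replacing the uniform-shot concentration bound with the heterogeneous one from the proof of \Cref{lemma: robustness of noisy multi circuit}. First, write the noisy objective as in \eqref{noisy objective fixed ansatz}, $\wh{G}(\vtheta) = G(\vtheta) + \eta$. Here
\[
\eta=\sum_{k=1}^L \alpha_k\big(\hat p_k^+-\hat p_k^- -(p_k^+-p_k^-)\big),
\]
and $\hat p_k^\pm$ are now built from $M_k$ shots. Each single-shot outcome of $\mP_k$ is a $\pm1$ random variable. Thus $\hat p_k^+-\hat p_k^-$ is the average of $M_k$ independent bounded variables in $[-1,1]$, and $\alpha_k(\hat p_k^+-\hat p_k^-)$ is an average of $M_k$ independent terms, each lying in an interval of length $2|\alpha_k|/M_k$.

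Second, apply Hoeffding's inequality to the whole collection of $\sum_k M_k$ independent single-shot variables. Different Pauli operators are measured with independent shots, so the sum of squared ranges is $\sum_k M_k (2|\alpha_k|/M_k)^2 = 4\sum_k \alpha_k^2/M_k$. This gives
\[
\P{\eta \ge \epsilon}\le \exp\Big(-\epsilon^2\Big/\Big(2\sum_k \alpha_k^2/M_k\Big)\Big).
\]
With probability at least $1-\gamma$ this yields $\eta \le \sqrt{2\log(1/\gamma)}\sqrt{\sum_k \alpha_k^2/M_k}$, which is exactly the noise term in \eqref{noisy objective expansion final version M main paper}. This concentration holds for the measurement statistics evaluated at the output parameter $\wh\vtheta$. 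The shots are fresh at evaluation, and the law of $\eta$ depends on the circuit only through the bounded per-shot ranges, which are independent of $\vtheta$.

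Third, decompose
\[
\wh{G}(\wh\vtheta)-G(\vtheta^\star) = \big(G(\wh\vtheta)-G(\vtheta^\star)\big) + \eta.
\]
The steady-state hypothesis says the optimization error $G(\wh\vtheta)-G(\vtheta^\star)$ is at most of the order of the statistical noise level. In the RGD setting this is the analogue of letting $t\to\infty$ in \Cref{lemma: robustness of noisy multi circuit}, so the contractive term vanishes. Adding the two pieces gives the claimed $O\big(\sqrt{\sum_k \log(1/\gamma)\alpha_k^2/M_k}\big)$ bound.

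The main obstacle is making the steady-state assumption precise. For a general parameterized ansatz there is no linear convergence theorem, and $\wh\vtheta$ depends on earlier noisy evaluations. I would address this by stating the assumption quantitatively, namely that $G(\wh\vtheta)-G(\vtheta^\star)\le C\sqrt{\sum_k \alpha_k^2/M_k}$, and by evaluating $\wh G$ at $\wh\vtheta$ with fresh shots, so that conditioning on $\wh\vtheta$ preserves the Hoeffding bound.
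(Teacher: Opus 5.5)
Your proposal is correct and follows essentially the paper's route. You split $\wh G(\wh\vtheta)-G(\vtheta^\star)$ into the measurement error $\eta$, controlled by the same Hoeffding bound (per-term ranges $2|\alpha_k|/M_k$) as in Appendix~\ref{Proof of optimal allocation}, plus the optimization error, which the steady-state hypothesis makes negligible. Your quantitative version of that hypothesis and the fresh-shot conditioning argument make explicit points the paper leaves informal; strictly, it is the Hoeffding tail bound that is uniform in $\vtheta$, not the law of $\eta$ itself.
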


\section{Simulation}

In this section, we conduct numerical experiments to corroborate the theoretical results developed in the previous sections. We implement the RGD algorithm in \eqref{iteration of RGD on the Stiefel N layer} with step size $\mu = \mu_0 / N$. In the initialization scheme \eqref{initialization design}, each Pauli operator is sampled uniformly at random from $\{\mId,\mX,\mY,\mZ\}^{\otimes n}$, and the phase parameters $\{\theta_h\}$ are independently drawn from $\mathcal{N}(0,\sigma^2)$. We set the Hilbert space dimension to $D=2^n$ and choose the reference state as $|\vphi_0\rangle = [1,0,\ldots,0]^\top$. All reported results are averaged over 100 independent Monte Carlo trials.

\begin{figure}[!ht]
\centering
\subfigure[]{
\begin{minipage}[t]{0.305\textwidth}
\centering
\includegraphics[width=5.5cm]{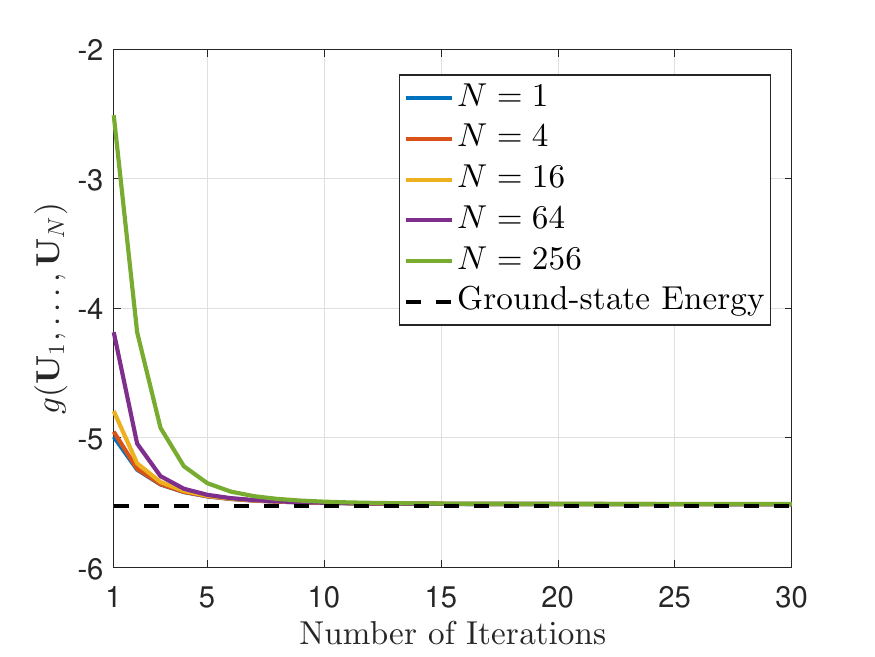}
\end{minipage}
\label{convergence analysis of N}
}
\subfigure[]{
\begin{minipage}[t]{0.305\textwidth}
\centering
\includegraphics[width=5.5cm]{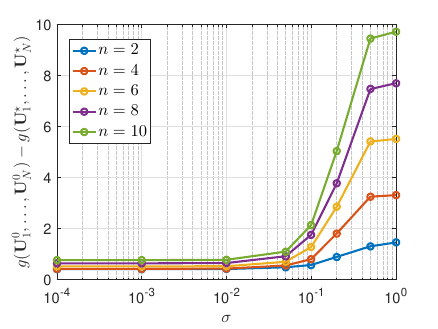}
\end{minipage}
\label{Initialization requirement}
}
\subfigure[]{
\begin{minipage}[t]{0.305\textwidth}
\centering
\includegraphics[width=5.5cm]{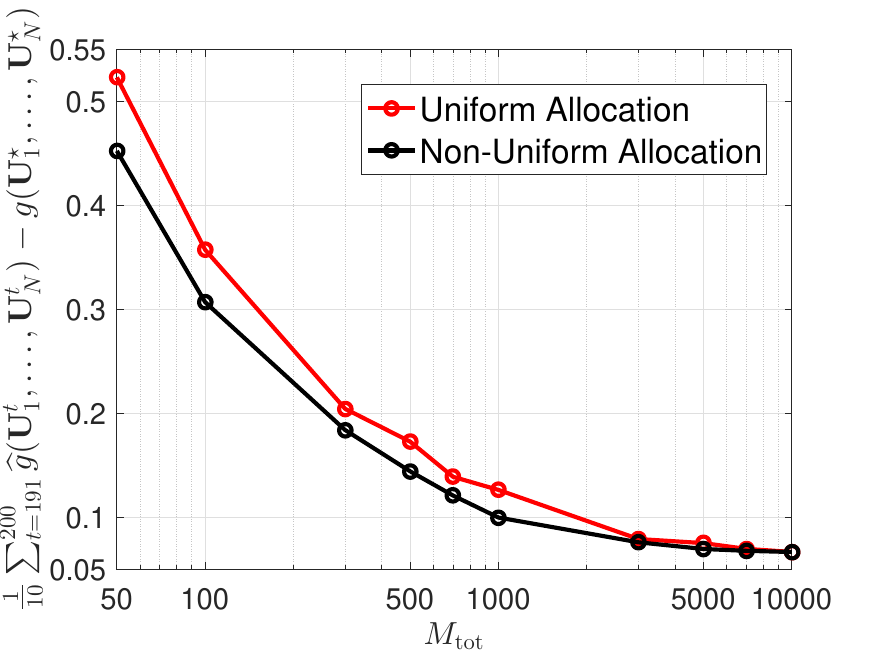}
\end{minipage}
\label{Recovery error bound}
}
\caption{(a) Convergence behavior of RGD for varying $N$ with $\mu_0=0.2$, $n=6$, and $\sigma=0.05$; (b) Initialization error for different $\sigma$ and $n$ with $N=16$; (c) Comparison between uniform and non-uniform measurement allocation under varying total budget $M_{\textup{tot}}$ with $\mu_0=0.2$, $n=6$, $N=8$, and $\sigma=0.05$.}
\end{figure}

In the first experiment, we examine the dependence of convergence on the number of layers $N$ under the transverse-field Ising Hamiltonian \cite{pfeuty1971ising} $\mH = - 0.5 \sum_{k=1}^{n} \mX_k - \sum_{k=1}^{n-1} \mZ_k \mZ_{k+1}$, where $\mX_k$ and $\mZ_k$ denote standard Pauli operators acting on the $k$-th qubit. As shown in \Cref{convergence analysis of N}, increasing $N$ leads to a slower convergence rate. This behavior is consistent with \Cref{Theorem: convergence analysis of single circuit}, which shows that the convergence rate deteriorates as the number of layers increases. Moreover, the initialization error grows with $N$, in agreement with \Cref{Theorem: initialization design of multi circuit}, indicating that deeper circuits lead to a larger  initialization error.

In the second experiment, we consider the same setup as in the first experiment, and study the effect of the initialization scale $\sigma$ on the initial recovery error under different system sizes $n$. The results in \Cref{Initialization requirement} show that smaller $\sigma$ yields a smaller initial recovery error, indicating that initializations closer to the identity lead to improved performance. This is consistent with \Cref{Theorem: initialization design of multi circuit}, which bounds the initialization error in terms of the distance to the identity. As $\sigma$ decreases, the initialization concentrates around the identity, resulting in a smaller initial objective gap. In addition, the initial recovery error increases with the system size $n$. This reflects the increasing dimensionality of the unitary manifold, under which a randomly initialized circuit becomes more dispersed and less aligned with the identity, resulting in a larger initial objective value.

In the last experiment, we compare uniform and non-uniform measurement allocation under a fixed total measurement budget $M_{\textup{tot}}$. We consider a random Pauli Hamiltonian of the form $\mH = \sum_{k=1}^{L} \alpha_k \mP_k$, where $L=24$, each $\mP_k$ is independently sampled from $\{\mId,\mX,\mY,\mZ\}^{\otimes 24}$, and $\alpha_k \sim \mathcal{N}(0,1)$ followed by normalization such that $\sum_{k=1}^{L}\alpha_k^2=1$. From \Cref{Recovery error bound}, the non-uniform allocation achieves a consistently lower estimation error compared with the uniform strategy, especially in the small-budget regime. This is in direct agreement with the analytical result in \eqref{error formula 2}, which shows that the uniform allocation yields a suboptimal benchmark value under the same total measurement budget. The improvement is attributed to the reallocation of measurement resources toward Pauli observables with larger coefficients $\{|\alpha_k|\}$, which directly reduces the dominant contributions in the error term $\sum_{k=1}^L \alpha_k^2 / M_k$. As $M_{\textup{tot}}$ increases, the performance gap gradually diminishes, since the contribution from each observable becomes uniformly suppressed.

\section{Conclusion}
\label{conclusion}

In this paper, we study the VQE from a geometric perspective, focusing on its optimization landscape, initialization conditions, and robustness under finite-shot measurements. First, we analyze the optimization landscape via Riemannian optimization on the unitary group. We consider both single-unitary and product-unitary settings. In the single-unitary case, we establish that RGD exhibits explicit linear convergence under appropriate initialization and that the objective satisfies the strict-saddle property, where all critical points are either global minimizers or strict saddles. In the product-unitary case, we show that RGD converges to a global minimizer under suitable initialization, where the convergence rate deteriorates polynomially with circuit depth, providing a geometric explanation of the BP phenomenon in terms of Hilbert-space geometry, target-unitary complexity, per-layer expressivity, and circuit depth. Second, we analyze initialization conditions for these convergence results and show that practical small-angle random Pauli-rotation circuits satisfy the required geometric conditions with high probability. Third, we study robustness under finite-shot measurements and show that RGD retains linear convergence up to a noise-dominated neighborhood of the global minimizer. Moreover, coefficient-adaptive measurement allocation achieves strictly lower estimation error than uniform sampling under a fixed measurement budget. Overall, these results provide a unified geometric foundation for understanding the trainability of VQE and may inform the design of more principled optimization strategies for variational quantum algorithms more broadly.

\section{Acknowledgments}
\label{sec: ack}

ZQ gratefully acknowledges support from the MICDE Research Scholars Program at the University of Michigan.

\newpage

\appendices

\section{Proof of \Cref{Theorem: convergence analysis of single circuit}}
\label{Proof of convergence analysis of single circuit appendix}

\begin{proof}

First, we bound the term $\|\text{grad}_{\mU}f(\mU_1) - \text{grad}_{\mU}f(\mU_2)\|_F$ for any $\mU_1,\mU_2\in\calU(D)$ as follows:
\begin{eqnarray}
\label{the L smooth term for gradient difference}
&\!\!\!\!\!\!\!\!&\|\text{grad}_{\mU}f(\mU_1) - \text{grad}_{\mU}f(\mU_2)\|_F\nonumber\\
&\!\!\!\!=\!\!\!\!&\|\mH(\mU_1 - \mU_2) \vrho_0 \|_F + \| \mU_1\text{sym}(\mU_1^\dagger\mH\mU_1\vrho_0) - \mU_2\text{sym}(\mU_2^\dagger\mH\mU_2\vrho_0)  \|_F\nonumber\\
&\!\!\!\!\leq\!\!\!\!& \|\mH\| \|\vrho_0 \| \|\mU_1 - \mU_2\|_F + \|(\mU_1 - \mU_2)\text{sym}(\mU_1^\dagger\mH\mU_1\vrho_0)  \|_F + \|\mU_2(\text{sym}(\mU_1^\dagger\mH\mU_1\vrho_0) - \text{sym}(\mU_2^\dagger\mH\mU_2\vrho_0) ) \|_F\nonumber\\
&\!\!\!\!\leq\!\!\!\!& 2\|\mH\| \|\mU_1 - \mU_2\|_F + \|\text{sym}(\mU_1^\dagger\mH\mU_1\vrho_0) - \text{sym}(\mU_2^\dagger\mH\mU_2\vrho_0) \|_F \nonumber\\
&\!\!\!\!\leq\!\!\!\!& 4\|\mH\| \|\mU_1 - \mU_2\|_F,
\end{eqnarray}
where the second inequality follows from $\|\vrho_0\| = \| |\vphi_0\rangle \langle \vphi_0|\| = \||\vphi_0\rangle \|_2^2 = 1 $, $\|\mU_2\|=1$ and $\|\text{sym}(\mU_1^\dagger\mH\mU_1\vrho_0) \|\leq \|\vrho_0\|\|\mU_1 \|^2\|\mH\|=\|\mH\|$. In addition, the last line uses $\|\text{sym}(\mU_1^\dagger\mH\mU_1\vrho_0) - \text{sym}(\mU_2^\dagger\mH\mU_2\vrho_0) \|_F = \frac{1}{2}\|(\mU_1 - \mU_2)^\dagger \mH\mU_1\vrho_0 +  \mU_2^\dagger\mH(\mU_1 - \mU_2 )\vrho_0 + \vrho_0 (\mU_1 - \mU_2)^\dagger \mH\mU_1 + \vrho_0 \mU_2^\dagger\mH(\mU_1 - \mU_2 )\|_F \leq 2\|\mH\| \|\mU_1 - \mU_2\|_F$.

Combining \Cref{lemma:Riemannian descent conclusion} with \eqref{the L smooth term for gradient difference} and taking $L = \frac{9\|\mH\|}{2}$, we have
\begin{eqnarray}
\label{L smooth condition for Riemmanin gradient}
f(\mU^{t+1} ) &\!\!\!\!=\!\!\!\!& f(\text{Retr}_{\mU}( - \mu \text{grad}_{\mU}f(\mU^t) ))\nonumber\\
&\!\!\!\!\leq\!\!\!\!& f(\mU^t ) - \mu(1 - L\mu)\|\text{grad}_{\mU}f(\mU^t) \|_F^2\nonumber\\
&\!\!\!\!\leq\!\!\!\!& f(\mU^t ) - \frac{\mu}{2}\|\text{grad}_{\mU}f(\mU^t) \|_F^2,
\end{eqnarray}
where $\mu\leq \frac{1}{2L}$ is used in the last line.

Next, we analyze a lower bound for $\|\text{grad}_{\mU}f(\mU) \|_F^2$. Using the decomposition $\|\text{grad}_{\mU}f(\mU) \|_F^2 = \|\mH \mU \vrho_0\|_F^2 - \|\mU\text{sym}(\mU^\dagger\mH\mU\vrho_0) \|_F^2$, we begin by expanding the first term $\|\mH \mU \vrho_0\|_F^2$ as follows:
\begin{eqnarray}
\label{expansion of first term in the Riemannian gradient}
\|\mH \mU \vrho_0\|_F^2 &\!\!\!\!=\!\!\!\!& \trace(\vrho_0 \mU^\dagger \mH^2 \mU \vrho_0 )\nonumber\\
&\!\!\!\!=\!\!\!\!& \trace(\vrho_0 \mU^\dagger \mH^2 \mU  )\nonumber\\
&\!\!\!\!=\!\!\!\!&\langle\vphi| \mH^2  |\vphi \rangle,
\end{eqnarray}
where we define $|\vphi \> = \mU |\vphi_0 \>$. In addition, by defining $\wt\mH := \mU^\dagger \mH \mU$, we obtain
\begin{eqnarray}
\label{expansion of second term in the Riemannian gradient}
\|\mU\text{sym}(\mU^\dagger\mH\mU\vrho_0) \|_F^2 &\!\!\!\!=\!\!\!\!& \|\text{sym}(\mU^\dagger\mH\mU\vrho_0) \|_F^2\nonumber\\
&\!\!\!\!=\!\!\!\!& \frac{1}{4}\| \wt\mH \vrho_0 + \vrho_0 \wt\mH \|_F^2\nonumber\\
&\!\!\!\!=\!\!\!\!& \frac{1}{2}\|\wt\mH \vrho_0\|_F^2 + \frac{1}{2}\Re{\trace(\wt\mH \vrho_0 \wt\mH \vrho_0 )}\nonumber\\
&\!\!\!\!=\!\!\!\!& \frac{1}{2}\trace(\vrho_0 \wt\mH^2 \vrho_0 ) + \frac{1}{2}|\<\vphi_0| \wt\mH | \vphi_0 \> |^2\nonumber\\
&\!\!\!\!=\!\!\!\!& \frac{1}{2}\langle\vphi| \mH^2  |\vphi \rangle + \frac{1}{2}(f(\mU))^2.
\end{eqnarray}

Based on \eqref{expansion of first term in the Riemannian gradient} and \eqref{expansion of second term in the Riemannian gradient}, we can further derive
\begin{eqnarray}
\label{expansion of the Riemannian gradient}
\|\text{grad}_{\mU}f(\mU) \|_F^2 =  \frac{\langle\vphi| \mH^2  |\vphi \rangle - (f(\mU))^2}{2} = \frac{1}{2}F_{|\vphi \rangle}(\mH).
\end{eqnarray}

Define $|\vphi \rangle = \mU|\vphi_0 \rangle = \sum_{k=0}^{D-1}c_k|\vpsi_k\>$ with $\sum_{k=0}^{D-1} |c_k|^2 =1$, and let $\mH = \sum_{k=0}^{D-1}E_k |\vpsi_k \rangle\langle \vpsi_k|$ be its spectral decomposition. Then, we have $\langle\vphi| \mH  |\vphi \rangle = \sum_{k=0}^{D-1}E_k |\<\vpsi_k|\vphi\> |^2 = \sum_{k=0}^{D-1}E_k |c_k|^2$ and $\mH^2 = \sum_{k=0}^{D-1}E_k^2 |\vpsi_k \rangle\langle \vpsi_k|$, $\langle\vphi| \mH^2  |\vphi \rangle = \sum_{k=0}^{D-1}E_k^2 |c_k|^2$. Hence, it follows that
\begin{eqnarray}
\label{expansion of the Riemannian gradient one specifc term}
F_{|\vphi \rangle}(\mH) &\!\!\!\!=\!\!\!\!& \sum_{k=0}^{D-1}E_k^2 |c_k|^2 - \bigg(\sum_{k=0}^{D-1}E_k |c_k|^2\bigg)^2\nonumber\\
&\!\!\!\!=\!\!\!\!& \sum_{k=0}^{D-1}E_k^2 |c_k|^2 - \sum_{k=0}^{D-1}\sum_{j=0}^{D-1}E_k E_j|c_k|^2|c_j|^2\nonumber\\
&\!\!\!\!=\!\!\!\!& \sum_{k=0}^{D-1}\sum_{j=0}^{D-1}E_k^2 |c_k|^2 |c_j|^2 - \sum_{k=0}^{D-1}\sum_{j=0}^{D-1}E_k E_j|c_k|^2|c_j|^2\nonumber\\
&\!\!\!\!=\!\!\!\!& \sum_{k=0}^{D-1}\sum_{j=0}^{D-1}E_k(E_k - E_j)|c_k|^2 |c_j|^2\nonumber\\
&\!\!\!\!=\!\!\!\!& \frac{1}{2}\sum_{k=0}^{D-1}\sum_{j=0}^{D-1}E_k(E_k - E_j)|c_k|^2 |c_j|^2 + \frac{1}{2}\sum_{k=0}^{D-1}\sum_{j=0}^{D-1}E_j(E_j - E_k)|c_k|^2 |c_j|^2\nonumber\\
&\!\!\!\!=\!\!\!\!& \frac{1}{2}\sum_{k=0}^{D-1}\sum_{j=0}^{D-1}(E_k - E_j)^2|c_k|^2 |c_j|^2\nonumber\\
&\!\!\!\!\geq\!\!\!\!& \frac{1}{2}\sum_{k=s}^{D-1}\sum_{j=0}^{s-1}(E_k - E_j)^2|c_k|^2 |c_j|^2 \geq\frac{1}{2}\Delta_1^2 p(1-p),
\end{eqnarray}
where the last inequality follows from the fact that  $E_k - E_j \geq \Delta_1$ for all $k=s,\dots,D-1$ and $j=0,\dots, s-1$. Here, we define  $p := 1 - \sum_{k=0}^{s-1}|c_k|^2 = \sum_{k=s}^{D-1}|c_k|^2$. We further bound the term $p(1-p)$ in \eqref{expansion of the Riemannian gradient one specifc term}.
By the definition of $f(\mU^\star)$ and the normalization condition $f(\mU^\star) = E_0 = E_0\sum_{k=0}^{D-1}|c_k|^2$, we have
\begin{eqnarray}
\label{upper bound of p 1}
f(\mU) - f(\mU^\star) &\!\!\!\!=\!\!\!\!& \sum_{k=0}^{D-1}E_k|c_k|^2 - E_0\sum_{k=0}^{D-1}|c_k|^2\nonumber\\
&\!\!\!\!=\!\!\!\!& \sum_{k=s}^{D-1}(E_k - E_0 )|c_k|^2\nonumber\\
&\!\!\!\!\leq\!\!\!\!&(E_{D-1} - E_0 )\sum_{k=s}^{D-1}|c_k|^2\nonumber\\
&\!\!\!\!\leq\!\!\!\!& 2\|\mH\| p,
\end{eqnarray}
implying that $p\geq \frac{f(\mU) - f(\mU^\star)}{2\|\mH\|}$. In addition, since
\begin{eqnarray}
\label{lower bound of p 1}
f(\mU) - f(\mU^\star) &\!\!\!\!=\!\!\!\!& \sum_{k=s}^{D-1}(E_k - E_0 )|c_k|^2\nonumber\\
&\!\!\!\!\geq\!\!\!\!&\Delta_1\sum_{k=s}^{D-1}|c_k|^2\nonumber\\
&\!\!\!\!\geq\!\!\!\!& \Delta_1 p,
\end{eqnarray}
we can obtain $p\leq \frac{f(\mU) - f(\mU^\star)}{\Delta_1} $. Therefore, we have $\frac{f(\mU) - f(\mU^\star)}{2\|\mH\|} \leq  p\leq  \frac{f(\mU) - f(\mU^\star)}{\Delta_1}$. Assume that $f(\mU) - f(\mU^\star)\leq \frac{\Delta_1}{2}$. Then, it follows that $1 - p \geq \frac{1}{2}$. Consequently, we obtain the following lower bound for $F_{|\vphi\rangle}(\mH)$.
\begin{eqnarray}
\label{lower bound of F in the single layer}
\|\text{grad}_{\mU}f(\mU) \|_F^2  = \frac{1}{2}F_{|\vphi\rangle}(\mH) = \frac{1}{4}\Delta_1^2 p(1-p) \geq \frac{1}{8}\Delta_1^2 p\geq \frac{\Delta_1^2}{16\|\mH\|}(f(\mU) - f(\mU^\star) ).
\end{eqnarray}

Based on \eqref{L smooth condition for Riemmanin gradient}, it follows that
\begin{eqnarray}
\label{final conclusion of single circuit}
f(\mU^{t+1} ) - f(\mU^\star) &\!\!\!\!\leq\!\!\!\!& f(\mU^t ) - f(\mU^\star) - \frac{\mu}{2}\|\text{grad}_{\mU}f(\mU^t) \|_F^2\nonumber\\
&\!\!\!\!\leq\!\!\!\!&\bigg( 1 - \frac{\Delta_1^2\mu}{32\|\mH\|} \bigg)(f(\mU^t ) - f(\mU^\star))\nonumber\\
&\!\!\!\!\leq\!\!\!\!&\bigg( 1 - \frac{\Delta_1^2\mu}{32\|\mH\|} \bigg)^{t+1}(f(\mU^0 ) - f(\mU^\star)),
\end{eqnarray}
where we assume that $f(\mU^0) - f(\mU^\star)\leq \frac{\Delta_1}{2}$.

\end{proof}

\section{Proof of \Cref{Theorem: global optimization of one-layer quantum circuit}}
\label{Proof of global optimization}

We restate the objective function
\begin{eqnarray}
\label{objective function appendix}
f(\mU) = \trace(\mH \mU \vrho_0 \mU^\dagger) = \langle\vphi_0|\mU^\dagger\mH \mU |\vphi_0 \rangle.
\end{eqnarray}
as defined in \eqref{loss function of one layer}, and proceed to characterize its critical points. A unitary $\wh{\mU}_k$ is a critical point if and only if the Riemannian gradient vanishes:
\begin{eqnarray}
\label{first-order Riemannian gradient}
\text{grad}_{\mU}f(\mU) =\mH \mU \vrho_0 - \mU\text{sym}({\mU}^\dagger\mH\mU\vrho_0) = {\bm 0}.
\end{eqnarray}
which is equivalent to the commutation relation $\wh{\mU}_k^\dagger\mH\wh{\mU}_k\vrho_0 = \vrho_0 \wh{\mU}_k^\dagger\mH\wh{\mU}_k$. Since $\vrho_0 = |\vphi_0 \rangle \langle\vphi_0| $ is a rank-one projector, acting on $|\vphi_0 \rangle$ from the right yields $\wh{\mU}_k^\dagger \mH \wh{\mU}_k |\vphi_0 \rangle =\langle\vphi_0| \wh{\mU}_k^\dagger \mH \wh{\mU}_k |\vphi_0 \rangle|\vphi_0 \rangle$, showing that $|\vphi_0\rangle$ must be an eigenvector of $\wh{\mU}_k^\dagger\mH\wh{\mU}_k$.
Substituting the spectral decomposition $\mH = \sum_{j=0}^{D-1} E_j\,|\vpsi_j\rangle\langle\vpsi_j|$ gives
\begin{eqnarray}
    \sum_{j=0}^{D-1} E_j
    \left\langle\vpsi_j\big|\wh{\mU}_k|\vphi_0\rangle\,\wh{\mU}_k^\dagger|\vpsi_j\right\rangle
    \;=\;
    \bigg(\sum_{j=0}^{D-1} E_j\,
    \big|\langle\vpsi_j|\wh{\mU}_k|\vphi_0\rangle\big|^2\bigg)|\vphi_0\rangle,
\end{eqnarray}
which holds if and only if $\wh{\mU}_k|\vphi_0\rangle$ is itself an eigenvector of $\mH$.
Consequently, the critical points are precisely those unitaries satisfying
\begin{eqnarray}
\label{first-order condition}
    \wh{\mU}_k\,|\vphi_0\rangle = |\vpsi_k\rangle, k = 0, \dots, D-1.
\end{eqnarray}

Next, we characterize the Riemannian Hessian at each critical point.
According to \cite[Proposition~1]{lai2026quantum}, the Riemannian Hessian of $f$
at $\mU$ in the direction $\mOmega\mU \in \text{T}_{\mU}\calU(D)$ is given by
\begin{eqnarray}
\label{second-order Riemannian Hessian matrix}
    \text{Hess}f(\mU)[\mOmega\mU] :=  \frac{1}{2} ([\mH, [\mOmega, \mU\vrho_0\mU^\dagger ]  ] + [[\mH, \mOmega ], \mU\vrho_0\mU^\dagger ]  )\mU,
\end{eqnarray}
where $[\mA,\mB] := \mA\mB - \mB\mA$ denotes the matrix commutator and $\mOmega \in \mathfrak{u}(D)$ is a skew-Hermitian matrix. We now evaluate \eqref{second-order Riemannian Hessian matrix} at the critical point
$\wh{\mU}_k$ satisfying $\wh{\mU}_k|\vphi_0\rangle = |\vpsi_k\rangle$.
Define the conjugated state
\begin{equation}
    \wt{\vrho}_k
    \;:=\;
    \wh{\mU}_k\,\vrho_0\,\wh{\mU}_k^\dagger
    \;=\; |\vpsi_k\rangle\langle\vpsi_k|,
\end{equation}
where the second equality follows from the critical-point condition
\eqref{first-order condition}.
Expanding the matrix commutators in \eqref{second-order Riemannian Hessian matrix},
we obtain
\begin{align}
    \bigl[\mH,\bigl[\mOmega,\wt{\vrho}_k\bigr]\bigr]
    &= \mH\mOmega\wt{\vrho}_k - \mH\wt{\vrho}_k\mOmega
      - \mOmega\wt{\vrho}_k\mH + \wt{\vrho}_k\mOmega\mH, \\
    \bigl[\bigl[\mH,\mOmega\bigr],\wt{\vrho}_k\bigr]
    &= \mH\mOmega\wt{\vrho}_k - \mOmega\mH\wt{\vrho}_k
      - \wt{\vrho}_k\mH\mOmega + \wt{\vrho}_k\mOmega\mH.
\end{align}
Summing and dividing by $2$ yields
\begin{eqnarray}
\label{hessian expanded}
    \text{Hess}\,f(\wh{\mU}_k)[\mOmega\wh{\mU}_k]
    \;=\;
    \Bigl(
        \mH\mOmega\wt{\vrho}_k
        + \wt{\vrho}_k\mOmega\mH
        - \tfrac{1}{2}\mH\wt{\vrho}_k\mOmega
        - \tfrac{1}{2}\mOmega\mH\wt{\vrho}_k
        - \tfrac{1}{2}\wt{\vrho}_k\mH\mOmega
        - \tfrac{1}{2}\mOmega\wt{\vrho}_k\mH
    \Bigr)\wh{\mU}_k.
\end{eqnarray}
The bilinear form of the Riemannian Hessian can now be computed as follows:
\begin{eqnarray}
\label{hessian expanded bilinear}
    \text{Hess}\,f(\wh{\mU}_k)[\mOmega\wh{\mU}_k,\mOmega\wh{\mU}_k] &\!\!\!\!=\!\!\!\!& \bigl\langle \mOmega\wh{\mU}_k,\,
    \text{Hess}\,f(\wh{\mU}_k)[\mOmega\wh{\mU}_k]
    \bigr\rangle\nonumber\\
    &\!\!\!\!=\!\!\!\!& \trace\bigl((\mOmega\wh{\mU}_k)^\dagger\,
    \text{Hess}\,f(\wh{\mU}_k)[\mOmega\wh{\mU}_k]\bigr)\nonumber\\
    &\!\!\!\!=\!\!\!\!& \trace(\mOmega^\dagger\,\mX),
\end{eqnarray}
where, using the unitarity of $\wh{\mU}_k$ and the cyclic property of
the trace, we define
\begin{eqnarray}
    \mX := \mH\mOmega\wt{\vrho}_k
         + \wt{\vrho}_k\mOmega\mH
         - \tfrac{1}{2}\mH\wt{\vrho}_k\mOmega
         - \tfrac{1}{2}\mOmega\mH\wt{\vrho}_k
         - \tfrac{1}{2}\wt{\vrho}_k\mH\mOmega
         - \tfrac{1}{2}\mOmega\wt{\vrho}_k\mH.
\end{eqnarray}
Before computing \eqref{hessian expanded bilinear}, we expand all operators in the eigenbasis $\{|\vpsi_j\rangle\}$ of
$\mH$. Specifically, we use:
\begin{itemize}
    \item Spectral decomposition: $\mH = \sum_j E_j|\vpsi_j\rangle\langle\vpsi_j|$, so $\mH$ is diagonal with entries $H_{jl} = E_j\delta_{jl}$;
    \item Rank-one projector: $\wt{\vrho}_k = |\vpsi_k\rangle\langle\vpsi_k|$, with entries $(\wt{\vrho}_k)_{jl} = \delta_{jk}\delta_{lk}$;
    \item Matrix elements of $\mOmega$: $\Omega_{jl} := \langle\vpsi_j|\mOmega|\vpsi_l\rangle$, satisfying $\Omega_{jl} = -\Omega^{*}_{lj}$ (skew-Hermitian), so that $|\Omega_{jl}|^2 = |\Omega_{lj}|^2$ for all $j,l$.
\end{itemize}
Here $\delta_{jl}$ denotes the Kronecker delta, i.e., $\delta_{jl} = 1$ if $j=l$ and $\delta_{jl}=0$ otherwise. A key equation is used repeatedly. When $\mA$ and $\mB$ are
diagonal in the eigenbasis with entries $A_{jj}$ and $B_{ll}$,
\begin{eqnarray}
\label{eq:trace_diag}
    \trace(\mOmega^\dagger\mA\mOmega\mB)
    = \sum_{j,l} A_{jj}\,B_{ll}\,|\Omega_{jl}|^2.
\end{eqnarray}
Strictly speaking, this simplification is valid when at least one of the diagonal matrices is rank-one, as is the case for $\wt{\vrho}_k$. In that case, $(\wt{\vrho}_k)_{jl} = \delta_{jk}\delta_{lk}$ which eliminates all off-diagonal contributions and reduces each sum to the corresponding $k$-th index. We now evaluate each term in $\trace(\mOmega^\dagger\mX)$ in turn.

\begin{itemize}
\item{\textbf{Terms 1 and 2:} $\trace(\mOmega^\dagger\mH\mOmega\wt{\vrho}_k)$ and $\trace(\mOmega^\dagger\wt{\vrho}_k\mOmega\mH)$.} Applying \eqref{eq:trace_diag} with $\mA = \mH$ (diagonal, $A_{jj}=E_j$)
and $\mB = \wt{\vrho}_k$ (diagonal, $B_{ll}=\delta_{lk}$):
\begin{eqnarray}
\label{first term}
    \trace(\mOmega^\dagger\mH\mOmega\wt{\vrho}_k)
    = \sum_{j,l} E_j\,\delta_{lk}\,|\Omega_{jl}|^2
     = \sum_j E_j\,|\Omega_{jk}|^2.
\end{eqnarray}
Similarly, we have
\begin{eqnarray}
\label{second term}
    \trace(\mOmega^\dagger\wt{\vrho}_k\mOmega\mH) = \sum_{j,l} \delta_{jk} E_l |\Omega_{jl} |^2 = \sum_{l}  E_l |\Omega_{kl} |^2 = \sum_j E_j\,|\Omega_{jk}|^2.
\end{eqnarray}

\item{\textbf{Terms 3 and 5:} $-\tfrac{1}{2}\trace(\mOmega^\dagger\mH\wt{\vrho}_k\mOmega)$ and $-\tfrac{1}{2}\trace(\mOmega^\dagger\wt{\vrho}_k\mH\mOmega)$.} Since $\mH\wt{\vrho}_k = E_k\wt{\vrho}_k$ (as $|\vpsi_k\rangle$ is an eigenvector of $\mH$), we have
\begin{eqnarray}
\label{third term}
    -\tfrac{1}{2}\trace(\mOmega^\dagger\mH\wt{\vrho}_k\mOmega) &\!\!\!\! = \!\!\!\!& -\tfrac{E_k}{2}\trace(\mOmega^\dagger\wt{\vrho}_k\mOmega) = -\tfrac{E_k}{2}\sum_{j,l}\delta_{jk}\,|\Omega_{jl}|^2\nonumber\\
    &\!\!\!\! = \!\!\!\!& -\tfrac{E_k}{2}\sum_l|\Omega_{kl}|^2 = -\tfrac{E_k}{2}\sum_j|\Omega_{jk}|^2,
\end{eqnarray}
where the second equation uses \eqref{eq:trace_diag} with diagonal matrices $\wt{\vrho}_k$ and $\mId$. In addition, because of $\wt{\vrho}_k\mH = E_k\wt{\vrho}_k$, we also have
\begin{eqnarray}
\label{fifth term}
    -\tfrac{1}{2}\trace(\mOmega^\dagger\wt{\vrho}_k\mH\mOmega) = -\tfrac{E_k}{2}\trace(\mOmega^\dagger\wt{\vrho}_k\mOmega) = -\tfrac{E_k}{2}\sum_l|\Omega_{kl}|^2 = -\tfrac{E_k}{2}\sum_j|\Omega_{jk}|^2.
\end{eqnarray}

\item{\textbf{Terms 4 and 6:} $-\tfrac{1}{2}\trace(\mOmega^\dagger\mOmega\mH\wt{\vrho}_k)$ and $-\tfrac{1}{2}\trace(\mOmega^\dagger\mOmega\wt{\vrho}_k\mH)$.} Since  $\mH\wt{\vrho}_k = E_k\wt{\vrho}_k$ and $\wt{\vrho}_k\mH = E_k\wt{\vrho}_k$, we have
\begin{eqnarray}
\label{fourth term}
    -\tfrac{1}{2}\trace(\mOmega^\dagger\mOmega\mH\wt{\vrho}_k)
    = -\tfrac{E_k}{2}\trace(\mOmega^\dagger\mOmega\wt{\vrho}_k) = -\frac{E_k}{2}\sum_{j,l}\delta_{lk}\,|\Omega_{jl}|^2  = -\frac{E_k}{2}\sum_j|\Omega_{jk}|^2,
\end{eqnarray}
and
\begin{eqnarray}
\label{sixth term}
-\tfrac{1}{2}\trace(\mOmega^\dagger\mOmega\wt{\vrho}_k\mH)
    = -\tfrac{E_k}{2}\trace(\mOmega^\dagger\mOmega\wt{\vrho}_k)
    = -\tfrac{E_k}{2}\sum_j|\Omega_{jk}|^2.
\end{eqnarray}

\end{itemize}

By substituting all the previously derived equations into \eqref{hessian expanded bilinear}, we obtain
\begin{eqnarray}
\label{Bilinear form of Hessian final conclusion}
\text{Hess}\,f(\wh{\mU}_k)[\mOmega\wh{\mU}_k,\mOmega\wh{\mU}_k] &\!\!\!\! = \!\!\!\!& 2\sum_j E_j|\Omega_{jk}|^2
     - 2E_k\sum_j|\Omega_{jk}|^2
    \nonumber\\
    &\!\!\!\! = \!\!\!\!& 2\sum_{l=0}^{D-1}(E_l - E_k)\,|\Omega_{lk}|^2.
\end{eqnarray}
The $l=k$ term vanishes identically since $E_k - E_k = 0$, so \eqref{Bilinear form of Hessian final conclusion} depends only on the off-diagonal elements $\{\Omega_{lk}\}_{l\neq k}$. The sign of each summand $(E_l - E_k)|\Omega_{lk}|^2$ is determined solely by the relative ordering of the eigenvalues $E_l$ and $E_k$, which allows us to classify the nature of every critical point.
\begin{itemize}
    \item \textbf{Ground states} ($k = 0, \dots, s-1$): For these critical points, $E_k = E_0$ is the lowest eigenvalue, so $E_l - E_k \geq \Delta_1 > 0$ for all $l \geq s$, and $E_l - E_k = 0$ for all $l = 0,\ldots,s-1$ by the degeneracy assumption $E_0 = \cdots = E_{s-1}$. Hence every term in \eqref{Bilinear form of Hessian final conclusion} is non-negative, and $\text{Hess}\,f(\wh{\mU}_k)$ is positive semidefinite. The zero directions correspond to perturbations within the degenerate ground-state subspace, which do not change the energy to second order. Since $f(\wh{\mU}_k) = E_k = E_0 \leq f(\mU)$ for all $\mU\in\mathcal{U}(D)$ by the variational principle, every $\wh{\mU}_k$ with $k = 0,\ldots,s-1$ is a \emph{global minimum}.

    \item \textbf{Excited states} ($k \geq s$): For these critical points, there exist indices $l < k$ satisfying $E_l < E_k$, so $E_l - E_k < 0$. Choosing $\mOmega$ such that $\Omega_{lk} \neq 0$ for any such $l$ while all other off-diagonal entries vanish, \eqref{Bilinear form of Hessian final conclusion} yields
    \begin{equation}
        \text{Hess}\,f(\wh{\mU}_k)[\mOmega\wh{\mU}_k,\mOmega\wh{\mU}_k]
        = 2(E_l - E_k)|\Omega_{lk}|^2 < 0.
    \end{equation}
    Hence the Hessian has at least one strictly negative eigenvalue, and $\wh{\mU}_k$ is a \emph{strict saddle point}.
\end{itemize}

\section{Proof of \Cref{Theorem: convergence analysis of multi circuit}}
\label{Proof of convergence analysis of multi circuit appendix}

Before analyzing the convergence rate of Riemannian gradient descent, we establish that the Riemannian gradient $\text{grad}_{\mU_h}f$ is $C$-Lipschitz continuous. Specifically, for any $(\mU_1,\ldots,\mU_N)$ and $(\mV_1,\ldots,\mV_N)$ with $\mU_h, \mV_h \in \calU(D)$ for all $h \in [N]$, we have
\begin{eqnarray}
\label{Lipschitz continuous of multi layer quantum circuit}
&\!\!\!\!\!\!\!\!& \|\text{grad}_{\mU_h}g(\mV_1,\ldots,\mV_N) - \text{grad}_{\mU_h}g(\mU_1,\ldots,\mU_N)\|_F\nonumber\\
&\!\!\!\! = \!\!\!\!&  \|\nabla_{\mU_h}g(\mU_1,\ldots,\mU_N) - \nabla_{\mU_h}g(\mV_1,\ldots,\mV_N)\nonumber\\
&\!\!\!\!\!\!\!\!& - \mU_h\text{sym}({\mU_h}^\dagger\nabla_{\mU_h}g(\mU_1,\ldots,\mU_N)) + \mV_h\text{sym}({\mV_h}^\dagger\nabla_{\mU_h}g(\mV_1,\ldots,\mV_N))\|_F\nonumber\\
&\!\!\!\!\leq\!\!\!\!&\| \nabla_{\mU_h}g(\mU_1,\ldots,\mU_N) - \nabla_{\mU_h}g(\mV_1,\ldots,\mV_N)  \|_F \nonumber\\
&\!\!\!\!\!\!\!\!& + \|\mU_h\text{sym}({\mU_h}^\dagger\nabla_{\mU_h}g(\mU_1,\ldots,\mU_N)) - \mV_h\text{sym}({\mV_h}^\dagger\nabla_{\mU_h}g(\mV_1,\ldots,\mV_N))\|_F.
\end{eqnarray}

First, we can derive
{\small \begin{eqnarray}
\label{Lipschitz continuous of multi layer quantum circuit first term}
&\!\!\!\!\!\!\!\!& \| \nabla_{\mU_h}g(\mU_1,\ldots,\mU_N) - \nabla_{\mU_h}g(\mV_1,\ldots,\mV_N)  \|_F\nonumber\\
&\!\!\!\! = \!\!\!\!& \| \mU_{h-1}^\dagger\cdots\mU_1^\dagger\mH \mU_1\cdots \mU_N\vrho_0 \mU_{N}^\dagger\cdots\mU_{h+1}^\dagger - \mV_{h-1}^\dagger\cdots\mV_1^\dagger\mH \mV_1\cdots \mV_N\vrho_0 \mV_{N}^\dagger\cdots\mV_{h+1}^\dagger  \|_F\nonumber\\
&\!\!\!\! = \!\!\!\!& \bigg\|\sum_{a=1}^{h-1} \mV_{h-1}^\dagger\cdots \mV_{a+1}^\dagger(\mU_a - \mV_a )^\dagger \mU_{a-1}^\dagger\cdots \mU_1^\dagger\mH \mU_1\cdots \mU_N\vrho_0 \mU_{N}^\dagger\cdots\mU_{h+1}^\dagger\nonumber\\
&\!\!\!\!\!\!\!\!& + \mV_{h-1}^\dagger\cdots \mV_1^\dagger\mH \mU_1\cdots \mU_N\vrho_0 \mU_{N}^\dagger\cdots\mU_{h+1}^\dagger -  \mV_{h-1}^\dagger\cdots\mV_1^\dagger\mH \mV_1\cdots \mV_N\vrho_0 \mV_{N}^\dagger\cdots\mV_{h+1}^\dagger   \bigg\|_F\nonumber\\
&\!\!\!\! = \!\!\!\!& \bigg\|\sum_{a=1}^{h-1} \mV_{h-1}^\dagger\cdots \mV_{a+1}^\dagger(\mU_a - \mV_a )^\dagger \mU_{a-1}^\dagger\cdots \mU_1^\dagger\mH \mU_1\cdots \mU_N\vrho_0 \mU_{N}^\dagger\cdots\mU_{h+1}^\dagger\nonumber\\
&\!\!\!\!\!\!\!\!& +\sum_{b=1}^{N} \mV_{h-1}^\dagger\cdots \mV_1^\dagger\mH \mV_1\cdots \mV_{b-1}(\mU_b - \mV_b)\mU_{b+1}\cdots \mU_N  \vrho_0 \mU_{N}^\dagger\cdots\mU_{h+1}^\dagger\nonumber\\
&\!\!\!\!\!\!\!\!& + \mV_{h-1}^\dagger\cdots \mV_1^\dagger\mH \mV_1\cdots \mV_N\vrho_0 \mU_{N}^\dagger\cdots\mU_{h+1}^\dagger -  \mV_{h-1}^\dagger\cdots\mV_1^\dagger\mH \mV_1\cdots \mV_N\vrho_0 \mV_{N}^\dagger\cdots\mV_{h+1}^\dagger \bigg\|_F\nonumber\\
&\!\!\!\! = \!\!\!\!& \bigg\|\sum_{a=1}^{h-1} \mV_{h-1}^\dagger\cdots \mV_{a+1}^\dagger(\mU_a - \mV_a )^\dagger \mU_{a-1}^\dagger\cdots \mU_1^\dagger\mH \mU_1\cdots \mU_N\vrho_0 \mU_{N}^\dagger\cdots\mU_{h+1}^\dagger\nonumber\\
&\!\!\!\!\!\!\!\!& +\sum_{b=1}^{N} \mV_{h-1}^\dagger\cdots \mV_1^\dagger\mH \mV_1\cdots \mV_{b-1}(\mU_b - \mV_b)\mU_{b+1}\cdots \mU_N  \vrho_0 \mU_{N}^\dagger\cdots\mU_{h+1}^\dagger\nonumber\\
&\!\!\!\!\!\!\!\!& + \sum_{c=h+1}^{N} \mV_{h-1}^\dagger\cdots\mV_1^\dagger\mH \mV_1\cdots \mV_N\vrho_0 \mV_{N}^\dagger\cdots \mV_{c+1}^\dagger(\mU_c - \mV_c)^\dagger\mU_{c-1}\cdots\mU_{h+1}^\dagger\bigg\|_F\nonumber\\
&\!\!\!\! \leq \!\!\!\!& \sum_{a=1}^{h-1} \|\mU_a - \mV_a\|_F \|\mH\| + \sum_{b=1}^{N} \|\mU_b - \mV_b\|_F \|\mH\| + \sum_{c=h+1}^{N} \|\mU_c - \mV_c\|_F \|\mH\|\nonumber\\
&\!\!\!\! = \!\!\!\!&2\|\mH\|\sum_{b=1}^{N} \|\mU_b - \mV_b\|_F - \|\mH\| \|\mU_h - \mV_h\|_F.
\end{eqnarray}}

In addition, we have
\begin{eqnarray}
\label{Lipschitz continuous of multi layer quantum circuit second term}
&\!\!\!\!\!\!\!\!&\|\mU_h\text{sym}({\mU_h}^\dagger\nabla_{\mU_h}g(\mU_1,\ldots,\mU_N)) - \mV_h\text{sym}({\mV_h}^\dagger\nabla_{\mU_h}g(\mV_1,\ldots,\mV_N))\|_F\nonumber\\
&\!\!\!\!\leq\!\!\!\!& \|\mU_h\text{sym}({\mU_h}^\dagger\nabla_{\mU_h}g(\mU_1,\ldots,\mU_N)) - \mV_h\text{sym}({\mU_h}^\dagger\nabla_{\mU_h}g(\mU_1,\ldots,\mU_N))\|_F\nonumber\\
&\!\!\!\!\!\!\!\!& + \|\mV_h\text{sym}({\mU_h}^\dagger\nabla_{\mU_h}g(\mU_1,\ldots,\mU_N)) - \mV_h\text{sym}({\mV_h}^\dagger\nabla_{\mU_h}g(\mV_1,\ldots,\mV_N))\|_F\nonumber\\
&\!\!\!\!\leq\!\!\!\!& \|\mU_h - \mV_h \|_F \|\text{sym}({\mU_h}^\dagger\nabla_{\mU_h}g(\mU_1,\ldots,\mU_N))\|\nonumber\\
&\!\!\!\!\!\!\!\!& + \|\text{sym}({\mU_h}^\dagger\nabla_{\mU_h}g(\mU_1,\ldots,\mU_N)) - \text{sym}({\mV_h}^\dagger\nabla_{\mU_h}g(\mV_1,\ldots,\mV_N))\|_F\nonumber\\
&\!\!\!\!\leq\!\!\!\!& \frac{1}{2}\|\mU_h - \mV_h \|_F\|\mU_h^\dagger\cdots\mU_1^\dagger\mH \mU_1\cdots \mU_N\vrho_0 \mU_{N}^\dagger\cdots\mU_{h+1}^\dagger + \mU_{h+1}\cdots\mU_{N}\vrho_0 \mU_N^\dagger \cdots\mU_1^\dagger \mH\mU_1 \cdots \mU_h \|\nonumber\\
&\!\!\!\!\!\!\!\!& + \|{\mU_h}^\dagger\nabla_{\mU_h}g(\mU_1,\ldots,\mU_N) -  {\mV_h}^\dagger\nabla_{\mU_h}g(\mV_1,\ldots,\mV_N) \|_F\nonumber\\
&\!\!\!\!\leq\!\!\!\!& \|\mH\| \|\mU_h - \mV_h \|_F + \| (\mU_h - \mV_h)^\dagger\nabla_{\mU_h}g(\mU_1,\ldots,\mU_N)  \|_F\nonumber\\
&\!\!\!\!\!\!\!\!& + \|\mV_h^\dagger(\nabla_{\mU_h}g(\mU_1,\ldots,\mU_N) - \nabla_{\mU_h}g(\mV_1,\ldots,\mV_N) ) \|_F\nonumber\\
&\!\!\!\!\leq\!\!\!\!& 2\|\mH\|\sum_{b=1}^{N} \|\mU_b - \mV_b\|_F + \|\mH\| \|\mU_h - \mV_h\|_F,
\end{eqnarray}
where the last line uses $\|\nabla_{\mU_h}g(\mU_1,\ldots,\mU_N)\|\leq \|\mH\|$ and \eqref{Lipschitz continuous of multi layer quantum circuit first term}.

By combining \eqref{Lipschitz continuous of multi layer quantum circuit first term} and \eqref{Lipschitz continuous of multi layer quantum circuit second term}, \eqref{Lipschitz continuous of multi layer quantum circuit} can be expressed as follows:
\begin{eqnarray}
\label{Lipschitz continuous of multi layer quantum circuit final conclusion}
 \|\text{grad}_{\mU_h}g(\mV_1,\ldots,\mV_N) - \text{grad}_{\mU_h}g(\mU_1,\ldots,\mU_N)\|_F &\!\!\!\! \leq \!\!\!\!&  4\|\mH\|\sum_{b=1}^{N} \|\mU_b - \mV_b\|_F\nonumber\\
&\!\!\!\! \leq \!\!\!\!&  4\sqrt{N}\|\mH\|\sqrt{\sum_{b=1}^{N} \|\mU_b - \mV_b\|_F^2}.
\end{eqnarray}
Hence, by applying \Cref{lemma:Riemannian descent conclusion} with $L = \frac{8N+1}{2}\|\mH\|$, we  obtain
\begin{eqnarray}
\label{descent direction of multi layer quantum circuit final conclusion}
 g(\mU_1^{t+1},\ldots,\mU_N^{t+1})
    &\!\!\!\!\leq\!\!\!\!& g(\mU_1^t,\ldots,\mU_N^t)
    - \mu(1 - L\mu)\sum_{h=1}^{N}\| \text{grad}_{\mU_h}g(\mU_1^t,\ldots,\mU_N^t) \|_F^2\nonumber\\
    &\!\!\!\!\leq\!\!\!\!&g(\mU_1^t,\ldots,\mU_N^t)
    - \frac{\mu}{2}\sum_{h=1}^{N}\| \text{grad}_{\mU_h}g(\mU_1^t,\ldots,\mU_N^t) \|_F^2,
\end{eqnarray}
where $\mu\leq \frac{1}{2L}$ is chosen. In the following, we derive a lower bound for the squared Frobenius norm of the Riemannian gradient $ \text{grad}_{\mU_h}g(\mU_1,\ldots,\mU_N) $. Let $\mA_h = \mU_1\cdots \mU_{h-1}$, $\mB_h = \mU_{h+1}\cdots \mU_N$, $\wt\mH_h = \mA_h^\dagger \mH \mA_h$ and $\wt\vrho_h = \mB_h\vrho_0\mB_h^\dagger$. With these definitions, the gradient $\nabla_{\mU_h}g(\mU_1,\ldots,\mU_N)$ can be expressed as
\begin{eqnarray}
\label{Another expression of gradient of multi layers}
\nabla_{\mU_h}g(\mU_1,\ldots,\mU_N) = \mA_h^\dagger \mH \mA_h \mU_h \mB_h \vrho_0 \mB_h^\dagger = \wt\mH_h \mU_h \wt\vrho_h.
\end{eqnarray}
We define $|\vphi_h \> = \mU_h\mB_h |\vphi_0 \>$. Using the approach outlined in \eqref{expansion of the Riemannian gradient}, it follows that
\begin{eqnarray}
\label{expansion of the Riemannian gradient multilayer}
\|\nabla_{\mU_h}g(\mU_1,\ldots,\mU_N) \|_F^2 =  \frac{\langle\vphi_h| \wt\mH_h^2  |\vphi_h \rangle - (g(\mU_1,\ldots,\mU_N))^2}{2} = \frac{1}{2}F_{|\vphi_h \rangle}(\wt\mH).
\end{eqnarray}
Therefore, we can further obtain
\begin{eqnarray}
\label{lower bound of the summation of Riemannian gradient multilayer}
\sum_{h=1}^{N}\| \text{grad}_{\mU_h}g(\mU_1,\ldots,\mU_N) \|_F^2 &\!\!\!\!\geq\!\!\!\!& \|\nabla_{\mU_1}g(\mU_1,\ldots,\mU_N) \|_F^2\nonumber\\
&\!\!\!\!=\!\!\!\!& \frac{1}{2}F_{|\vphi_1 \rangle}(\wt\mH)\nonumber\\
&\!\!\!\!=\!\!\!\!&  \frac{\langle\vphi_h|\mU_N^\dagger\cdots \mU_1^\dagger \mH^2 \mU_1\cdots \mU_N  |\vphi_h \rangle - (g(\mU_1,\ldots,\mU_N))^2}{2}\nonumber\\
&\!\!\!\!\geq\!\!\!\!&\frac{\Delta_1^2}{16\|\mH\|}(g(\mU_1,\dots,\mU_N) - g(\mU_1^\star,\dots,\mU_N^\star) ),
\end{eqnarray}
where the last line follows the same derivation of \eqref{lower bound of F in the single layer} when $g(\mU_1,\dots,\mU_N) - g(\mU_1^\star,\dots,\mU_N^\star)\leq \frac{\Delta_1}{2}$.

Finally, using \eqref{descent direction of multi layer quantum circuit final conclusion}, we have
\begin{eqnarray}
\label{descent direction of multi layer quantum circuit final conclusion1}
 g(\mU_1^{t+1},\ldots,\mU_N^{t+1})- g(\mU_1^\star,\dots,\mU_N^\star)
    &\!\!\!\!\leq\!\!\!\!& \bigg(1 -\frac{\Delta_1^2\mu}{32\|\mH\|}\bigg)(g(\mU_1^t,\dots,\mU_N^t) - g(\mU_1^\star,\dots,\mU_N^\star) )\nonumber\\
    &\!\!\!\!\leq\!\!\!\!&\bigg(1 -\frac{\Delta_1^2\mu}{32\|\mH\|}\bigg)^{t+1}(g(\mU_1^0,\dots,\mU_N^0) - g(\mU_1^\star,\dots,\mU_N^\star) ),
\end{eqnarray}
where we assume that $g(\mU_1^0,\dots,\mU_N^0) - g(\mU_1^\star,\dots,\mU_N^\star)\leq \frac{\Delta_1}{2}$.

\section{Proof of Eq.~\eqref{necessary condition of required layers main paper}}
\label{proof of dimension of multilayers}

We consider a layered variational ansatz of the form
\begin{eqnarray}
\mU(\vtheta) = \mU_1(\theta_1)\cdots \mU_N(\theta_N),
\end{eqnarray}
where each layer $\mU_h(\theta_h)$ is a smooth parametrization of an ansatz manifold $\mathcal A\subseteq \mathrm{SU}(D)$ with parameter $\theta_h\in\mathbb R^p$, where $p=\dim(\mathcal A)$ denotes the intrinsic dimension of the ansatz manifold.

\paragraph{Step 1: Product parameter space.}
Let
\begin{eqnarray}
\vtheta := (\theta_1,\ldots,\theta_N),
\end{eqnarray}
where each layer parameter $\theta_h\in\setR^p$. The full parameter space is therefore the Cartesian product
\begin{eqnarray}
\Theta := \underbrace{\setR^p\times\cdots\times\setR^p}_{N\text{ times}}
\cong \setR^{Np}.
\end{eqnarray}
By the product manifold theorem \cite[Example~1.8]{lee2003smooth}, $\Theta$ inherits a smooth manifold structure as the product of $N$ Euclidean manifolds, and its dimension is
\begin{eqnarray}
\dim(\Theta)=Np.
\end{eqnarray}

\paragraph{Step 2: Definition of the variational map.}
We define the variational map
\begin{eqnarray}
\mPhi:\Theta\to \mathrm{SU}(D)
\end{eqnarray}
by
\begin{eqnarray}
\mPhi(\theta_1,\ldots,\theta_N) := \mU_1(\theta_1)\cdots \mU_N(\theta_N).
\end{eqnarray}
Since each $\mU_h:\setR^p\to \mathrm{SU}(D)$ is smooth by assumption and the group multiplication in $\mathrm{SU}(D)$ is smooth (as $\mathrm{SU}(D)$ is a Lie group), the composite map $\mPhi$ is smooth.

\paragraph{Step 3: Differential and rank bound.}
To characterize how infinitesimal perturbations in the parameter space propagate to the unitary manifold, we study the differential of the variational map $\mPhi$. Since $\mPhi$ is smooth, its differential (also called the pushforward) at any point $\vtheta\in\Theta$ is a linear map between tangent spaces\footnote{The \emph{tangent space} $T_x M$ of a smooth manifold $M$ at a point $x$ is the vector space of all ``infinitesimal directions of motion'' at $x$. For Euclidean space $\Theta$, the tangent space at every point is canonically isomorphic to $\Theta$ itself. For the Lie group $\mathrm{SU}(D)$, the tangent space at the identity is the Lie algebra $\mathfrak{su}(D)$ (the space of traceless
anti-Hermitian $D\times D$ matrices), which has dimension $D^2-1$.}:
\begin{eqnarray}
\mathrm{d}\mPhi_{\vtheta} : T_{\vtheta}\Theta \cong \Theta  \to T_{\mPhi(\vtheta)}\mathrm{SU}(D),
\end{eqnarray}
where $\cong$ denotes a canonical vector space isomorphism\footnote{The symbol $\cong$ denotes an \emph{isomorphism}, i.e., a structure-preserving bijection. Here $T_{\vtheta}\Theta\cong\Theta$ means that the abstract tangent space and the concrete Euclidean space are isomorphic as vector spaces. They are technically distinct mathematical objects, but have identical algebraic structure.}. Concretely, the differential maps a tangent vector $\vv\in T_{\vtheta}\Theta\cong\Theta$ to the
tangent vector in $T_{\mPhi(\vtheta)}\mathrm{SU}(D)$ given by\footnote{This is the manifold generalization of the directional derivative from multivariable calculus. The vector $\mathrm{d}\mPhi_{\vtheta}(\vv)$ describes how the unitary $\mPhi(\vtheta)$ changes when the parameter $\vtheta$ is perturbed infinitesimally in the direction $\vv$.}
\begin{eqnarray}
\mathrm{d}\mPhi_{\vtheta}(\vv) = \lim_{t\to 0}\frac{\mPhi(\vtheta+t\vv)-\mPhi(\vtheta)}{t}.
\end{eqnarray}

The \emph{rank} of the differential is defined as the dimension of its image\footnote{For a linear map $A:V\to W$, the rank is $\mathrm{rank}(A):=\dim(\mathrm{Im}(A))$. It counts the number of linearly independent output directions. If $\mathrm{rank}(\mathrm{d}\mPhi_{\vtheta}) = Np$, all parameter directions produce independent changes in the unitary; if $\mathrm{rank}(\mathrm{d}\mPhi_{\vtheta}) < Np$, some parameter directions are redundant.}:
\begin{eqnarray}
\mathrm{rank}(\mathrm{d}\mPhi_{\vtheta}) := \dim\!\big(\mathrm{Im}(\mathrm{d}\mPhi_{\vtheta})\big).
\end{eqnarray}
By a standard result in linear algebra, the rank of any linear map $A:V\to W$ satisfies $\mathrm{rank}(A)\leq\dim(V)$. Applying this to $A=\mathrm{d}\mPhi_{\vtheta}$ with $V = \Theta$, we obtain
\begin{eqnarray}
\mathrm{rank}(\mathrm{d}\mPhi_{\vtheta})\leq \dim(\Theta) = Np \quad \text{for all } \vtheta\in\Theta.
\end{eqnarray}

\paragraph{Step 4: Reachable set and dimension bound.}
We define the \emph{reachable set} of the variational ansatz as the image of $\mPhi$:
\begin{eqnarray}
\mathcal{M}_N := \mPhi(\Theta) \subseteq \mathrm{SU}(D).
\end{eqnarray}

For any $\vtheta\in\Theta$, the differential $\mathrm{d}\mPhi_{\vtheta}$ is a linear map between finite-dimensional vector spaces, so its rank is bounded by the dimensions of both the domain and the codomain:
\begin{eqnarray}\label{eq:rank-min-bound}
\mathrm{rank}(\mathrm{d}\mPhi_{\vtheta})
\leq \min\!\big(\dim(\Theta),\;
\dim(\mathrm{SU}(D))\big)
= \min(Np,\;D^2-1).
\end{eqnarray}

At any point $\vtheta_0$ where the rank function $\vtheta\mapsto\mathrm{rank}(\mathrm{d}\mPhi_{\vtheta})$ is locally constant, the constant rank theorem \cite[Theorem~4.12]{lee2003smooth} implies that the image $\mPhi(\mV)$ of a sufficiently small neighborhood $\mV$ of $\vtheta_0$ is a smooth embedded submanifold of $\mathrm{SU}(D)$, and its dimension equals the rank:
\begin{eqnarray}\label{eq:local-dim-rank}
\dim_{\mathrm{loc}}(\mathcal{M}_N,\,\mPhi(\vtheta_0))
= \mathrm{rank}(\mathrm{d}\mPhi_{\vtheta_0}).
\end{eqnarray}
In particular, this holds at every regular point $\vtheta_0$ (where $\mathrm{rank}(\mathrm{d}\mPhi_{\vtheta_0}) = \max_{\vtheta}\mathrm{rank}(\mathrm{d}\mPhi_{\vtheta})$), since the rank is integer-valued and upper semicontinuous, and therefore locally constant at its maximum.

We define the dimension of the reachable set as the supremum of the local dimension\footnote{This definition avoids assuming \emph{a priori} that $\mathcal{M}_N$ carries a global smooth manifold structure.}:
\begin{eqnarray}\label{eq:dim-def}
\dim(\mathcal{M}_N)
:= \sup_{\mU\in\mathcal{M}_N}\,
\dim_{\mathrm{loc}}(\mathcal{M}_N,\,\mU).
\end{eqnarray}
Since every $\mU\in\mathcal{M}_N$ is of the form $\mU=\mPhi(\vtheta)$ for some $\vtheta$, and \eqref{eq:local-dim-rank} gives $\dim_{\mathrm{loc}}(\mathcal{M}_N,\,\mPhi(\vtheta))
=\mathrm{rank}(\mathrm{d}\mPhi_{\vtheta})$ wherever the rank is locally constant, the supremum in \eqref{eq:dim-def} is achieved at regular points and equals the maximal rank:
\begin{eqnarray}\label{eq:dim-equals-max-rank}
\dim(\mathcal{M}_N)
= \sup_{\vtheta\in\Theta}\,
\mathrm{rank}(\mathrm{d}\mPhi_\theta).
\end{eqnarray}

Combining \eqref{eq:dim-equals-max-rank} with the rank bound
\eqref{eq:rank-min-bound} yields
\begin{eqnarray}\label{eq:refined-bound}
\dim(\mathcal{M}_N) \leq \min(Np,\;D^2-1).
\end{eqnarray}

\paragraph{Step 5: Final bound.}
Combining the above, the dimension of the reachable set satisfies
\begin{eqnarray}
\dim(\mathcal{M}_N)\leq Np.
\end{eqnarray}
In particular, since $\dim(\mathrm{SU}(D)) = D^2 - 1$, a necessary condition for the ansatz to be \emph{universal} (i.e., $\mathcal{M}_N = \mathrm{SU}(D)$) is
\begin{eqnarray}
Np \geq D^2 - 1,
\qquad\text{i.e.,}\qquad
N \geq \left\lceil \frac{D^2 - 1}{p}\right\rceil.
\end{eqnarray}

\paragraph{Step 6: Structured target regimes.}
We note, however, that in many practical settings the target unitary does not require exploration of the entire unitary group, but rather lies within a structured subset
\begin{eqnarray}
\mathcal{U}_{\mathrm{target}} \subseteq \mathrm{SU}(D).
\end{eqnarray}
For example, this may correspond to symmetry-preserving unitaries, commuting unitary families, or other physically constrained transformations. Let
\begin{eqnarray}
d_{\mathrm{target}}
:=
\dim(\mathcal{U}_{\mathrm{target}})
\end{eqnarray}
denote the intrinsic dimension of the target unitary family. Since the reachable set of the variational ansatz satisfies
\begin{eqnarray}
\mathcal{M}_N = \mPhi(\Theta) \subseteq \mathcal{U}_{\mathrm{target}},
\end{eqnarray}
its dimension is bounded by both the parameter dimension and the target manifold dimension:
\begin{eqnarray}
\dim(\mathcal{M}_N)
\leq
\min(Np,\;d_{\mathrm{target}}).
\end{eqnarray}
Consequently, in order for the ansatz to be expressive enough to cover the target unitary family, it is necessary that
\begin{eqnarray}
Np \geq d_{\mathrm{target}},
\end{eqnarray}
which implies the depth lower bound
\begin{eqnarray}
N \geq \left\lceil \frac{d_{\mathrm{target}}}{p}\right\rceil.
\end{eqnarray}

\section{Proof of \Cref{Theorem: initialization design of multi circuit}}
\label{Proof of initialization requirement appendix}

We begin by computing the expectation in the single-unitary case under the proposed initialization $\mU^0 = e^{-i\theta\mP}$. Specifically, we have
\begin{eqnarray}
\label{loss function of one layer initlaizatoin}
f(\mU^0) &\!\!\!\!=\!\!\!\!&  \langle\vphi_0|{\mU^0}^\dagger\mH \mU^0 |\vphi_0 \rangle\nonumber\\
&\!\!\!\!=\!\!\!\!& \langle\vphi_0|(\cos(\theta)\mId + i\sin(\theta) \mP)\mH (\cos(\theta)\mId - i\sin(\theta) \mP) |\vphi_0 \rangle\nonumber\\
&\!\!\!\!=\!\!\!\!& \cos^2(\theta)\langle\vphi_0| \mH |\vphi_0 \rangle + \sin^2(\theta)\langle\vphi_0|\mP \mH \mP |\vphi_0 \rangle + i \sin(\theta)\cos(\theta)\langle\vphi_0|(\mP \mH - \mH\mP) |\vphi_0 \rangle.
\end{eqnarray}
Since $\E[\cos^2(\theta) ] =\frac{1+e^{-2\sigma^2}}{2}$, $\E[\sin^2(\theta) ] =\frac{1 - e^{-2\sigma^2}}{2}$ and $\E[\sin(\theta)\cos(\theta)] = 0$, we can derive
\begin{eqnarray}
\label{loss function of one layer initlaizatoin expectation}
\E[f(\mU^0)] =  \frac{1+e^{-2\sigma^2}}{2}\langle\vphi_0| \mH |\vphi_0 \rangle + \frac{1-e^{-2\sigma^2}}{2}\langle\vphi_0|\mP \mH \mP |\vphi_0 \rangle.
\end{eqnarray}

Next, we consider the product-unitary case. Define $|\vphi_h \> = \mU_h^0\cdots \mU_N^0|\vphi_0 \>$ for $h \in [N]$, and let $\alpha = \frac{1+e^{-2\sigma^2}}{2}$ and $\beta = \frac{1-e^{-2\sigma^2}}{2}$. Since $\theta_1$ is independent of $\{\theta_2,\dots,\theta_N\}$, it holds that
\begin{eqnarray}
\label{loss function of one layer initlaizatoin expectation theta1}
\E_{\theta_1}[g(\mU_1^0,\dots, \mU_N^0)] &\!\!\!\!=\!\!\!\!&  \E_{\theta_1}[ \< \vphi_2| {\mU_1^0}^\dagger \mH \mU_1^0| \vphi_2  \> ]\nonumber\\
&\!\!\!\!=\!\!\!\!&  \frac{1+e^{-2\sigma^2}}{2} \langle\vphi_2| \mH |\vphi_2 \rangle + \frac{1-e^{-2\sigma^2}}{2} \langle\vphi_2|\mP_1 \mH \mP_1 |\vphi_2 \rangle\nonumber\\
&\!\!\!\!=\!\!\!\!& \alpha \langle\vphi_2| \mH |\vphi_2 \rangle + \beta \langle\vphi_2|\mP_1 \mH \mP_1 |\vphi_2 \rangle\nonumber\\
&\!\!\!\!=\!\!\!\!& \langle\vphi_2| \mH_1 |\vphi_2 \rangle,
\end{eqnarray}
where the second equality follows from \eqref{loss function of one layer initlaizatoin expectation} and we define $\mH_1 = \alpha\mH + \beta\mP_1 \mH \mP_1$. By applying the same analysis and taking expectations with respect to $\theta_2,\dots,\theta_N$ sequentially, we obtain
\begin{eqnarray}
\label{loss function of one layer initlaizatoin expectation thetaN}
\E_{\theta_1,\dots,\theta_N}[g(\mU_1^0,\dots, \mU_N^0)]  = \langle\vphi_0| \mH_N |\vphi_0 \rangle,
\end{eqnarray}
where $\mH_k = \alpha\mH_{k-1} + \beta\mP_k\mH_{k-1}\mP_k, k=1,\dots, N$ with $\mH_0 = \mH$. Hence, we have
\begin{eqnarray}
\label{loss function of one layer initlaizatoin expectation thetaN final}
\E_{\theta_1,\dots,\theta_N}[g(\mU_1^0,\dots, \mU_N^0)] - g(\mU_1^\star,\dots, \mU_N^\star)  = \langle\vphi_0| \mH_N |\vphi_0 \rangle -  g(\mU_1^\star,\dots, \mU_N^\star).
\end{eqnarray}
Let $e_k = \langle\vphi_0| \mH_k |\vphi_0 \rangle - g(\mU_1^\star,\dots, \mU_N^\star)$. Then we have
\begin{eqnarray}
\label{expansion of ek}
e_k &\!\!\!\!=\!\!\!\!& \alpha e_{k-1} + \beta\big(\langle\vphi_0| \mP_k\mH_{k-1}\mP_k|\vphi_0 \rangle -  g(\mU_1^\star,\dots, \mU_N^\star)\big)\nonumber\\
&\!\!\!\!\leq\!\!\!\!& \alpha e_{k-1} + \beta\|\mP_k\mH_{k-1}\mP_k - g(\mU_1^\star,\dots, \mU_N^\star)\mId  \|\nonumber\\
&\!\!\!\!-\!\!\!\!& \alpha e_{k-1} + \beta\|\mH_{k-1} - g(\mU_1^\star,\dots, \mU_N^\star)\mId  \|.
\end{eqnarray}
We prove by induction that
\begin{eqnarray}
\label{expansion of ek bound}
\|\mH_{k} - g(\mU_1^\star,\dots, \mU_N^\star)\mId  \|\leq \|\mH - g(\mU_1^\star,\dots, \mU_N^\star)\mId\|.
\end{eqnarray}
For $k=0$, the inequality holds trivially. Assume that it holds for $k=\ell-1$. Then, for $k=\ell$, we have
\begin{eqnarray}
\label{expansion of ek bound induction}
\|\mH_{\ell} - g(\mU_1^\star,\dots, \mU_N^\star)\mId  \| &\!\!\!\!\leq\!\!\!\!& \alpha\|\mH_{\ell-1} - g(\mU_1^\star,\dots, \mU_N^\star)\mId \| + \beta\|\mP_k( \mH_{\ell-1} - g(\mU_1^\star,\dots, \mU_N^\star)\mId    )\mP_k   \|\nonumber\\
&\!\!\!\!\leq\!\!\!\!&\alpha\|\mH_{\ell-1} - g(\mU_1^\star,\dots, \mU_N^\star)\mId \| + \beta\| \mH_{\ell-1} - g(\mU_1^\star,\dots, \mU_N^\star)\mId  \|\nonumber\\
&\!\!\!\! = \!\!\!\!&\|\mH_{\ell-1} - g(\mU_1^\star,\dots, \mU_N^\star)\mId \|\nonumber\\
&\!\!\!\!\leq\!\!\!\!& \|\mH - g(\mU_1^\star,\dots, \mU_N^\star)\mId\|.
\end{eqnarray}
Therefore, we can derive
\begin{eqnarray}
\label{expansion of ek final}
&\!\!\!\!\!\!\!\!&\E_{\theta_1,\dots,\theta_N}[g(\mU_1^0,\dots, \mU_N^0)] - g(\mU_1^\star,\dots, \mU_N^\star)\nonumber\\
&\!\!\!\!=\!\!\!\!& e_N \nonumber\\
&\!\!\!\!\leq\!\!\!\!&   \alpha e_{N-1} + \beta\|\mH - g(\mU_1^\star,\dots, \mU_N^\star)\mId\|\nonumber\\
&\!\!\!\!\leq\!\!\!\!&  \alpha^N e_0 + \beta\|\mH - g(\mU_1^\star,\dots, \mU_N^\star)\mId\|\sum_{j=0}^{N-1}\alpha^j\nonumber\\
&\!\!\!\!=\!\!\!\!&\alpha^N e_0 + \beta\|\mH - g(\mU_1^\star,\dots, \mU_N^\star)\mId\| \frac{1-\alpha^N}{1-\alpha}\nonumber\\
&\!\!\!\!=\!\!\!\!&\alpha^N e_0 +  (1 - \alpha^N) \|\mH - g(\mU_1^\star,\dots, \mU_N^\star)\mId\|\nonumber\\
&\!\!\!\!=\!\!\!\!&\alpha^N (\langle\vphi_0| \mH |\vphi_0 \rangle - g(\mU_1^\star,\dots, \mU_N^\star) )+  (1 - \alpha^N) \|\mH - g(\mU_1^\star,\dots, \mU_N^\star)\mId\|\nonumber\\
&\!\!\!\!=\!\!\!\!&\alpha^N (\langle\vphi_0| \mH |\vphi_0 \rangle - g(\mU_1^\star,\dots, \mU_N^\star) )+  (1 - \alpha^N) (\|\mH\| - g(\mU_1^\star,\dots, \mU_N^\star)),
\end{eqnarray}
where the third equality uses $ \beta = 1-\alpha$.

Finally, we apply a concentration inequality to bound the gap $g(\mU_1^0,\dots, \mU_N^0) - g(\mU_1^\star,\dots, \mU_N^\star)$, where each $\mU_i^0 := \mU_i^0(\theta_i)$ is initialized via random parameters $(\theta_1,\dots, \theta_N )$. First, for any $(\theta_1,\dots, \theta_N )$ and $(\theta_1',\dots, \theta_N' )$, we have
\begin{eqnarray}
\label{Lipschi constant of g}
&\!\!\!\!\!\!\!\!&|g(\mU_1^0(\theta_1),\dots, \mU_N^0(\theta_N)) - g(\mU_1^\star,\dots, \mU_N^\star) - (g(\mU_1^0(\theta_1'),\dots, \mU_N^0(\theta_N')) - g(\mU_1^\star,\dots, \mU_N^\star))  |\nonumber\\
&\!\!\!\!=\!\!\!\!&\|\langle \vphi_0 | (\mU_N(\theta_N))^\dagger \cdots (\mU_1(\theta_1))^\dagger \mH \mU_1(\theta_1) \cdots \mU_N(\theta_N) | \vphi_0 \rangle\nonumber\\
&\!\!\!\!\!\!\!\!& - \langle \vphi_0 | (\mU_N(\theta_N'))^\dagger \cdots (\mU_1(\theta_1'))^\dagger \mH \mU_1(\theta_1') \cdots \mU_N(\theta_N') | \vphi_0 \rangle   \|_F\nonumber\\
&\!\!\!\!\leq\!\!\!\!&\| (\mU_N(\theta_N))^\dagger \cdots (\mU_1(\theta_1))^\dagger \mH \mU_1(\theta_1) \cdots \mU_N(\theta_N)  -  (\mU_N(\theta_N'))^\dagger \cdots (\mU_1(\theta_1'))^\dagger \mH \mU_1(\theta_1') \cdots \mU_N(\theta_N')    \|\nonumber\\
&\!\!\!\!\leq\!\!\!\!& 2 \|\mH\| \|\mU_1(\theta_1) \cdots \mU_N(\theta_N) - \mU_1(\theta_1') \cdots \mU_N(\theta_N')\|\nonumber\\
&\!\!\!\!\leq\!\!\!\!& 2 \|\mH\| \sum_{k=1}^{N}\|\mU_k(\theta_k) - \mU_k(\theta_k') \|\nonumber\\
&\!\!\!\!\leq\!\!\!\!& 2 \|\mH\| \sum_{k=1}^{N}|\theta_k - \theta_k' |\nonumber\\
&\!\!\!\!\leq\!\!\!\!& 2 \sqrt{N} \|\mH\| \sqrt{ \sum_{k=1}^{N}|\theta_k - \theta_k' |^2},
\end{eqnarray}
where the fourth inequality uses
\begin{eqnarray}
\label{bound of two unitary matrices}
&\!\!\!\!\!\!\!\!&\|\mU_k(\theta_k) - \mU_k(\theta_k') \|\nonumber\\
&\!\!\!\!=\!\!\!\!& \|\cos(\theta_k)\mId - i\sin(\theta_k)\mP_k - (\cos(\theta_k')\mId - i\sin(\theta_k')\mP_k)  \| \nonumber\\
&\!\!\!\!=\!\!\!\!& \max_{\lambda \in \{\pm 1\}} \big|(\cos(\theta_k) - \cos(\theta_k')) - i(\sin(\theta_k) - \sin(\theta_k'))\lambda \big| \nonumber\\
&\!\!\!\!=\!\!\!\!& \sqrt{(\cos(\theta_k) - \cos(\theta_k')   )^2 + (\sin(\theta_k) - \sin(\theta_k')   )^2 }\nonumber\\
&\!\!\!\!=\!\!\!\!& \sqrt{2 - 2\cos(\theta_k - \theta_k')  }\nonumber\\
&\!\!\!\!=\!\!\!\!&2\bigg|\sin\bigg( \frac{\theta_k - \theta_k'}{2}\bigg) \bigg|\nonumber\\
&\!\!\!\!\leq\!\!\!\!& |\theta_k - \theta_k' |,
\end{eqnarray}
in which we use that $\mP_k$ is Hermitian with eigenvalues $\pm 1$, and hence the spectral norm equals the maximum modulus of its eigenvalues.

Furthermore, based on \eqref{Lipschi constant of g} and the Gaussian concentration inequality in \cite[Eq.~(1)]{fresen2023variations}, we obtain that for any $t > 0$,
\begin{eqnarray}
\label{Gaussian concentration inequality1}
\P{ g(\mU_1^0(\theta_1),\dots, \mU_N^0(\theta_N)) - g(\mU_1^\star,\dots, \mU_N^\star) -  e_N > t }\leq 2 e^{-\frac{t^2}{8N\sigma^2\|\mH\|^2}},
\end{eqnarray}
Let $t = 2\|\mH\|\sigma\sqrt{2N\log(2/\delta) }$. Then, by \eqref{expansion of ek final}, with probability at least $1-\delta$, we have
\begin{eqnarray}
\label{upper bound of gap difference}
g(\mU_1^0(\theta_1),\dots, \mU_N^0(\theta_N)) - g(\mU_1^\star,\dots, \mU_N^\star) &\!\!\!\!\leq\!\!\!\!& \bigg(\frac{1+e^{-2\sigma^2}}{2}\bigg)^N (\langle\vphi_0| \mH |\vphi_0 \rangle - g(\mU_1^\star,\dots, \mU_N^\star) )\nonumber\\
&\!\!\!\!\!\!\!\!& + \bigg(1 - \bigg(\frac{1+e^{-2\sigma^2}}{2}\bigg)^N\bigg) (\|\mH\| - g(\mU_1^\star,\dots, \mU_N^\star))\nonumber\\
&\!\!\!\!\!\!\!\!& + 2\|\mH\|\sigma\sqrt{2N\log(2/\delta) }.
\end{eqnarray}

\section{Proof of \Cref{Theorem: robustness of noisy multi circuit}}
\label{Proof of error bound appendix}

We begin by decomposing the error as
\begin{eqnarray}
\label{noisy objective expansion}
&\!\!\!\!\!\!\!\!&\wh{g}(\mU_1^{t+1},\ldots,\mU_N^{t+1}) - g(\mU_1^\star,\ldots,\mU_N^\star)\nonumber\\
&\!\!\!\! = \!\!\!\!&\wh{g}(\mU_1^{t+1},\ldots,\mU_N^{t+1}) - g(\mU_1^{t+1},\ldots,\mU_N^{t+1}) + g(\mU_1^{t+1},\ldots,\mU_N^{t+1}) - g(\mU_1^\star,\ldots,\mU_N^\star).
\end{eqnarray}
The first term corresponds to the estimation error induced by measurement noise, while the second term captures the optimization error.
To control the optimization error, we invoke \Cref{Theorem: convergence analysis of multi circuit}.
In particular, when the initialization satisfies $g(\mU_1^0,\dots,\mU_N^0) - g(\mU_1^\star,\dots,\mU_N^\star)\leq \frac{\Delta_1}{2}$, and the step size is chosen as $\mu \leq \frac{1}{(8N+1)\|\mH\|}$, the Riemannian gradient descent updates in \eqref{iteration of RGD on the Stiefel N layer} guarantee that
\begin{eqnarray}
\label{descent direction of multi layer quantum circuit final conclusion1 noisy appendix}
 g(\mU_1^{t+1},\ldots,\mU_N^{t+1})- g(\mU_1^\star,\dots,\mU_N^\star)\leq\bigg(1 -\frac{\Delta_1^2\mu}{32\|\mH\|}\bigg)^{t+1}(g(\mU_1^0,\dots,\mU_N^0) - g(\mU_1^\star,\dots,\mU_N^\star) ).
\end{eqnarray}

Next, we bound the estimation error $\wh{g}(\mU_1^{t+1},\ldots,\mU_N^{t+1}) - g(\mU_1^{t+1},\ldots,\mU_N^{t+1})$. To this end, for each Pauli term $\mP_k$, we define i.i.d.\ random variables $X_{k,m} \in \{-1,1\}$ for $m=1,\dots,M$, such that $\E[ X_{k,m}] = p_k^+ - p_k^-$, where $p_k^+$ and $p_k^-$ denote the probabilities of obtaining measurement outcomes $+1$ and $-1$, respectively. Then, the empirical estimate satisfies $\hat{p}_k^+ - \hat{p}_k^- = \frac{\sum_{m=1}^MX_{k,m}}{M}$. Moreover, each $X_{k,m}$ is bounded as $-1 \le X_{k,m} \le 1$. Based on the Hoeffding's inequality in \cite{hoeffding1963probability}, we have
\begin{eqnarray}
\label{concentration inequality of the first term in the noisy setting}
&\!\!\!\!\!\!\!\!& \P{ \wh{g}(\mU_1^{t+1},\ldots,\mU_N^{t+1}) - g(\mU_1^{t+1},\ldots,\mU_N^{t+1}) >t }\nonumber\\
&\!\!\!\! = \!\!\!\!&  \P{  \sum_{k=1}^{L}\frac{\alpha_k\sum_{m=1}^{M}X_{k,m}}{M} -  \sum_{k=1}^{L}\alpha_k( p_k^+ - p_k^-) >t  }\leq e^{-\frac{Mt^2}{2\sum_{k=1}^L \alpha_k^2}}.
\end{eqnarray}
Taking $t =  \sqrt{\frac{2\sum_{k=1}^L \alpha_k^2 \log(1/\gamma)}{M}}$, we obtain that, with probability at least $1-\gamma$,
\begin{eqnarray}
\label{concentration inequality of the first term in the noisy setting final bound}
\wh{g}(\mU_1^{t+1},\ldots,\mU_N^{t+1}) - g(\mU_1^{t+1},\ldots,\mU_N^{t+1}) \leq \sqrt{\frac{2\sum_{k=1}^L \alpha_k^2 \log(1/\gamma)}{M}}.
\end{eqnarray}

Finally, combining \eqref{descent direction of multi layer quantum circuit final conclusion1 noisy appendix} and \eqref{concentration inequality of the first term in the noisy setting final bound}, we obtain that, with probability at least $1-\gamma$,
\begin{eqnarray}
\label{noisy objective expansion final version appendix}
&\!\!\!\!\!\!\!\!&\wh{g}(\mU_1^{t+1},\ldots,\mU_N^{t+1}) - g(\mU_1^\star,\ldots,\mU_N^\star)\nonumber\\
&\!\!\!\! \leq \!\!\!\!& \sqrt{\frac{2\sum_{k=1}^L \alpha_k^2 \log(1/\gamma)}{M}} + \bigg(1 -\frac{\Delta_1^2\mu}{32\|\mH\|}\bigg)^{t+1}(g(\mU_1^0,\dots,\mU_N^0) - g(\mU_1^\star,\dots,\mU_N^\star) ).
\end{eqnarray}

\section{Proof of \Cref{lemma: robustness of noisy multi circuit}}
\label{Proof of optimal allocation}

Following the same analysis of \eqref{concentration inequality of the first term in the noisy setting}, we can derive
\begin{eqnarray}
\label{concentration inequality of the first term in the noisy setting different M}
&\!\!\!\!\!\!\!\!& \P{ \wh{g}(\mU_1^{t+1},\ldots,\mU_N^{t+1}) - g(\mU_1^{t+1},\ldots,\mU_N^{t+1}) >t }\nonumber\\
&\!\!\!\! = \!\!\!\!&  \P{  \sum_{k=1}^{L}\frac{\alpha_k\sum_{m=1}^{M_k}X_{k,m}}{M_k} -  \sum_{k=1}^{L}\alpha_k( p_k^+ - p_k^-) >t  }\leq e^{-\frac{t^2}{2\sum_{k=1}^L \alpha_k^2/M_k}}.
\end{eqnarray}
Taking $t =  \sqrt{2\sum_{k=1}^L \frac{\alpha_k^2}{M_k} \log(1/\gamma)}$, we obtain that, with probability at least $1-\gamma$,
\begin{eqnarray}
\label{concentration inequality of the first term in the noisy setting final bound different M}
\wh{g}(\mU_1^{t+1},\ldots,\mU_N^{t+1}) - g(\mU_1^{t+1},\ldots,\mU_N^{t+1}) \leq \sqrt{2\sum_{k=1}^L \frac{\alpha_k^2}{M_k} \log(1/\gamma)}.
\end{eqnarray}
Finally, combining \eqref{descent direction of multi layer quantum circuit final conclusion1 noisy appendix} and \eqref{concentration inequality of the first term in the noisy setting final bound different M}, we obtain that, with probability at least $1-\gamma$,
\begin{eqnarray}
\label{noisy objective expansion final version appendix different M}
&\!\!\!\!\!\!\!\!&\wh{g}(\mU_1^{t+1},\ldots,\mU_N^{t+1}) - g(\mU_1^\star,\ldots,\mU_N^\star)\nonumber\\
&\!\!\!\! \leq \!\!\!\!& \sqrt{2\sum_{k=1}^L \frac{\alpha_k^2}{M_k} \log(1/\gamma)} + \bigg(1 -\frac{\Delta_1^2\mu}{32\|\mH\|}\bigg)^{t+1}(g(\mU_1^0,\dots,\mU_N^0) - g(\mU_1^\star,\dots,\mU_N^\star) ).
\end{eqnarray}

\section{Proof of Eq.~\eqref{closed-form of constrained optimization main paper}}
\label{Proof of optimal allocation closed form}

By introducing a Lagrange multiplier $\lambda$, the constrained optimization problem in \eqref{minimization of total error under constraints} can be reformulated via the following Lagrangian:
\begin{eqnarray}
\label{the constrained optimization Lagrangian}
\calL(M_1,\dots,M_L,\lambda) = \sum_{k=1}^L \frac{\alpha_k^2}{M_k} + \lambda \Big(\sum_{k=1}^L M_k - M_{\text{tot}}\Big).
\end{eqnarray}
Taking the derivative of the Lagrangian with respect to $M_k$ yields $\frac{\partial \mathcal{L}}{\partial M_k} = - \frac{\alpha_k^2}{M_k^2} + \lambda$. Setting this derivative to zero, we obtain $\frac{\alpha_k^2}{M_k^2} = \lambda$. Thus, the optimal number of measurements satisfies $\wh{M}_k = (\frac{1}{\lambda})^{\frac{1}{2}}|\alpha_k|$. Using the budget constraint $\sum_{k=1}^L \wh{M}_k = \sum_{k=1}^L (\frac{1}{\lambda})^{\frac{1}{2}}|\alpha_k| = M_{\text{tot}}$, it follows that $(\frac{1}{\lambda})^{\frac{1}{2}} = \frac{M_{\text{tot}}}{\sum_{j=1}^L |\alpha_j|}$, and hence the optimal allocation admits the closed-form expression
\begin{eqnarray}
\label{closed-form of constrained optimization}
\wh{M}_k = \frac{|\alpha_k|}{\sum_{j=1}^L |\alpha_j|} M_{\text{tot}}, \ \ k=1,\dots,L.
\end{eqnarray}

\section{Auxiliary Material}
\label{Auxiliary Material}

% ---------------------------------------------------------------
% Auxiliary lemma: retraction bound via projection argument
% ---------------------------------------------------------------
\begin{lemma}
\label{lemma:retraction_bound}
Let $\mU \in \calU(D)$ and $\vxi \in \text{T}_{\mU}\calU(D):=\{\mA\in\C^{D\times D}: \mA^\dagger\mU+\mU^\dagger \mA={\bm 0} \}$. The polar decomposition retraction $\text{Retr}_{\mU}(\vxi) = (\mU+\vxi)\bigl((\mU+\vxi)^\dagger(\mU+\vxi)\bigr)^{-1/2}$ satisfies
\begin{equation}
    \|\text{Retr}_{\mU}(\vxi) - (\mU+\vxi)\|_F \leq \frac{1}{2}\|\vxi\|_F^2.
\end{equation}
\end{lemma}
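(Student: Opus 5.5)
Our plan is to compute everything explicitly through the spectral structure of $(\mU+\vxi)^\dagger(\mU+\vxi)$, using the tangent-space condition to remove the cross terms. Write $\vxi=\mOmega\mU$ with $\mOmega^\dagger=-\mOmega$; equivalently $\mU^\dagger\vxi+\vxi^\dagger\mU={\bm 0}$. Then
\begin{equation*}
(\mU+\vxi)^\dagger(\mU+\vxi)=\mId+\mU^\dagger\vxi+\vxi^\dagger\mU+\vxi^\dagger\vxi=\mId+\vxi^\dagger\vxi .
\end{equation*}
We set $\mB:=\vxi^\dagger\vxi\succeq 0$, which is Hermitian positive semidefinite. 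Hence $\text{Retr}_{\mU}(\vxi)=(\mU+\vxi)(\mId+\mB)^{-1/2}$, and therefore
\begin{equation*}
\text{Retr}_{\mU}(\vxi)-(\mU+\vxi)=(\mU+\vxi)\big((\mId+\mB)^{-1/2}-\mId\big).
\end{equation*}

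Next we diagonalize $\mB=\mV\mathrm{diag}(b_j)\mV^\dagger$ with $b_j\ge 0$. The Frobenius norm can be evaluated exactly:
\begin{equation*}
\|(\mU+\vxi)((\mId+\mB)^{-1/2}-\mId)\|_F^2=\trace\big(((\mId+\mB)^{-1/2}-\mId)(\mId+\mB)((\mId+\mB)^{-1/2}-\mId)\big),
\end{equation*}
where we used $(\mU+\vxi)^\dagger(\mU+\vxi)=\mId+\mB$. This equals $\sum_j (1+b_j)\big((1+b_j)^{-1/2}-1\big)^2=\sum_j\big(1-\sqrt{1+b_j}\big)^2$.

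Then we use the scalar inequality $\sqrt{1+b}-1\le b/2$ for $b\ge0$. This gives $\sum_j(\sqrt{1+b_j}-1)^2\le \frac14\sum_j b_j^2=\frac14\|\mB\|_F^2$. Finally, $\|\mB\|_F=\|\vxi^\dagger\vxi\|_F\le\|\vxi\|\,\|\vxi\|_F\le\|\vxi\|_F^2$. Taking square roots yields $\|\text{Retr}_{\mU}(\vxi)-(\mU+\vxi)\|_F\le\frac12\|\vxi\|_F^2$.

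The main point requiring care is the cancellation $\mU^\dagger\vxi+\vxi^\dagger\mU={\bm 0}$, which is exactly the tangent-space definition stated in the lemma. Without it the Gram matrix would carry first-order terms and the bound would fail. The key device is to avoid a submultiplicative bound on $(\mU+\vxi)$ separately and instead absorb it into the exact trace identity, which yields the sharp scalar reduction. Everything else is elementary.
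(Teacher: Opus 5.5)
Your proof is correct and follows essentially the same route as the paper. The shared steps are: the tangent condition gives $(\mU+\vxi)^\dagger(\mU+\vxi)=\mId+\vxi^\dagger\vxi$, the scalar bound $\sqrt{1+b}-1\le b/2$ is applied eigenvalue by eigenvalue, and $\|\vxi^\dagger\vxi\|_F\le\|\vxi\|_F^2$ finishes the argument. The only cosmetic difference is that you evaluate the Frobenius norm through a trace identity in the eigenbasis of $\vxi^\dagger\vxi$, whereas the paper writes $\mU+\vxi=\mV\mSigma\mW^\dagger$ and reduces the error to $\|\mId-\mSigma\|_F$; these are the same computation.
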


\begin{proof}
Since $\vxi \in \text{T}_{\mU}\calU(D)$, we have $(\mU+\vxi)^\dagger(\mU+\vxi) = \mId + \vxi^\dagger\vxi$. Let $\mU + \vxi = \mV\mSigma\mW^\dagger$ be the SVD of $\mU+\vxi$, so that $\text{Retr}_{\mU}(\vxi) = \mV\mW^\dagger$.
Then
\begin{equation}
    \|\text{Retr}_{\mU}(\vxi) - (\mU+\vxi)\|_F
    = \|\mV\mW^\dagger - \mV\mSigma\mW^\dagger\|_F
    = \|\mId - \mSigma\|_F.
\end{equation}
Since $(\mU+\vxi)^\dagger(\mU+\vxi) = \mId + \vxi^\dagger\vxi$, the singular values $\sigma_k$ of $\mU+\vxi$ satisfy $\sigma_k^2 = 1 + \lambda_k$ where $\lambda_k \geq 0$ are eigenvalues of $\vxi^\dagger\vxi$.
Therefore $\sigma_k \geq 1$ and
\begin{equation}
    \sigma_k - 1
    = \frac{\sigma_k^2 - 1}{\sigma_k + 1}
    \leq \frac{\sigma_k^2 - 1}{2}
    = \frac{\lambda_k}{2}.
\end{equation}
Summing over all $k$:
\begin{equation}
    \|\mId - \mSigma\|_F^2
    = \sum_k(\sigma_k - 1)^2
    \leq \frac{1}{4}\sum_k\lambda_k^2
    = \frac{1}{4}\|\vxi^\dagger\vxi\|_F^2
    \leq \frac{1}{4}\|\vxi\|_F^4.
\end{equation}
Taking square roots gives $\|\mId - \mSigma\|_F \leq \frac{1}{2}\|\vxi\|_F^2$, which completes the proof.
\end{proof}

% ---------------------------------------------------------------
% Main descent lemma
% ---------------------------------------------------------------
\begin{lemma}
\label{lemma:Riemannian descent conclusion}
Let $f(\mU_1,\ldots,\mU_N) = \trace(\mH\cdot \mU_1\cdots \mU_N\vrho_0 \mU_N^\dagger\cdots \mU_1^\dagger)$ with $\mU_h \in \calU(D)$. Suppose $\text{grad}_{\mU_h}f$ is $C$-Lipschitz continuous, i.e., for all $(\mU_1,\ldots,\mU_N)$ and $(\mV_1,\ldots,\mV_N)$,
\begin{equation}
    \|\text{grad}_{\mU_h}f(\mV_1,\ldots,\mV_N)
      - \text{grad}_{\mU_h}f(\mU_1,\ldots,\mU_N)\|_F
    \leq C\sqrt{\sum_{h=1}^{N}\|\mV_h - \mU_h\|_F^2}.
\end{equation}
Let $\vxi_h \in \text{T}_{\mU}\calU(D)$ and $\mW_h = \text{Retr}_{\mU_h}(\vxi_h)$ for each $h$. Then, we have
\begin{equation}
\label{eq:descent}
    f(\mW_1,\ldots,\mW_N)
    \leq f(\mU_1,\ldots,\mU_N)
    + \sum_{h=1}^N \langle \text{grad}_{\mU_h}f(\mU_1,\ldots,\mU_N),\, \vxi_h \rangle
    + \bigg(C\sqrt{N} + \frac{\|\mH\|}{2}\bigg)
      \sum_{h=1}^{N}\|\vxi_h\|_F^2.
\end{equation}
\end{lemma}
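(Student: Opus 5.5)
Write $\mW_h = \mU_h + \vxi_h + \mathbf{E}_h$, where $\mathbf{E}_h := \text{Retr}_{\mU_h}(\vxi_h) - (\mU_h+\vxi_h)$. By \Cref{lemma:retraction_bound}, $\|\mathbf{E}_h\|_F \le \frac12\|\vxi_h\|_F^2$. The plan has two parts. First, compare $f$ at $\mW=(\mW_1,\dots,\mW_N)$ with $f$ at the Euclidean step $\mU+\vxi := (\mU_1+\vxi_1,\dots,\mU_N+\vxi_N)$. Second, control $f(\mU+\vxi)-f(\mU)$ by a first-order expansion with a Lipschitz remainder.

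For the first-order part, I will view $f$ as the polynomial $\trace(\mH \mX_1\cdots\mX_N\vrho_0\mX_N^\dagger\cdots\mX_1^\dagger)$ on all of $(\C^{D\times D})^N$. Along the unitary path $\gamma(\tau) = (\text{Retr}_{\mU_h}(\tau\vxi_h))_h$, which stays on $\calU(D)^N$, the fundamental theorem of calculus gives
\[
f(\mW)-f(\mU) = \int_0^1 \sum_{h=1}^N \Big\langle \nabla_{\mU_h} f(\gamma(\tau)),\, \dot\gamma_h(\tau)\Big\rangle\,d\tau .
\]
Since $\dot\gamma_h(\tau)$ is tangent at $\gamma_h(\tau)$, the Euclidean gradient may be replaced by the Riemannian gradient at $\gamma(\tau)$. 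I then split this as $\langle \text{grad}_{\mU_h} f(\mU), \dot\gamma_h\rangle$ plus a difference term.

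The difference term is bounded by the assumed $C$-Lipschitz property, combined with $\|\gamma_h(\tau)-\mU_h\|_F \le \tau\|\vxi_h\|_F + \frac{\tau^2}{2}\|\vxi_h\|_F^2$. I also use $\dot\gamma_h(0)=\vxi_h$ and the fact that $\dot\gamma_h(\tau)-\vxi_h$ is second order. Cauchy--Schwarz over $h$ converts $\sqrt{\sum_h\|\cdot\|_F^2}\,\sum_h\|\vxi_h\|_F$ into $\sqrt N\sum_h\|\vxi_h\|_F^2$, which produces the $C\sqrt N$ coefficient.

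An alternative that avoids differentiating the retraction is to use the straight line $\mU+\tau\vxi$ together with the Euclidean Lipschitz bound; this is where the Euclidean gradient would enter. I would try the straight-line route first, because $\nabla_{\mU_h} f$ and $\text{grad}_{\mU_h} f$ coincide when paired with a tangent direction at the base point. The delicate point is that, off the manifold, the paper's Lipschitz hypothesis is only stated for the Riemannian gradient on unitaries. I therefore plan to instead expand the product $\prod_h(\mU_h+\vxi_h+\mathbf{E}_h)$ directly.

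With the direct expansion, the linear terms in the $\vxi_h$ give exactly $\sum_h\langle\nabla_{\mU_h}f(\mU),\vxi_h\rangle = \sum_h\langle\text{grad}_{\mU_h}f(\mU),\vxi_h\rangle$, using tangency. The terms linear in $\mathbf{E}_h$ contribute at most $\sum_h \|\nabla_{\mU_h}f(\mU)\|\,\|\mathbf{E}_h\|_F$. Because $\|\nabla_{\mU_h} f\|\le\|\mH\|$, this is at most $\frac{\|\mH\|}{2}\sum_h\|\vxi_h\|_F^2$, which is the second constant.

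All higher-order cross terms have to be absorbed into $C\sqrt N\sum_h\|\vxi_h\|_F^2$. For these I would write the remainder as $\int_0^1\sum_h\langle \nabla_{\mU_h}f(\mU+\tau\boldsymbol\Delta)-\nabla_{\mU_h}f(\mU),\Delta_h\rangle d\tau$ with $\Delta_h = \mW_h-\mU_h$. This remainder is controlled by a Lipschitz constant of the same form as \eqref{Lipschitz continuous of multi layer quantum circuit first term}, and then $\|\Delta_h\|_F\lesssim\|\vxi_h\|_F$ is used.

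The main obstacle is this last step. The constant has to come out as exactly $C\sqrt N$ rather than $C\sqrt N$ times a factor from $\|\Delta_h\|_F\le \|\vxi_h\|_F+\frac12\|\vxi_h\|_F^2$, and this requires a small-step assumption or the bound $\|\Delta_h\|_F\le\|\vxi_h\|_F$. That bound holds because the polar retraction is a projection onto the unitary group, which is nonexpansive toward $\mU_h$ in the sense $\|\text{Retr}_{\mU_h}(\vxi_h)-\mU_h\|_F\le\|\vxi_h\|_F$. I would prove it from the SVD form in \Cref{lemma:retraction_bound}.
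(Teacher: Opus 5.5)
The step that fails is your claim that the linear part of the expansion is exactly $\sum_h\langle\nabla_{\mU_h}f(\mU),\vxi_h\rangle$. The chain rule you write along $\gamma$ has the same problem.

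The perturbation $\vxi_h$ enters through both $\mU_h$ and $\mU_h^\dagger$, and the two first-order contributions are complex conjugates of each other. So with the paper's $\nabla_{\mU_h}f=(\mU_1\cdots\mU_{h-1})^\dagger\mH\mU_1\cdots\mU_N\vrho_0(\mU_{h+1}\cdots\mU_N)^\dagger$, the linear term is $2\Re{\trace((\nabla_{\mU_h}f)^\dagger\vxi_h)}$. In other words, the paper's $\nabla$ is half of the real gradient.

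Consequently the inequality is false for every admissible $C$, when read with the Frobenius inner product $\Re{\trace(\mA^\dagger\mB)}$ that the paper uses elsewhere. Take $N=1$, $D=2$, $\mH=\begin{pmatrix}0&1\\1&0\end{pmatrix}$, $|\vphi_0\rangle=(1,0)^\top$ and $\mU=\mId$. Then $\text{grad}_{\mU}f(\mId)=\frac12\begin{pmatrix}0&-1\\1&0\end{pmatrix}$. For $\vxi=\epsilon\,\text{grad}_{\mU}f(\mId)$ the polar retraction is the rotation by $\phi=\arctan(\epsilon/2)$, so $f(\mW)=\sin 2\phi=\epsilon/(1+\epsilon^2/4)$. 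The right-hand side equals $\epsilon/2+(C+\frac12)\epsilon^2/2$, which is smaller for all sufficiently small $\epsilon>0$.

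No treatment of the remainder can rescue this. The paper's own proof makes the same slip in the fundamental-theorem-of-calculus identity for Term II. What is true is
\[
f(\mW_1,\ldots,\mW_N)\le f(\mU_1,\ldots,\mU_N)+2\sum_{h=1}^N\langle\text{grad}_{\mU_h}f(\mU_1,\ldots,\mU_N),\vxi_h\rangle+\big(C\sqrt N+\|\mH\|\big)\sum_{h=1}^N\|\vxi_h\|_F^2 .
\]
For $\vxi_h=-\mu\,\text{grad}_{\mU_h}f$ this gives a decrease of $\mu(2-L\mu)\sum_h\|\text{grad}_{\mU_h}f\|_F^2$ with $L=C\sqrt N+\|\mH\|$. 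With the paper's $C$ and step sizes one has $L\mu\le 1$, so the decrease is still at least $\mu\sum_h\|\text{grad}_{\mU_h}f\|_F^2$, and the convergence theorems survive.

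Separately, the route you finally commit to never uses the hypothesis. Your remainder lives on the chord $(1-\tau)\mU_h+\tau\mW_h$, where the $C$-Lipschitz assumption, stated only on unitary tuples, says nothing. It can still be bounded, because these points are convex combinations of unitaries and hence contractions, so the estimate \eqref{Lipschitz continuous of multi layer quantum circuit first term} goes through. But that yields an explicit $O(N\|\mH\|)$ coefficient, not $C\sqrt N$ for an arbitrary admissible $C$. So the obstacle you identify is not really $\|\Delta_h\|_F$ versus $\|\vxi_h\|_F$; it is that this route cannot produce $C$ at all.

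Your first plan, along $\gamma_h(\tau)=\text{Retr}_{\mU_h}(\tau\vxi_h)$, is the one that uses the hypothesis legitimately. It is also cleaner than the paper's argument, which applies the hypothesis along the non-unitary segment $\mU+t\vxi$ and bounds Term I via $\|\nabla f\|\le\|\mH\|$ at non-unitary points. To carry it out:

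If $i\omega_k$ are the eigenvalues of the skew-Hermitian matrix $\vxi_h\mU_h^\dagger$, then $\gamma_h(\tau)\mU_h^\dagger$ has eigenvalues $e^{i\arctan(\tau\omega_k)}$ in the same eigenbasis. Hence $\|\gamma_h(\tau)-\mU_h\|_F\le\tau\|\vxi_h\|_F$ and $\|\dot\gamma_h(\tau)\|_F\le\|\vxi_h\|_F$. This computation, not a general nonexpansiveness of projections onto the nonconvex set $\calU(D)$, is the right justification.

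Integrating the frozen term $\langle\text{grad}_{\mU_h}f(\mU),\dot\gamma_h(\tau)\rangle$ gives exactly $\langle\text{grad}_{\mU_h}f(\mU),\vxi_h+\mathbf{E}_h\rangle$, so no estimate of $\dot\gamma_h-\vxi_h$ is needed. The $\mathbf{E}_h$ part is at most $\frac{\|\mH\|}{2}\|\vxi_h\|_F^2$, by \Cref{lemma:retraction_bound} and $\|\text{grad}_{\mU_h}f(\mU)\|_F\le\|\nabla_{\mU_h}f(\mU)\|_F\le\|\mH\|$ (the Euclidean gradient has rank one). The Lipschitz part is at most $\frac{C}{2}\big(\sum_h\|\vxi_h\|_F^2\big)^{1/2}\sum_h\|\vxi_h\|_F\le\frac{C\sqrt N}{2}\sum_h\|\vxi_h\|_F^2$. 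Restoring the factor $2$ from the chain rule gives the displayed inequality.
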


\begin{proof}
Let $\mW_h = \text{Retr}_{\mU_h}(\vxi_h)$ for each $h$.
We decompose:
\begin{align}
    &f(\mW_1,\ldots,\mW_N) - f(\mU_1,\ldots,\mU_N)
     - \sum_{h=1}^N\langle \text{grad}_{\mU_h}f(\mU_1,\ldots,\mU_N),\, \vxi_h \rangle
    \nonumber\\
    &= \underbrace{f(\mW_1,\ldots,\mW_N)
       - f(\mU_1+\vxi_1,\ldots,\mU_N+\vxi_N)}_{\mathrm{Term\ I}}
    \nonumber\\
    &\quad
     + \underbrace{f(\mU_1+\vxi_1,\ldots,\mU_N+\vxi_N)
       - f(\mU_1,\ldots,\mU_N)
       - \sum_{h=1}^N\langle \text{grad}_{\mU_h}f(\mU_1,\ldots,\mU_N),\,\vxi_h
       \rangle}_{\mathrm{Term\ II}}.
\end{align}

\textbf{Bounding Term I.}
By the mean value inequality and Lemma~\ref{lemma:retraction_bound}, we have
\begin{align}
\label{eq:descent first term}
    \mathrm{Term\ I}
    &= \sum_{h=1}^{N}
\big[
f(\mW_1,\ldots,\mW_h,\mU_{h+1}+\vxi_{h+1},\ldots,\mU_N+\vxi_N)
\nonumber\\
&- f(\mW_1,\ldots,\mW_{h-1},\mU_h+\vxi_h,\mU_{h+1}+\vxi_{h+1},\ldots,\mU_N+\vxi_N)
\big] \nonumber\\
    &\leq \sum_{h=1}^N
      f_{\max}
      \cdot \|\mW_h - (\mU_h+\vxi_h)\|_F
    \nonumber\\
    &\leq  \frac{\|\mH\|}{2}\sum_{h=1}^N\|\vxi_h\|_F^2,
\end{align}
where the first inequality uses $\|\nabla_{\mU_h}f(\mU_1,\ldots,\mU_N)\|_F = \|(\mU_1\cdots \mU_{h-1})^\dagger\mH\cdot \mU_1\cdots \mU_N\vrho_0(\mU_{h+1}\cdots \mU_{N})^\dagger\|_F \leq \|\vrho_0\|_F\|\mH\| = \|\mH\|=: f_{\max}$ since each $\mU_h$ is unitary and $\|\vrho_0\|_F=1$. Moreover, $\|\mW_h-(\mU_h+\vxi_h)\|_F \leq \frac{1}{2}\|\vxi_h\|_F^2$ follows from Lemma~\ref{lemma:retraction_bound}.

\textbf{Bounding Term II.}
By the fundamental theorem of calculus and telescoping, we can derive
\begin{align}
\label{eq:descent second term}
    \mathrm{Term\ II}
    &= \sum_{h=1}^N \int_0^1
    \Bigl\langle
        \text{grad}_{\mU_h}f(\mU_1+\vxi_1,\ldots,\mU_h+t\vxi_h,\ldots,\mU_N)
        - \text{grad}_{\mU_h}f(\mU_1,\ldots,\mU_N),\,\vxi_h
    \Bigr\rangle\, \mathrm{d}t
    \nonumber\\
    &\leq \sum_{h=1}^N \int_0^1
    \bigl\|\text{grad}_{\mU_h}f(\mU_1+\vxi_1,\ldots,\mU_h+t\vxi_h,\ldots,\mU_N)
    - \text{grad}_{\mU_h}f(\mU_1,\ldots,\mU_N)\bigr\|_F
    \|\vxi_h\|_F\, \mathrm{d}t
    \nonumber\\
    &\leq \sum_{h=1}^N \|\vxi_h\|_F \int_0^1
    C\|(\vxi_1,\ldots,\vxi_{h-1},t\vxi_h,\mathbf{0},\ldots,\mathbf{0})\|\,
    \mathrm{d}t
    \nonumber\\
    &\leq \sum_{h=1}^N \|\vxi_h\|_F
      \cdot C\sqrt{\sum_{j=1}^N\|\vxi_j\|_F^2}
    \nonumber\\
    &\leq C\sqrt{N}\sum_{h=1}^{N}\|\vxi_h\|_F^2,
\end{align}
where the third inequality uses
$\|(\vxi_1,\ldots,\vxi_{h-1},t\vxi_h,\mathbf{0},\ldots,\mathbf{0})\| \leq \sqrt{\sum_{j=1}^N\|\vxi_j\|_F^2}$, and the last line follows from the Cauchy--Schwarz inequality:
$\sum_{h=1}^N\|\vxi_h\|_F \leq \sqrt{N}\sqrt{\sum_{h=1}^N\|\vxi_h\|_F^2}$.

\textbf{Conclusion.} Combining Term I in \eqref{eq:descent first term} and Term II in \eqref{eq:descent second term}, we have
\begin{equation}
    f(\mW_1,\ldots,\mW_N)
    \leq f(\mU_1,\ldots,\mU_N)
    + \sum_{h=1}^N \langle \text{grad}_{\mU_h}f(\mU_1,\ldots,\mU_N),\, \vxi_h \rangle
    + \bigg(C\sqrt{N} + \frac{\|\mH\|}{2}\bigg)
      \sum_{h=1}^{N}\|\vxi_h\|_F^2.
\end{equation}
\end{proof}

%{
%%\bibliographystyle{alpha}
%\bibliographystyle{unsrt}
%\bibliography{reference1}
%}

\end{document}